\documentclass[11pt]{article}
\usepackage[a4paper]{geometry}
\usepackage{amsmath,amssymb,amsthm,mathtools}
\usepackage{graphicx,booktabs,enumitem,microtype,needspace,float}
\usepackage{algorithm,algorithmic}
\usepackage{tikz}
\usetikzlibrary{arrows.meta,positioning,calc,shapes.geometric}
\usepackage[font=small,labelfont=bf]{caption}
\usepackage[hidelinks]{hyperref}
\hypersetup{pdftitle={Simple Strategy-Iteration with Dantzig's Rule is not Strongly Polynomial on Turn-Based Deterministic Games},pdfauthor={Sanyou Mei},pdfsubject={Research paper record copy}}
\allowdisplaybreaks[3]
\setlist[enumerate]{leftmargin=*,itemsep=3pt,topsep=5pt}
\newtheorem{theorem}{Theorem}
\newtheorem{corollary}{Corollary}
\newtheorem{lemma}{Lemma}
\newtheorem{proposition}{Proposition}
\newtheorem{definition}{Definition}
\newtheorem{remark}{Remark}

\newcommand{\rr}{\mathbb R}
\newcommand{\mcS}{\mathcal S}
\newcommand{\mcA}{\mathcal A}
\newcommand{\mcG}{\mathcal G}
\newcommand{\bfc}{\mathbf c}
\newcommand{\bfP}{\mathbf P}
\newcommand{\bfp}{\mathbf p}
\newcommand{\bfv}{\mathbf v}
\newcommand{\bfe}{\mathbf e}
\newcommand{\bfb}{\mathbf b}

\newcommand{\bfu}{\mathbf u}
\newcommand{\bfzero}{\mathbf 0}
\DeclareMathOperator*{\Argmin}{Argmin}
\title{Simple Strategy-Iteration with Dantzig's Rule is not Strongly Polynomial on Turn-Based Deterministic Games}
\author{Sanyou Mei\thanks{Department of Industrial Engineering and Decision Analytics, Hong Kong University of Science and Technology, Hong Kong, China (email: \texttt{symei@ust.hk}).}}
\date{October 19, 2026}

\begin{document}
\maketitle

\begin{abstract}
We study turn-based deterministic zero-sum games with discounted and long-run average costs. For discounted costs, we establish exponential and superpolynomial iteration lower bounds for the simple strategy-iteration method with Dantzig's rule under unrestricted costs and unit weights, respectively. To the best of our knowledge, we establish for the first time that the method is not strongly polynomial, and this remains true even when all action costs belong to $\{-1,0,1\}$. For long-run average costs, we extend the Dantzig counterpart in~\cite{zhongye2026} to the two-player setting and prove that it terminates at an optimal strategy profile. However, a similar counterexample construction establishes an exponential iteration lower bound for termination with unrestricted costs. This construction does not extend to unit weights, for which we prove that the algorithm admits a strongly polynomial implementation.
\end{abstract}

\noindent\textbf{Keywords:} Simple strategy iteration; Dantzig's rule; turn-based game; Markov decision process; strongly polynomial complexity.

\medskip
\noindent\textbf{Mathematics Subject Classification:} 90C40, 90C05, 68Q25, 90C39.

\section{Introduction}\label{sec:intro}

A turn-based deterministic zero-sum game (TBDG) is a special case of a turn-based stochastic zero-sum game (TBSG), in which every action has a deterministic successor. These games belong to the stochastic-game framework introduced by Shapley~\cite{shapley1953}. The states are partitioned between two players. At each state, the controlling player chooses an action, incurs its immediate cost, and moves to the corresponding successor. Player~1 minimizes the total discounted cost, while Player~2 maximizes it. When all states are controlled by one player, the game reduces to a deterministic Markov decision process (MDP).

Discounted MDPs can be solved in polynomial time through linear programming. Strongly polynomial bounds have also been established under additional assumptions. For a fixed discount factor, Ye~\cite{ye2005,ye2011} proved that an interior-point method, the simplex method with Dantzig's rule, and Howard's policy-iteration method are strongly polynomial. Scherrer~\cite{scherrer2016} sharpened the policy-iteration bounds and analyzed the effects of restrictions on transient and recurrent behavior. For deterministic MDPs, Post and Ye~\cite{postye2015} established a strongly polynomial bound for the simplex method with Dantzig's rule that is independent of the discount factor, and Hansen, Kaplan, and Zwick~\cite{hansenkaplanzwick2014} obtained a sharper bound. Thus, deterministic MDPs can be solved in strongly polynomial time even when the discount factor is part of the input.

For general TBDGs, however, the existence of a strongly polynomial algorithm remains open. Considerable work has been devoted to developing efficient algorithms for TBDGs, including value iteration, interior-point methods based on linear complementarity formulations, and strategy iteration. Kozachinskiy~\cite{kozachinskiy2021} developed a polyhedral value-iteration algorithm with a $2^{O(n)}$ arithmetic-operation bound independent of the discount factor. Hansen and Ibsen-Jensen~\cite{hansenibsenjensen2013} studied interior-point methods through a reduction to $P$-matrix linear complementarity problems, obtaining complexity bounds that depend on the discount factor. For a fixed discount factor, Hansen, Miltersen, and Zwick~\cite{hansenmiltersenzwick2013} proved strong polynomiality of strategy iteration for TBSGs. Jia, Wen, and Ye~\cite{jiawenye2020} analyzed simple strategy iteration with Dantzig's rule. At each iteration, this method changes one action of Player~1 with the most negative reduced cost and then computes an optimal counter strategy of Player~2. Its iteration bound depends on the discount factor and therefore does not establish strong polynomiality when that factor is part of the input.

Further results have been obtained for some special cases. For discounted TBDGs with unit weights, namely, costs in $\{-1,0,1\}$, Asadi et al.~\cite{asadietal2024} established a deterministic $n^{O(\sqrt n)}$ bound for strategy iteration, independently of the discount factor. For mean-payoff games, whose objective is the long-run average cost, Zwick and Paterson~\cite{zwickpaterson1996} and Brim et al.~\cite{brimetal2011} developed pseudopolynomial algorithms. For mean-payoff games with unit weights, these pseudopolynomial bounds yield strongly polynomial algorithms. Bj\"orklund and Vorobyov~\cite{bjorklundvorobyov2007} obtained a randomized strongly subexponential strategy-improvement algorithm for mean-payoff games. More recently, Mei, Sun, and Ye~\cite{meisunye2026} established a strongly polynomial bound for simple strategy iteration with Dantzig's rule on turn-based deterministic forward games. This subclass excludes directed cycles involving both players, while allowing cycles controlled entirely by either player. Together with the results for deterministic MDPs and fixed-discount games discussed above, this result makes simple strategy iteration with Dantzig's rule a particularly promising approach to strongly polynomial solvability of TBDGs. It is therefore natural to ask whether the method remains strongly polynomial when mixed-player cycles are allowed, particularly under the unit-weight restriction.

Alongside these algorithmic developments, considerable effort has been devoted to establishing lower bounds on the complexity of these methods. For MDPs, Hollanders, Delvenne, and Jungers~\cite{hollandersetal2012} proved an exponential lower bound for Howard's policy iteration with stochastic transitions and a discount factor supplied as part of the input. Zhong and Ye~\cite{zhongye2026} established exponential lower bounds for Howard's policy iteration on deterministic discounted MDPs. Disser and Mosis~\cite{dissermosis2023} obtained exponential lower bounds for Dantzig's, Bland's, and Largest Increase rules on MDPs, but their construction for Dantzig's rule uses stochastic transitions. Disser et al.~\cite{disseretal2026} also studied lower bounds for ranking-based pivot rules. For two-player games, Friedmann~\cite{friedmann2011} proved exponential lower bounds for locally and globally optimizing strategy-improvement rules on simple stochastic games and deterministic discounted games. Avis and Friedmann~\cite{avisfriedmann2017} established an exponential lower bound for Cunningham's cyclic single-switch rule on deterministic discounted games. However, these lower bounds concern different update rules or transition models and do not determine whether simple strategy iteration with Dantzig's rule and exact optimal counter strategies is strongly polynomial on TBDGs. This leaves a significant gap between the positive results for this method on restricted classes and our understanding of its complexity on general TBDGs.

In this paper, we construct counterexamples showing that simple strategy iteration with Dantzig's rule is not strongly polynomial on general discounted TBDGs, even with unit weights. We also establish an exponential iteration lower bound for termination of a Dantzig counterpart on mean-payoff TBDGs with unrestricted costs. The main idea is to make Player~1 create a mixed-player cycle and Player~2 cancel it through the optimal counter strategy at every iteration. By choosing the costs that determine state-value ordering separately from those that determine pivot selection, we force the method to follow a binary counter. This prevents the cycle-creation arguments used for deterministic MDPs (Post and Ye~\cite{postye2015}) and deterministic forward games (Mei, Sun, and Ye~\cite{meisunye2026}) from extending directly to general TBDGs. Our contributions are as follows.
\begin{itemize}
\item With unrestricted costs, we construct a TBDG with $n=4h+2$ states and $m=8h+2$ actions on which simple strategy-iteration with Dantzig's rule performs exactly $2^h-1$ iterations from the specified initial strategy. This establishes an exponential iteration lower bound of $2^{\Omega(n)}$.
\item With unit weights, we construct a TBDG with $n=\Theta(h^5)$ states and $m=n+4h$ actions that preserves the same $2^h-1$ iterations. The discount factor has $O(h^3\log h)$ binary encoding length. Hence the iteration lower bound is superpolynomial, namely, $2^{\Omega(n^{1/5})}$.
\item For TBDGs with average costs, we extend the Dantzig counterpart of Zhong and Ye~\cite{zhongye2026} and prove that it terminates at an optimal strategy profile. With unrestricted costs, we extend the counterexample for discounted costs to establish an exponential iteration lower bound for termination. With unit weights, it has a strongly polynomial complexity.
\end{itemize}
To the best of our knowledge, these are the first exponential and superpolynomial iteration lower bounds for this simple strategy-iteration method with Dantzig's rule on discounted TBDGs with unrestricted costs and unit weights, respectively.

The rest of the paper is organized as follows. Section~\ref{sec:prelim} introduces the notation and the simple strategy-iteration method. Section~\ref{sec:real} presents the construction with unrestricted costs and proves the exponential lower bound. Section~\ref{sec:ternary} extends the construction to unit weights. Section~\ref{sec:mean-payoff} studies a Dantzig counterpart for long-run average costs, establishes its convergence, and proves an exponential iteration lower bound with unrestricted costs and strongly polynomial bounds with unit weights. Section~\ref{sec:conclusion} concludes the paper.

\section{Preliminaries and the simple strategy-iteration method}\label{sec:prelim}

We use the following notation throughout the paper. Let $\rr^n$ denote the $n$-dimensional Euclidean space and $\mathbb Z_{\geq0}$ the set of nonnegative integers. For every positive integer $n$, let $[n]=\{1,\ldots,n\}$. For a finite set $S$, let $|S|$ denote its cardinality. Let $\bfe$, $\bfzero$, $\bfe_s$, and $I$ denote the all-ones vector, the all-zeros vector, the $s$th coordinate vector, and the identity matrix, respectively, with dimensions clear from the context. The vector inequalities are understood componentwise. For $t\in\rr$, let $[t]_-:=\min\{t,0\}$, and let $\operatorname{sgn}t$ be $-1$, $0$, or $1$ according as $t<0$, $t=0$, or $t>0$. For integers $j,k\geq0$, let $\binom jk$ denote the number of $k$-element subsets of a $j$-element set. For a polynomial $F$, let $F^{(k)}$ denote its $k$th derivative. To proceed, we follow the TBDG notation of \cite{meisunye2026}.
\begin{definition}[TBDG]\label{def:tbdg}
A turn-based deterministic zero-sum game is specified by $\mcG=(\mcS,\mcA,\bfc,\bfP,\gamma)$, where the following hold.
\begin{enumerate}[label=(\roman*)]
\item $\mcS=\mcS^1\cup\mcS^2=[n]$ is the state space, partitioned between Players~1 and~2.
\item $\mcA=\bigcup_{s\in\mcS}\mcA_s=[m]$ is the action space. The nonempty sets $\mcA_s$ are pairwise disjoint, and $\mcA_s$ contains the actions available at state $s$. Let $\mcA^i=\bigcup_{s\in\mcS^i}\mcA_s$ for $i=1,2$.
\item $\bfc\in\rr^m$ is the cost vector, with immediate cost $c_a$ for action $a\in\mcA$.
\item $\bfP\in\rr^{n\times m}$ has column $\bfp_a$ for each action $a$. Each $\bfp_a$ is a point-mass probability vector. Thus, if $a\in\mcA_s$ moves from $s$ to $t$, then $\bfp_a=\bfe_t$.
\item $\gamma\in(0,1)$ is the common discount factor. Player~1 minimizes the infinite-horizon discounted cost and Player~2 maximizes it.
\end{enumerate}
\end{definition}

The underlying directed graph of $\mcG$ is
\[
G=(\mcS,E),\qquad
E=\{(s,t)\in\mcS\times\mcS:\bfp_a=\bfe_t\text{ for some }a\in\mcA_s\}.
\]
For an action $a\in\mcA_s$ with $\bfp_a=\bfe_t$, the ordered pair $(s,t)$ is its associated directed edge. If the endpoints determine the action uniquely, as in our constructions, we identify $a$ with $(s,t)$ and write $c_{(s,t)}$ and $r_{(s,t)}^\pi$ for its cost and reduced cost, respectively. Actions with the same endpoints remain distinct elements of $\mcA$, so $m=|\mcA|$ need not equal $|E|$. Self-loops are allowed. A mixed-player cycle is a directed cycle containing states controlled by both players.

A strategy $\sigma$ of Player~1 chooses one action $\sigma_s\in\mcA_s$ at every $s\in\mcS^1$; a strategy $\tau$ of Player~2 is defined analogously on $\mcS^2$. A strategy profile is $\pi=(\sigma,\tau)$, with selected action $\pi_s$ at state $s$. Its induced graph is $G^\pi=(\mcS,\{(s,t)\in\mcS\times\mcS:\bfp_{\pi_s}=\bfe_t\})$, so every state has exactly one outgoing edge. State and action labels in a construction may be identified with $[n]$ and $[m]$ by any fixed enumeration.

For a profile $\pi$, let $P^\pi$ be the $n\times n$ matrix whose column $s$ is $\bfp_{\pi_s}$, and let $c_s^\pi=c_{\pi_s}$. Thus $P^\pi_{t,s}$ is the probability of moving from $s$ to $t$. The cost-to-go value vector $\bfv^\pi$ is the unique solution of
\begin{equation}\label{eq:value}
\bfv^\pi=\bfc^\pi+\gamma(P^\pi)^T\bfv^\pi,
\qquad
\bfv^\pi=(I-\gamma P^\pi)^{-T}\bfc^\pi.
\end{equation}
In particular, $v_s^\pi=c_{\pi_s}+\gamma\bfp_{\pi_s}^T\bfv^\pi$.

For an action $a\in\mcA_s$, its reduced cost under $\pi$ is
\begin{equation}\label{eq:reduced}
r_a^\pi=c_a+\gamma\bfp_a^T\bfv^\pi-v_s^\pi.
\end{equation}
For $a\in\mcA_s$ with $\bfp_a=\bfe_t$, this becomes $r_a^\pi=c_a+\gamma v_t^\pi-v_s^\pi$. Every selected action has reduced cost zero. A negative reduced cost is improving for Player~1, and a positive reduced cost is improving for Player~2. We also use the normalization $\bfb=(1-\gamma)\bfe$. Then $\bfb^T\bfv^\pi$ is the scaled total cost-to-go value. The individual values and reduced costs remain unscaled.

\begin{definition}[Optimal counter strategy]\label{def:response}
For a fixed strategy $\sigma$ of Player~1, an optimal counter strategy $\pi^2(\sigma)$ of Player~2 satisfies
\[
\bfv^{(\sigma,\tau)}\leq\bfv^{(\sigma,\pi^2(\sigma))}
\quad\text{for every strategy $\tau$ of Player~2}.
\]
Player~1's optimal counter strategy $\pi^1(\tau)$ is defined analogously by minimization. A profile is optimal if each player's strategy is an optimal counter strategy against the other player's strategy.
\end{definition}

An optimal strategy profile exists for every TBDG. As in \cite[Subsection~1.2]{meisunye2026}, its cost-to-go value vector $\bfv^*$ and corresponding optimal strategy profile $\pi^*$ satisfy
\[
\begin{aligned}
v_s^*&=\min_{a\in\mcA_s}\{c_a+\gamma\bfp_a^T\bfv^*\},\quad\pi_s^*\in\mathop{\rm argmin}_{a\in\mcA_s}\{c_a+\gamma\bfp_a^T\bfv^*\},\quad\forall s\in\mcS^1,\\
v_s^*&=\max_{a\in\mcA_s}\{c_a+\gamma\bfp_a^T\bfv^*\},\quad\pi_s^*\in\mathop{\rm argmax}_{a\in\mcA_s}\{c_a+\gamma\bfp_a^T\bfv^*\},\quad\forall s\in\mcS^2.
\end{aligned}
\]

Let $\nu^*=\bfb^T\bfv^*$. The following proposition recalls Proposition~1(vii) and Lemma~3 of \cite{meisunye2026}, including the corresponding optimality criterion. The proof is standard in the literature.

\begin{proposition}\label{prop:reduced-optimality}
Denote by $\nu^1$ the objective value $\bfb^T\bfv^{(\sigma,\pi^2(\sigma))}$ of any strategy profile $(\sigma,\pi^2(\sigma))$, and denote by $\nu^2$ the objective value $\bfb^T\bfv^{(\pi^1(\tau),\tau)}$ of any strategy profile $(\pi^1(\tau),\tau)$. Then $\nu^2\leq\nu^*\leq\nu^1$. Moreover, the corresponding reduced-cost vector satisfies
\[
r^{(\sigma,\pi^2(\sigma))}_a\leq0\quad\forall a\in\mcA^2\quad\mbox{and}\quad r^{(\pi^1(\tau),\tau)}_a\geq0\quad\forall a\in\mcA^1,
\]
and $r^\pi_{\pi_s}=0$ for every state $s\in\mcS$ and every strategy profile $\pi$. Furthermore, a strategy profile $\pi$ is optimal if and only if
\[
r_a^\pi\geq0\quad\forall a\in\mcA^1\quad\mbox{and}\quad r_a^\pi\leq0\quad\forall a\in\mcA^2.
\]
\end{proposition}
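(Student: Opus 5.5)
The plan is to reduce every claim to monotonicity of the Bellman-type operators and the standard "reduced-cost identity" for value differences. The basic tool is the following identity. For two profiles $\pi,\pi'$, subtract $\bfv^{\pi}=\bfc^{\pi}+\gamma(P^{\pi})^T\bfv^{\pi}$ from the corresponding equation for $\pi'$ and use the definition \eqref{eq:reduced} of $r^\pi$. This gives
\[
\bfv^{\pi'}-\bfv^{\pi}=(I-\gamma P^{\pi'})^{-T}\bfr^{\pi}_{\pi'},
\]
where $(\bfr^\pi_{\pi'})_s=r^\pi_{\pi'_s}$. Since $P^{\pi'}$ is column-stochastic and $\gamma<1$, the matrix $(I-\gamma P^{\pi'})^{-T}=\sum_k\gamma^k((P^{\pi'})^T)^k$ is entrywise nonnegative with diagonal at least $1$. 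Hence the sign pattern of $\bfr^\pi_{\pi'}$ transfers to $\bfv^{\pi'}-\bfv^{\pi}$. The fact $r^\pi_{\pi_s}=0$ is immediate from \eqref{eq:value}.

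\emph{Sign conditions for counter strategies.} Let $\pi=(\sigma,\pi^2(\sigma))$. Suppose some $a\in\mcA_s$ with $s\in\mcS^2$ has $r^\pi_a>0$. Switch $\pi_s$ to $a$ to get $\tau'$. The identity gives $\bfv^{(\sigma,\tau')}-\bfv^\pi\ge(I-\gamma P)^{-T}r_a\bfe_s$, which is nonnegative with a strictly positive $s$-coordinate. This contradicts $\bfv^{(\sigma,\tau')}\le\bfv^\pi$ from Definition~\ref{def:response}. The case of Player~1 is symmetric.

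\emph{Optimality criterion.} Suppose $r^\pi_a\ge0$ on $\mcA^1$ and $r^\pi_a\le0$ on $\mcA^2$, where $\pi=(\sigma,\tau)$. For any $\tau'$, the identity with $\pi'=(\sigma,\tau')$ has $\bfr^\pi_{\pi'}\le0$, because Player~1 coordinates are zero and Player~2 coordinates are $\le0$. So $\bfv^{(\sigma,\tau')}\le\bfv^\pi$, and $\tau$ is an optimal counter strategy. Symmetrically, $\sigma$ is optimal against $\tau$. Conversely, if $\pi$ is optimal, the sign conditions follow from the previous paragraph.

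\emph{Sandwich $\nu^2\le\nu^*\le\nu^1$.} Let $\pi^*=(\sigma^*,\tau^*)$ be the optimal profile given by the Bellman equations. Then $r^{\pi^*}$ satisfies the sign conditions, so by the previous paragraph $\pi^*$ is optimal. Take $\pi=(\sigma,\pi^2(\sigma))$ and compare with $(\sigma,\tau^*)$. First, $\bfv^{(\sigma,\tau^*)}\le\bfv^\pi$ by optimality of $\pi^2(\sigma)$. Next, apply the identity with base $\pi^*$ and $\pi'=(\sigma,\tau^*)$. The reduced costs used are those of Player~1 actions, which are $\ge0$, and Player~2 coordinates are zero. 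Hence $\bfv^{(\sigma,\tau^*)}\ge\bfv^*$, so $\bfv^*\le\bfv^\pi$. Multiplying by $\bfb=(1-\gamma)\bfe>0$ gives $\nu^*\le\nu^1$. The inequality $\nu^2\le\nu^*$ is symmetric.

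The only delicate point is the existence claim for $\bfv^*$, which the paper takes from \cite{meisunye2026}. If needed, it follows from the Bellman operator being a $\gamma$-contraction in $\|\cdot\|_\infty$. The remaining steps are just careful bookkeeping of signs.
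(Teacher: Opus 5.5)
Your proof is correct and complete. The paper gives no argument of its own here: it recalls Proposition~1(vii) and Lemma~3 of \cite{meisunye2026} and calls the proof standard.

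Your derivation from the identity $\bfv^{\pi'}-\bfv^{\pi}=(I-\gamma P^{\pi'})^{-T}\mathbf r^{\pi}_{\pi'}$, together with entrywise nonnegativity of the Neumann series, is that standard argument made self-contained. It also yields the componentwise bounds $\bfv^{(\pi^1(\tau),\tau)}\leq\bfv^*\leq\bfv^{(\sigma,\pi^2(\sigma))}$. The scalar sandwich then follows by pairing with $\bfb=(1-\gamma)\bfe$.

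One step should be stated explicitly: $\bfv^{\pi^*}=\bfv^*$. It holds because the Bellman equations say $\bfv^*$ solves \eqref{eq:value} for $\pi^*$, and that linear system has a unique solution. This is what lets you read the sign conditions on $r^{\pi^*}$ directly off the min/max characterization.

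A minor cleanup in the switching step: the matrix is $P^{(\sigma,\tau')}$, and the relation is an equality, because all other coordinates of $\mathbf r^{\pi}_{\pi'}$ vanish.
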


We consider the following simple strategy-iteration method. It uses the update rule of Algorithm~1 in~\cite{jiawenye2020}, with the signs reversed because Player~1 minimizes, and is initialized with an optimal counter strategy. Ties are broken arbitrarily.

\begin{algorithm}[H]
\caption{Simple strategy iteration with Dantzig's rule}\label{alg:ssi}
\begin{algorithmic}[1]
\REQUIRE A strategy $\sigma_0$ of Player~1.
\STATE Compute $\tau_0=\pi^2(\sigma_0)$ and $\bfv^{\pi_0}$, where $\pi_0=(\sigma_0,\tau_0)$. Set $k=0$.
\STATE Choose $a'_k\in\Argmin_{a\in\mcA^1}r_a^{\pi_k}$. If $r_{a'_k}^{\pi_k}=0$, terminate and return $\pi_k$.
\STATE Replace the selected action at the source of $a'_k$ by $a'_k$, obtaining $\sigma_{k+1}$.
\STATE Compute $\tau_{k+1}=\pi^2(\sigma_{k+1})$ and $\bfv^{\pi_{k+1}}$, where $\pi_{k+1}=(\sigma_{k+1},\tau_{k+1})$. Set $k\leftarrow k+1$ and return to Step~2.
\end{algorithmic}
\end{algorithm}

\section{An exponential lower bound with unrestricted costs}\label{sec:real}

The strongly polynomial analyses for deterministic MDPs in~\cite{postye2015} and turn-based deterministic forward games in~\cite{meisunye2026} rely on the progress obtained through cycle creation. For general TBDGs, however, a mixed-player cycle created by a Player~1 pivot may be canceled by Player~2's optimal counter strategy. Motivated by this observation, we construct a family in which such cancellation occurs at every iteration. Costs on actions to the next level determine the ordering of state values, while costs on return actions determine Dantzig's pivot selection. These choices force the algorithm to follow a binary counting sequence and require exponentially many iterations. We first present the construction with unrestricted costs, for which the required cost scales can be assigned directly to actions.

\subsection{The construction of the $h$-level counterexample}\label{subsec:construction}

Fix an integer $h\geq2$, we construct an $h$-level counterexample $\mcG_h$ with states $a_i,b_i$ for $1\leq i\leq h+1$ and auxiliary states $p_i,q_i$ for $1\leq i\leq h$. For each $1\leq i\leq h$, state $a_i$ has two zero-cost actions leading to $p_i$ and $q_i$. States $p_i$ and $q_i$ have actions leading to $a_{i+1}$ and $b_{i+1}$, respectively, each with cost $W_i/\gamma$. Both auxiliary states also have an action leading back to $a_i$ with cost $-K_i/\gamma$. State $b_i$ has two zero-cost actions leading to $a_{i+1}$ and $b_{i+1}$. At the terminal level, $a_{h+1}$ has a single zero-cost self-loop, and $b_{h+1}$ has a single action leading to $a_{h+1}$ with cost $-R$. Player~1 controls $a_1,\ldots,a_{h+1}$, and Player~2 controls all other states. Choose the parameters as follows:
\begin{equation}\label{eq:real-parameters}
R=1,\qquad W_i=\gamma4^{\,i-h-1},\qquad K_i=\gamma2^{\,2-i-2h},\qquad
\delta=\frac{1}{100\times8^{h-2}},\qquad\gamma=1-\delta.
\end{equation}
No other states or actions are present. Table~\ref{tab:edges} summarizes the construction.

\begin{table}[H]
\centering
\begin{tabular}{lcccl}
\toprule
State & Owner & Actions & Successor & Cost\\
\midrule
$a_i$ & Player~1 & $2$ & $p_i;q_i$ & $0$; $0$\\
$p_i$ & Player~2 & $2$ & $a_{i+1}$; $a_i$ & $W_i/\gamma$; $-K_i/\gamma$\\
$q_i$ & Player~2 & $2$ & $b_{i+1}$; $a_i$ & $W_i/\gamma$; $-K_i/\gamma$\\
$b_i$ & Player~2 & $2$ & $a_{i+1},b_{i+1}$ & $0$; $0$\\
$a_{h+1}$ & Player~1 & $1$ & $a_{h+1}$ & $0$\\
$b_{h+1}$ & Player~2 & $1$ & $a_{h+1}$ & $-R$\\
\bottomrule
\end{tabular}
\caption{$h$-level counterexample with unrestricted costs.}
\label{tab:edges}
\end{table}

To summarize, each of the $4h$ states $a_i,b_i,p_i,q_i$, $1\leq i\leq h$, has two actions, while each terminal state $a_{h+1},b_{h+1}$ has one. Thus the game has $n=4h+2$ states and $m=8h+2$ actions. At each auxiliary state $p_i$ or $q_i$, Player~2 can either move to its designated successor at the next level or return to $a_i$, as illustrated in Figure~\ref{fig:branch}.
\begin{figure}[H]
\centering
\begin{tikzpicture}[>=Stealth,line width=.65pt,
every node/.style={font=\small},
state/.style={draw,minimum width=9mm,minimum height=9mm,inner sep=0pt},
p1/.style={state,rectangle},
p2/.style={state,circle},
aux/.style={p2},
pathL/.style={->},
pathB/.style={->},
cost/.style={fill=white,inner sep=1.5pt}]
\node[p1] (a) at (0,-.9) {$a_i$};
\node[aux] (p) at (2.6,0) {$p_i$};
\node[aux] (q) at (2.6,-1.8) {$q_i$};
\node[p1] (anext) at (5.8,0) {$a_{i+1}$};
\node[p2] (bnext) at (5.8,-1.8) {$b_{i+1}$};
\node[p2] (b) at (8.4,-.9) {$b_i$};
\draw[->] (a) -- node[cost,below] {$0$} (p);
\draw[->] (a) -- node[cost,above] {$0$} (q);
\draw[pathL] (p) -- node[cost,above] {$W_i/\gamma$} (anext);
\draw[pathL] (q) -- node[cost,above] {$W_i/\gamma$} (bnext);
\draw[pathL] (p) to[bend right=28] node[cost,above] {$-K_i/\gamma$} (a);
\draw[pathL] (q) to[bend left=28] node[cost,below] {$-K_i/\gamma$} (a);
\draw[pathB] (b) -- node[cost,above] {$0$} (anext);
\draw[pathB] (b) -- node[cost,below] {$0$} (bnext);
\end{tikzpicture}
\caption{The two successor branches at level $i$. Throughout this paper, squares denote Player~1 states and circles denote Player~2 states.}
\label{fig:branch}
\end{figure}
During our call to Algorithm \ref{alg:ssi}, we will initialize Player~1's initial strategy by setting
\begin{equation}\label{eq:initial}
(\sigma_0)_{a_i}=(a_i,p_i)\qquad \forall1\leq i\leq h,\quad (\sigma_0)_{a_{h+1}}=(a_{h+1},a_{h+1}).
\end{equation}

To simplify the analysis and facilitate the extension to unit weights, let $\beta=\gamma^2$ denote the total discount along either path $a_i\to p_i\to a_{i+1}$ or $a_i\to q_i\to b_{i+1}$, and let $\eta=\gamma$ denote the total discount from $b_i$ to either $a_{i+1}$ or $b_{i+1}$. Both return cycles $a_i\to p_i\to a_i$ and $a_i\to q_i\to a_i$ also have total discount $\beta$. With this notation, the parameters in \eqref{eq:real-parameters} satisfy
\begin{align}
&\frac89<\beta\leq\eta<1,\qquad W_i,K_i,R>0,\qquad K_i\leq2W_i\quad\forall1\leq i\leq h,
\label{eq:separation-a}\\
&W_{i+1}\geq4W_i,\qquad K_i\geq2K_{i+1}\quad\forall1\leq i<h,
\label{eq:separation-b}\\
&R\geq4W_h,\qquad3(1-\beta)R<K_h,\qquad4(\eta-\beta)R<W_1.
\label{eq:separation-c}
\end{align}

Formally, the construction defines $\mcG_h=(\mcS,\mcA,\bfc,\bfP,\gamma)$ with
\[
\begin{aligned}
\mcS&=\{a_i,b_i:1\leq i\leq h+1\}\cup\{p_i,q_i:1\leq i\leq h\},\\
\mcS^1&=\{a_i:1\leq i\leq h+1\},\qquad \mcS^2=\mcS\setminus\mcS^1.
\end{aligned}
\]
For each $s\in\mcS$, let $\mcA_s$ consist of the outgoing actions listed for $s$ in Table~\ref{tab:edges}, and set $\mcA=\bigcup_{s\in\mcS}\mcA_s$. Each action $a=(s,t)$ has the cost $c_a$ specified in that table and transition column $\bfp_a=\bfe_t$ in $\bfP$. The costs and the discount factor are determined by \eqref{eq:real-parameters}. As in Section~\ref{sec:prelim}, the states and actions may be enumerated by $[n]$ and $[m]$ without changing the game.

\subsection{Counter strategies, values, and reduced costs properties}\label{subsec:comparisons}

For a strategy profile $\pi=(\sigma,\tau)$, write $\Delta_i^\pi=v_{a_i}^\pi-v_{b_i}^\pi$ for $1\leq i\leq h+1$.

By construction of $\mcG_h$ in Subsection~\ref{subsec:construction}, one has $\pi_{a_{h+1}}=(a_{h+1},a_{h+1})$ and $\pi_{b_{h+1}}=(b_{h+1},a_{h+1})$ for every strategy profile $\pi$, as these are the only actions available. It then follows from \eqref{eq:value} and the definition of $\Delta_{h+1}^\pi$ that
\begin{equation}\label{eq:terminal-values}
v_{a_{h+1}}^\pi=0,\qquad v_{b_{h+1}}^\pi=-R,\qquad \Delta_{h+1}^\pi=v_{a_{h+1}}^\pi-v_{b_{h+1}}^\pi=R.
\end{equation}

\begin{lemma}\label{lem:existence}
Consider the TBDG $\mcG_h$ defined in Subsection~\ref{subsec:construction}. For every strategy $\sigma$ of Player~1, there exists an optimal counter strategy of Player~2.
\end{lemma}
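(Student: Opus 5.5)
Once $\sigma$ is fixed, Player~1 has no choices left and the game is a one-player discounted deterministic MDP in which Player~2 maximizes. The plan is to show that this MDP has a strategy that is optimal simultaneously from every initial state. We will do this either by citing the general existence of an optimal counter strategy, or directly from the Bellman equation, which is the route we take below for a self-contained argument.

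\textbf{Step 1: Bellman fixed point.} Define the operator $T$ on $\rr^n$ by
\[
(T\bfv)_s=\max_{a\in\mcA_s}\{c_a+\gamma\bfp_a^T\bfv\}\quad\text{for }s\in\mcS^2,
\qquad
(T\bfv)_s=c_{\sigma_s}+\gamma\bfp_{\sigma_s}^T\bfv\quad\text{for }s\in\mcS^1.
\]
Because every $\bfp_a$ is a point mass and $\gamma<1$, $T$ is a $\gamma$-contraction in the sup norm. By Banach's theorem it has a unique fixed point $\bar\bfv$.

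\textbf{Step 2: the counter strategy.} Define $\bar\tau$ by choosing, at each $s\in\mcS^2$, an action attaining the maximum in $(T\bar\bfv)_s$. Such an action exists because every $\mcA_s$ is finite and nonempty; in $\mcG_h$ each has one or two actions. Then $\bar\bfv$ solves the linear system \eqref{eq:value} for the profile $(\sigma,\bar\tau)$. That system has a unique solution, since $I-\gamma P^\pi$ is invertible with $\|\gamma P^\pi\|<1$, so $\bar\bfv=\bfv^{(\sigma,\bar\tau)}$.

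\textbf{Step 3: comparison with any $\tau$.} Fix an arbitrary strategy $\tau$ of Player~2 and write $T_\tau$ for the affine map $\bfv\mapsto\bfc^{(\sigma,\tau)}+\gamma(P^{(\sigma,\tau)})^T\bfv$. By the definition of the maximum, $T_\tau\bar\bfv\le T\bar\bfv=\bar\bfv$. The map $T_\tau$ is monotone, because $(P^{(\sigma,\tau)})^T$ has nonnegative entries. Iterating therefore gives $T_\tau^k\bar\bfv\le\bar\bfv$ for all $k$. As $k\to\infty$, $T_\tau^k\bar\bfv$ converges to the unique fixed point of $T_\tau$, which is $\bfv^{(\sigma,\tau)}$. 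Hence $\bfv^{(\sigma,\tau)}\le\bfv^{(\sigma,\bar\tau)}$ componentwise, which is exactly the requirement of Definition~\ref{def:response} with $\pi^2(\sigma)=\bar\tau$.

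\textbf{Main obstacle.} The only delicate point is the uniformity: a single $\tau$ must maximize the value at every state simultaneously, not just the objective $\bfb^T\bfv$. The monotonicity argument in Step 3 is what delivers this. Alternatively, one can avoid analysis altogether by noting that there are finitely many $\tau$ and running the policy-improvement argument, each step of which weakly increases all values componentwise. Either route is standard, and the specific structure of $\mcG_h$ is not needed.
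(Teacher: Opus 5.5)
Your proof is correct and uses the same reduction as the paper: once $\sigma$ is fixed, Player~2 faces a finite discounted deterministic MDP. The paper stops there and cites \cite[Section~2]{postye2015} for the existence of a policy that is optimal from every state, whereas your Bellman-contraction and monotone-comparison argument proves that cited fact directly, without using any structure specific to $\mcG_h$.
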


\begin{proof}
For any fixed strategy $\sigma$ of Player~1, Player~2's problem is a finite discounted deterministic MDP. The conclusion therefore follows from the existence of an optimal counter strategy in \cite[Section~2]{postye2015}.
\end{proof}

We first determine the action of an optimal counter strategy at the state selected by Player~1 and then bound the cost-to-go values at each level. These properties will be used to compare the two values at each level and determine the remaining actions.

\begin{lemma}\label{lem:selected-actions}
Consider the TBDG $\mcG_h$ defined in Subsection~\ref{subsec:construction}. For any strategy $\sigma$ of Player 1 and any optimal counter strategy $\tau$ of Player 2, let $\pi=(\sigma,\tau)$. For every $1\leq i\leq h$, we have $\pi_{p_i}=(p_i,a_{i+1})$ if $\sigma_{a_i}=(a_i,p_i)$ and $\pi_{q_i}=(q_i,b_{i+1})$ otherwise.
\end{lemma}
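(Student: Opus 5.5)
The approach is by contradiction, using the reduced-cost characterization of optimal counter strategies in Proposition~\ref{prop:reduced-optimality}. Fix $1\leq i\leq h$ and suppose $\sigma_{a_i}=(a_i,p_i)$ (the case $\sigma_{a_i}=(a_i,q_i)$ is symmetric, with $q_i$ and $b_{i+1}$ in place of $p_i$ and $a_{i+1}$). Assume, contrary to the claim, that $\pi_{p_i}=(p_i,a_i)$. Then $G^\pi$ contains the two-cycle $a_i\to p_i\to a_i$. Solving \eqref{eq:value} on this cycle gives $v^\pi_{a_i}=\gamma v^\pi_{p_i}$ and $v^\pi_{p_i}=-K_i/\gamma+\gamma v^\pi_{a_i}$, so
\[
v^\pi_{p_i}=-\frac{K_i}{\gamma(1-\beta)}.
\]
By Proposition~\ref{prop:reduced-optimality}, the forward action must satisfy $r^\pi_{(p_i,a_{i+1})}=W_i/\gamma+\gamma v^\pi_{a_{i+1}}-v^\pi_{p_i}\leq0$. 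Since $W_i>0$, this forces
\[
v^\pi_{a_{i+1}}<-\frac{K_i}{\gamma^2(1-\beta)}.
\]
It therefore suffices to prove a lower bound on $v^\pi_{a_{i+1}}$ (and, for the symmetric case, on $v^\pi_{b_{i+1}}$) that contradicts this inequality.

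For the lower bound I will follow the unique path from $a_{i+1}$ (or $b_{i+1}$) in $G^\pi$. All costs are nonnegative except the return costs $-K_j/\gamma$ and the terminal cost $-R$. The path only moves forward in level until one of two things happens. It may reach level $h+1$: then the discounted cost it collects is at least the terminal contribution, which is at least $-R$ by \eqref{eq:terminal-values}. Otherwise it enters a cycle $a_j\to p_j\to a_j$ or $a_j\to q_j\to a_j$ at some level $j\geq i+1$, with a nonnegative-cost prefix. In that case its value is at least $-K_j/(\gamma(1-\beta))\geq -K_{i+1}/(\gamma(1-\beta))$, using monotonicity of $K$ from \eqref{eq:separation-b}. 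In both cases
\[
v^\pi_{a_{i+1}},\,v^\pi_{b_{i+1}}\geq-\max\Bigl\{R,\ \frac{K_{i+1}}{\gamma(1-\beta)}\Bigr\}.
\]

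Now compare with the required bound. By \eqref{eq:separation-c} and \eqref{eq:separation-b}, $R<K_h/(3(1-\beta))\leq K_i/(3(1-\beta))<K_i/(\gamma^2(1-\beta))$. By \eqref{eq:separation-b}, $K_{i+1}/(\gamma(1-\beta))\leq K_i/(2\gamma(1-\beta))<K_i/(\gamma^2(1-\beta))$. Hence $v^\pi_{a_{i+1}}>-K_i/(\gamma^2(1-\beta))$, a contradiction, so $\pi_{p_i}=(p_i,a_{i+1})$.

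The main obstacle is making the path lower bound rigorous: the path may pass several levels before entering a cycle, and the intermediate discount factors must be handled correctly. I will do this cleanly by induction on the level downward from $h+1$. At each level the hypothesis is that every state has value at least $-\max\{R,K_{\ell}/(\gamma(1-\beta))\}$, where $\ell$ is the current level. The inductive step uses only $\gamma<1$, nonnegativity of the zero and $W$ costs, and the exact value of a two-cycle computed above. In the two-cycle case the intermediate discounting only shrinks magnitudes, so the induction closes.
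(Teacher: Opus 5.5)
Your proof is correct, and its contradiction mechanism is the same as the paper's. You assume the selected auxiliary state returns to $a_i$, evaluate the resulting two-cycle, and contradict Player~2's optimality at that state. Your condition $r^\pi_{(p_i,a_{i+1})}\leq0$, multiplied by $\gamma$ and combined with $v^\pi_{a_i}=\gamma v^\pi_{p_i}$, is exactly the paper's inequality $v^\pi_{a_i}\geq W_i+\beta v^\pi_{a_{i+1}}$.

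The decomposition differs. The paper runs a backward induction on the lemma's own conclusion. Once every level $j\geq i+1$ is known to send its selected auxiliary state forward, every trajectory from $a_{i+1}$ or $b_{i+1}$ reaches $a_{h+1}$ with $-R$ as its only possible negative cost. The a priori bound is then simply $v^\pi_{a_{i+1}},v^\pi_{b_{i+1}}\geq-R$, and the key comparison is $K_i\geq K_h>3(1-\beta)R$.

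You do not induct on the claim. You treat each level in isolation, and pay for this with the extra case of a trajectory falling into a return cycle at a higher level. That case gives the weaker bound $-\max\{R,K_{i+1}/(\gamma(1-\beta))\}$, which you then beat using the monotonicity of the $K_j$ from \eqref{eq:separation-b}. The paper's route yields the clean bound $-R$, which it reuses in Lemma~\ref{lem:values}. Your route is self-contained at each level and relies on the same parameter inequalities.

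Two small points of housekeeping. First, for $i=h$ there is no $K_{h+1}$, and the bound is just $-R$ by \eqref{eq:terminal-values}. Second, your closing induction should be restricted to the states visited by trajectories from $a_\ell$ and $b_\ell$, so that it simply formalizes your path argument. An unselected auxiliary state whose return action leads to $a_\ell$ lies on no two-cycle, so the ingredients you list do not cover it. Such states never appear on the relevant trajectories, so nothing is lost by excluding them.
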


\begin{proof}

Fix a strategy $\sigma$ of Player~1 and let $\tau$ be any optimal counter strategy of Player~2, whose existence follows from Lemma~\ref{lem:existence}. We prove the lemma by backward induction, with the terminal values given by \eqref{eq:terminal-values}. Fix $1\leq i\leq h$ and suppose for induction that the conclusion holds at every level $j=i+1,\ldots,h$.

By the induction hypothesis and the construction of $\mcG_h$, every trajectory under $\pi$ starting from $a_{i+1}$ or $b_{i+1}$ reaches $a_{h+1}$, and its only possible negative cost is $-R$, incurred at most once. Thus \eqref{eq:value} and $0<\gamma<1$ yields $v_{a_{i+1}}^\pi,v_{b_{i+1}}^\pi\geq-R$. For $i=h$, these bounds follow directly from \eqref{eq:terminal-values}.

Suppose that $\sigma_{a_i}=(a_i,p_i)$. By \eqref{eq:value}, the optimality of $\tau$ at $p_i$, and the construction of $\mcG_h$, we have
\[
v_{a_i}^\pi=c_{(a_i,p_i)}+\gamma v_{p_i}^\pi\geq c_{(a_i,p_i)}+\gamma c_{(p_i,a_{i+1})}+\gamma^2v_{a_{i+1}}^\pi=W_i+\beta v_{a_{i+1}}^\pi\geq W_i-\beta R>-R,
\]
where the first equality follows from \eqref{eq:value} and $\sigma_{a_i}=(a_i,p_i)$, the first inequality follows from the optimality of $\tau$ and $\gamma>0$, the second equality follows from $c_{(a_i,p_i)}=0$, $c_{(p_i,a_{i+1})}=W_i/\gamma$, and $\beta=\gamma^2$, the second inequality follows from $v_{a_{i+1}}^\pi\geq-R$, and the last inequality follows from \eqref{eq:separation-a}. Moreover, suppose for contradiction that $\tau_{p_i}=(p_i,a_i)$, then the selected actions form a two-action cycle between $a_i$ and $p_i$, with costs $0$ and $-K_i/\gamma$, respectively. It then follows that $v_{a_i}^\pi=\gamma v_{p_i}^\pi=-K_i/(1-\beta)<-3R$, where the equalities follow from \eqref{eq:value}, the cycle costs $0$ and $-K_i/\gamma$, and $\beta=\gamma^2$, and the inequality follows from \eqref{eq:separation-b} and \eqref{eq:separation-c}. This contradicts the lower bound $v_{a_i}^\pi>-R$ derived above. Thus we have $\tau_{p_i}=(p_i,a_{i+1})$.

Suppose that $\sigma_{a_i}=(a_i,q_i)$, the same argument applies with $q_i$ in place of $p_i$ and $b_{i+1}$ in place of $a_{i+1}$. Thus we also have $\tau_{q_i}=(q_i,b_{i+1})$, which completes the induction.
\end{proof}

\begin{lemma}\label{lem:values}
Consider the TBDG $\mcG_h$ defined in Subsection~\ref{subsec:construction}. For any strategy $\sigma$ of Player 1 and any optimal counter strategy $\tau$ of Player 2, let $\pi=(\sigma,\tau)$. Then we have $|v_{a_i}^\pi|,|v_{b_i}^\pi|\leq R$ for all $1\leq i\leq h+1$.
\end{lemma}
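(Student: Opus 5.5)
Lemma~\ref{lem:selected-actions} already fixes the auxiliary actions, so under $\pi$ every state reaches $a_{h+1}$ without mixed-player cycles. The plan is a backward induction on the level $i$, from $h+1$ down to $1$, proving $-R\le v_{a_i}^\pi,v_{b_i}^\pi\le R$ at each level.

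\textbf{Base case.} At level $h+1$ the bound holds by \eqref{eq:terminal-values}, since $v_{a_{h+1}}^\pi=0$ and $v_{b_{h+1}}^\pi=-R$.

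\textbf{Lower bound.} The lower bound at every level is exactly the one obtained inside the proof of Lemma~\ref{lem:selected-actions}. Every trajectory from $a_i$ or $b_i$ follows forward edges to level $h+1$, and its only negative cost is $-R$, incurred at most once and discounted by a factor at most $1$. This gives $v\ge -R$ for every state at levels $i\le h+1$.

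\textbf{Upper bound at $b_i$.} Assume $|v_{a_{i+1}}^\pi|,|v_{b_{i+1}}^\pi|\le R$. Since $b_i$ has zero-cost actions to $a_{i+1}$ and $b_{i+1}$,
\[
v_{b_i}^\pi=\eta\, v_{t}^\pi\le \eta R\le R,\qquad t\in\{a_{i+1},b_{i+1}\}.
\]

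\textbf{Upper bound at $a_i$.} Here a crude induction fails, because the cost $W_i$ is added at each level. By Lemma~\ref{lem:selected-actions},
\[
v_{a_i}^\pi=W_i+\beta\, v_{t}^\pi,\qquad t\in\{a_{i+1},b_{i+1}\},
\]
so the naive estimate $W_i+\beta R$ exceeds $R$ unless $W_i\le(1-\beta)R$, which fails when $\delta$ is tiny. I therefore do not induct on $R$ directly. Instead I bound the total forward path cost explicitly. Any forward trajectory from level $i$ accrues positive costs at most
\[
\sum_{j\ge i}W_j\le \sum_{j\le h}W_j\le \tfrac43 W_h\le \tfrac13 R,
\]
using \eqref{eq:separation-b} for the geometric decay of the $W_j$ going down and $R\ge 4W_h$ from \eqref{eq:separation-c}. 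The only other cost is the nonpositive $-R$. Hence every value is at most $\tfrac13 R\le R$.

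\textbf{Conclusion.} The cleanest write-up states the bound directly as: each value equals a discounted sum of costs along a path using each level once, with terms in $[-R,\tfrac13R]$ overall. This yields $|v|\le R$ at every level without a per-level induction; Lemma~\ref{lem:selected-actions} is the only ingredient needed to exclude the return cycles with their large negative costs $-K_i/\gamma$.

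The main subtlety is making sure the discount factors, which are at most $1$, do not spoil the sum bounds. Because all positive terms come before the single negative $-R$, dropping the discounts only enlarges the positive part. The negative part is $-R$ times a discount factor in $(0,1]$, so it stays in $[-R,0]$.
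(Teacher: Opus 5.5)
Your proof is correct and follows the paper's argument. The lower bound comes from the single $-R$ on the forward trajectory guaranteed by Lemma~\ref{lem:selected-actions}, and the upper bound is $\sum_{j\ge i}W_j<\frac43W_h\le R/3$; the paper proves this by backward induction via \eqref{eq:recursive1}--\eqref{eq:recursive2}, while you sum along the path, which is only a cosmetic difference. One small bookkeeping point: the raw edge cost is $W_j/\gamma$, so instead of dropping all discounts you should keep the factor $\gamma$ from the step $p_j\to a_{j+1}$ (or $q_j\to b_{j+1}$), which is what bounds each discounted contribution by $W_j$.
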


\begin{proof}

The terminal values are given by \eqref{eq:terminal-values}. Moreover, it follows from Lemma~\ref{lem:selected-actions} that every trajectory under $\pi$ starting from a level state finally reaches $a_{h+1}$, and its only possible negative cost is $-R$, incurred at most once. Hence \eqref{eq:value} and $0<\gamma<1$ imply $v_{a_i}^\pi,v_{b_i}^\pi\geq-R$ for every $1\leq i\leq h+1$. We next prove that
\[
v_{a_i}^\pi,v_{b_i}^\pi\leq\sum_{j=i}^hW_j,
\]
for every $1\leq i\leq h+1$ by backward induction. Fix $1\leq i\leq h$ and suppose for induction that the bounds hold at level $i+1$. 

Suppose that $\sigma_{a_i}=(a_i,p_i)$, it follows from Lemma~\ref{lem:selected-actions} that $\tau_{p_i}=(p_i,a_{i+1})$. The optimality of $\tau$ at $b_i$ together with \eqref{eq:value} and the construction of $\mcG_h$ yields
\begin{align}
&v_{a_i}^\pi=c_{(a_i,p_i)}+\gamma c_{(p_i,a_{i+1})}+\gamma^2v_{a_{i+1}}^\pi= W_i+\beta v_{a_{i+1}}^\pi,\label{eq:recursive1}\\
&v_{b_i}^\pi=\max_{y\in\{a_{i+1},b_{i+1}\}}\{c_{(b_i,y)}+\gamma v_y^\pi\}=\eta\max\{v_{a_{i+1}}^\pi,v_{b_{i+1}}^\pi\},\label{eq:recursive2}
\end{align}
where the first equality follows from \eqref{eq:value}, $\sigma_{a_i}=(a_i,p_i)$ and $\tau_{p_i}=(p_i,a_{i+1})$, the second equality follows from $c_{(a_i,p_i)}=0$, $c_{(p_i,a_{i+1})}=W_i/\gamma$ and $\beta=\gamma^2$, the third equality follows from the optimality of $\tau$ and \eqref{eq:value}, and the last equality follows from $c_{(b_i,a_{i+1})}=c_{(b_i,b_{i+1})}=0$ and $\eta=\gamma$. By the above qualities and the induction hypothesis, we have
\begin{align*}
&v_{a_i}^\pi=W_i+\beta v_{a_{i+1}}^\pi\leq W_i+\beta\sum_{j=i+1}^hW_j\leq\sum_{j=i}^hW_j,\\
&v_{b_i}^\pi=\eta\max\{v_{a_{i+1}}^\pi,v_{b_{i+1}}^\pi\}\leq\eta\sum_{j=i+1}^hW_j\leq\sum_{j=i}^hW_j,
\end{align*}
where the first inequality in each line follows from the induction hypothesis, and the last inequalities follow from $W_i>0$, $0<\beta\leq\eta<1$, and $\sum_{j=i+1}^hW_j\geq0$. The same argument applies to the case where $\sigma_{a_i}=(a_i,q_i)$, with $p_i$ and $a_{i+1}$ replaced by $q_i$ and $b_{i+1}$, respectively, in \eqref{eq:recursive1}, while \eqref{eq:recursive2} remains unchanged. Hence the induction is complete.

In addition, \eqref{eq:separation-b} and \eqref{eq:separation-c} imply
\[
\sum_{j=i}^hW_j<\frac43W_h\leq\frac R3<R\qquad \forall 1\leq i\leq h,
\]
where the first inequality follows from $W_j\leq W_h/4^{h-j}$ and the sum of a geometric series, the second inequality follows from $4W_h\leq R$, and the last inequality follows from $R>0$. This together with the established upper and lower bounds and \eqref{eq:terminal-values} implies $|v_{a_i}^\pi|,|v_{b_i}^\pi|\leq R$ for every $1\leq i\leq h+1$, which completes the proof.
\end{proof}

The next lemma determines the magnitude and sign of the value difference at each level.

\begin{lemma}\label{lem:signs}
Consider the TBDG $\mcG_h$ defined in Subsection~\ref{subsec:construction}. For any strategy $\sigma$ and any optimal counter strategy $\tau$, let $\pi=(\sigma,\tau)$. For all $1\leq i\leq h$, $|\Delta_i^\pi|>3W_i/4$ and $\operatorname{sgn}\Delta_i^\pi=\operatorname{sgn}\Delta_{i+1}^\pi$ if $\sigma_{a_i}=(a_i,p_i)$, and $\operatorname{sgn}\Delta_i^\pi=-\operatorname{sgn}\Delta_{i+1}^\pi$ otherwise.
\end{lemma}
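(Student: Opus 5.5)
We argue level by level, and at each level we compute $\Delta_i^\pi$ explicitly from the recursions \eqref{eq:recursive1}--\eqref{eq:recursive2} established in the proof of Lemma~\ref{lem:values}. Write $M=\max\{v_{a_{i+1}}^\pi,v_{b_{i+1}}^\pi\}$. By \eqref{eq:recursive2}, $v_{b_i}^\pi=\eta M$ regardless of $\sigma_{a_i}$.

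\emph{Case $\sigma_{a_i}=(a_i,p_i)$.} Here $v_{a_i}^\pi=W_i+\beta v_{a_{i+1}}^\pi$, so
\[
\Delta_i^\pi=W_i+\beta v_{a_{i+1}}^\pi-\eta M .
\]
If $\Delta_{i+1}^\pi>0$, then $M=v_{a_{i+1}}^\pi$ and $\Delta_i^\pi=W_i-(\eta-\beta)v_{a_{i+1}}^\pi$. Lemma~\ref{lem:values} gives $|v_{a_{i+1}}^\pi|\le R$, and \eqref{eq:separation-c} gives $(\eta-\beta)R<W_1/4\le W_i/4$. Hence $\Delta_i^\pi>3W_i/4>0$.

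If $\Delta_{i+1}^\pi<0$, then $M=v_{b_{i+1}}^\pi$, and we write
\[
\Delta_i^\pi=W_i+\beta\Delta_{i+1}^\pi-(\eta-\beta)v_{b_{i+1}}^\pi .
\]
The last term is again at most $W_i/4$ in absolute value. We therefore need $\beta\Delta_{i+1}^\pi$ to dominate $W_i$ with margin, namely $\beta|\Delta_{i+1}^\pi|\ge 2W_i$. Then $\Delta_i^\pi\le W_i-2W_i+W_i/4<-3W_i/4$.

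\emph{Case $\sigma_{a_i}=(a_i,q_i)$.} By Lemma~\ref{lem:selected-actions}, $v_{a_i}^\pi=W_i+\beta v_{b_{i+1}}^\pi$, and the roles are swapped. If $\Delta_{i+1}^\pi<0$, then $\Delta_i^\pi=W_i-(\eta-\beta)v_{b_{i+1}}^\pi>3W_i/4>0$. If $\Delta_{i+1}^\pi>0$, then
\[
\Delta_i^\pi=W_i-\beta\Delta_{i+1}^\pi-(\eta-\beta)v_{a_{i+1}}^\pi<-3W_i/4 ,
\]
under the same domination. Both cases give exactly the claimed sign rule.

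The induction runs backward in $i$. The hypothesis is $|\Delta_{j}^\pi|>3W_{j}/4$ for $j=i+1$, with base case $\Delta_{h+1}^\pi=R$ from \eqref{eq:terminal-values}.

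The key step, and the main obstacle, is the domination bound $\beta|\Delta_{i+1}^\pi|\ge2W_i$. At $i=h$ it follows from $|\Delta_{h+1}^\pi|=R\ge4W_h$ together with $\beta>8/9$, which gives $\beta R\ge 32W_h/9>2W_h$. For $i<h$ the hypothesis gives $|\Delta_{i+1}^\pi|>3W_{i+1}/4\ge3W_i$ by \eqref{eq:separation-b}, so $\beta|\Delta_{i+1}^\pi|>\tfrac89\cdot3W_i>2W_i$.

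Finally, we must rule out $\Delta_{i+1}^\pi=0$ so that the case split on its sign is exhaustive. This is immediate from the induction hypothesis at level $i+1$ and from $\Delta_{h+1}^\pi=R>0$. The $K_i$ parameters never enter this argument; they were needed only in Lemma~\ref{lem:selected-actions}.
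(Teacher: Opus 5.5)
Your proof is correct and follows essentially the same route as the paper: backward induction from $\Delta_{h+1}^\pi=R$ using the recursions \eqref{eq:recursive1}--\eqref{eq:recursive2}, the bound $(\eta-\beta)R<W_1/4\leq W_i/4$, and a domination bound on $\beta|\Delta_{i+1}^\pi|$. The differences are only in packaging: the paper folds your four sign cases into the single formula $\Delta_i^\pi=W_i+\beta[\Delta_{i+1}^\pi]_--(\eta-\beta)\max\{v_{a_{i+1}}^\pi,v_{b_{i+1}}^\pi\}$ and uses the margin $8W_i/3$ where you use $2W_i$.

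One cosmetic fix concerns your chain $\Delta_i^\pi\le W_i-2W_i+W_i/4<-3W_i/4$. Since $W_i-2W_i+W_i/4=-3W_i/4$ exactly, the final strict inequality is false as written. The strictness must instead come from the strict bounds $\beta|\Delta_{i+1}^\pi|>2W_i$ and $(\eta-\beta)R<W_i/4$, which you have already established, so make the first relation in that chain strict.
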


\begin{proof}
We prove the conclusions by backward induction, using \eqref{eq:terminal-values} at the terminal level. Fix $i$ and suppose for induction that they hold for every $j=i+1,\ldots,h$.

Suppose that $\sigma_{a_i}=(a_i,p_i)$, it then follows from Lemma~\ref{lem:selected-actions} that $\tau_{p_i}=(p_i,a_{i+1})$. By the definition of $\Delta_i^\pi$, \eqref{eq:recursive1}, and \eqref{eq:recursive2}, we have
\begin{equation}\label{eq:difference}
\Delta_i^\pi=v_{a_i}^\pi-v_{b_i}^\pi=W_i+\beta v_{a_{i+1}}^\pi-\eta\max\{v_{a_{i+1}}^\pi,v_{b_{i+1}}^\pi\}=W_i+\beta[\Delta_{i+1}^\pi]_--(\eta-\beta)\max\{v_{a_{i+1}}^\pi,v_{b_{i+1}}^\pi\},
\end{equation}
where the first equality is the definition of $\Delta_i^\pi$ and the second equality follows from \eqref{eq:recursive1} and \eqref{eq:recursive2}. Using Lemma~\ref{lem:values}, the induction hypothesis, \eqref{eq:separation-a}, \eqref{eq:separation-b}, and~\eqref{eq:separation-c}, one has
\[
\bigl|(\eta-\beta)\max(v_{a_{i+1}}^\pi,v_{b_{i+1}}^\pi)\bigr|\leq(\eta-\beta)R<W_1/4\leq W_i/4,\quad\beta|\Delta_{i+1}^\pi|>3\beta W_i>\frac83W_i,
\]
where the first inequality follows from $\eta\geq\beta$ and Lemma~\ref{lem:values}, the second inequality follows from \eqref{eq:separation-c}, and the third inequality follows from $W_1\leq W_i$ by \eqref{eq:separation-b}. The fourth inequality follows from $|\Delta_{i+1}^\pi|>3W_{i+1}/4\geq3W_i$ by the induction hypothesis and \eqref{eq:separation-b} when $i<h$, and from $\Delta_{h+1}^\pi=R\geq4W_h>3W_h$ by \eqref{eq:terminal-values} and \eqref{eq:separation-c} when $i=h$. The last inequality follows from $\beta>8/9$ by \eqref{eq:separation-a}. By the preceding bounds and \eqref{eq:difference}, we have
\[
\left\{\begin{aligned}
&\Delta_i^\pi>\frac34W_i,\quad\mbox{if}\quad \Delta_{i+1}^\pi>0,\\
&\Delta_i^\pi<\frac54W_i-\frac83W_i=-\frac{17}{12}W_i<-\frac34W_i,\quad\mbox{if}\quad\Delta_{i+1}^\pi<0.
\end{aligned}\right.
\]
Indeed, $[\Delta_{i+1}^\pi]_-=0$ in the first case and $\beta[\Delta_{i+1}^\pi]_-<-8W_i/3$ in the second case, while the absolute value of the last term in \eqref{eq:difference} is less than $W_i/4$. Thus $|\Delta_i^\pi|>3W_i/4$ and $\operatorname{sgn}\Delta_i^\pi=\operatorname{sgn}\Delta_{i+1}^\pi$.

Now suppose that $\sigma_{a_i}=(a_i,q_i)$. Lemma~\ref{lem:selected-actions} yields $\tau_{q_i}=(q_i,b_{i+1})$, and \eqref{eq:value} yields $v_{a_i}^\pi=W_i+\beta v_{b_{i+1}}^\pi$. Hence $[\Delta_{i+1}^\pi]_-$ in \eqref{eq:difference} is replaced by $[-\Delta_{i+1}^\pi]_-$, and the same estimates therefore imply $|\Delta_i^\pi|>3W_i/4$ and $\operatorname{sgn}\Delta_i^\pi=-\operatorname{sgn}\Delta_{i+1}^\pi$. This completes the induction.
\end{proof}

We next use these value differences to further characterize the optimal counter strategy and establish its uniqueness.
\begin{lemma}\label{lem:counter strategy}
Consider the TBDG $\mcG_h$ defined in Subsection~\ref{subsec:construction}. For every strategy $\sigma$, the optimal counter strategy $\pi^2(\sigma)$ is unique, let $\pi=(\sigma,\pi^2(\sigma))$. For all $1\leq i\leq h$ and $s\in\{p_i,q_i\}$ with $\sigma_{a_i}\neq(a_i,s)$, we have $\pi_s=(s,a_i)$ if and only if $\Delta_i^\pi>0$.
\end{lemma}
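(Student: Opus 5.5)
The plan is to show that every optimal counter strategy is forced, state by state, by strict inequalities in the Bellman optimality conditions for Player~2. First I fix the values. If $\tau$ and $\tau'$ are both optimal counter strategies to $\sigma$, Definition~\ref{def:response} gives $\bfv^{(\sigma,\tau)}\leq\bfv^{(\sigma,\tau')}$ and the reverse inequality, so they share one value vector $\bfv$. Hence $\Delta_i:=v_{a_i}-v_{b_i}$ does not depend on which optimal counter strategy is chosen.

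By Proposition~\ref{prop:reduced-optimality}, every action of Player~2 has nonpositive reduced cost under an optimal counter strategy. So at each Player~2 state the selected action must maximize $c_a+\gamma v_t$ over the available actions. It therefore suffices to show that this maximum is strict at every Player~2 state, with $\bfv$ held fixed.

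\emph{Easy states.} For the auxiliary state selected by $\sigma_{a_i}$, Lemma~\ref{lem:selected-actions} already forces the forward action for every optimal counter strategy. At $b_i$ the two candidates are $\gamma v_{a_{i+1}}$ and $\gamma v_{b_{i+1}}$. These differ by $\gamma\Delta_{i+1}$, and $\Delta_{i+1}\neq0$ by Lemma~\ref{lem:signs} (or by \eqref{eq:terminal-values} when $i=h$). So the action at $b_i$ is forced.

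\emph{Main step: the unselected auxiliary state.} Consider first $\sigma_{a_i}=(a_i,q_i)$, so $s=p_i$ is unselected. Because $a_i$ moves through $q_i$ to $b_{i+1}$, the value $v_{a_i}=W_i+\beta v_{b_{i+1}}$ does not depend on the choice at $p_i$. Multiplying by $\gamma$, the return action beats the forward action exactly when
\[
D:=-K_i-W_i+\beta(v_{a_i}-v_{a_{i+1}})>0 .
\]
Substituting for $v_{a_i}$ gives
\[
D=-K_i-(1-\beta)W_i-\beta\Delta_{i+1}-\beta(1-\beta)v_{b_{i+1}} .
\]
The symmetric case $\sigma_{a_i}=(a_i,p_i)$, with $s=q_i$ compared against $b_{i+1}$, gives
\[
D=-K_i-(1-\beta)W_i+\beta\Delta_{i+1}-\beta(1-\beta)v_{a_{i+1}} .
\]
I would then bound the error terms. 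By Lemma~\ref{lem:values} and \eqref{eq:separation-c}, $\beta(1-\beta)|v|\leq(1-\beta)R<K_h/3\leq K_i$. By \eqref{eq:separation-a}, $K_i\leq2W_i$. Hence all terms other than $\pm\beta\Delta_{i+1}$ have total magnitude below $5W_i$ plus a little.

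Meanwhile $\beta|\Delta_{i+1}|>\tfrac89\cdot\tfrac34W_{i+1}\geq\tfrac83W_i$ for $i<h$, and $\beta|\Delta_{h+1}|=\beta R\geq\tfrac{32}9W_h$ for $i=h$. The main obstacle is that $\tfrac83W_i$ does not obviously dominate $K_i+(1-\beta)W_i+(1-\beta)R$. I would handle this by tightening the estimates. First, $K_i\leq2K_{i+1}$ chains down to $K_i\leq 2^{h-i}K_h$, which is much smaller than $4^{h-i}W_h$-scale quantities; with \eqref{eq:real-parameters}, in fact $K_i\ll W_i$. Second, $(1-\beta)W_i$ is tiny. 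If the abstract separations prove insufficient, I would fall back on the concrete values in \eqref{eq:real-parameters}.

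With that bound in hand, $\operatorname{sgn}D=-\operatorname{sgn}\Delta_{i+1}$ in the first case and $\operatorname{sgn}D=\operatorname{sgn}\Delta_{i+1}$ in the second. By Lemma~\ref{lem:signs} both equal $\operatorname{sgn}\Delta_i$. This gives $\pi_s=(s,a_i)$ if and only if $\Delta_i>0$. Since $D\neq0$ in every case, the choice at $s$ is forced, which together with the easy states yields uniqueness of $\pi^2(\sigma)$.
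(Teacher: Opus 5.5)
Your architecture matches the paper's. All optimal counter strategies share one value vector, and each Player~2 choice must maximize $c_a+\gamma v_t$. The selected auxiliary state is forced by Lemma~\ref{lem:selected-actions}, and $b_i$ is forced by $\Delta_{i+1}\neq0$. The unselected auxiliary state is forced by the sign of your $D$, which is, up to sign, the paper's reduced-cost difference; Lemma~\ref{lem:signs} then turns $\operatorname{sgn}\Delta_{i+1}$ into $\operatorname{sgn}\Delta_i$. Both of your formulas for $D$ are correct.

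The gap is the one inequality that carries the lemma: you do not prove that $\beta|\Delta_{i+1}^\pi|$ strictly dominates the remaining terms, and the repairs you sketch do not work as stated.
\begin{itemize}
\item ``Below $5W_i$'' is useless against $\tfrac83W_i$.
\item The hypothesis in \eqref{eq:separation-b} is $K_i\geq2K_{i+1}$, so chaining it gives $K_i\geq2^{h-i}K_h$, not an upper bound on $K_i$.
\item Under \eqref{eq:real-parameters} one has $K_i/W_i=2^{4-3i}$, so $K_1=2W_1$. The estimate $K_i\leq2W_i$ is tight at level~$1$, and ``$K_i\ll W_i$'' fails exactly where the margin is thinnest.
\item The fallback to concrete values is never carried out.
\end{itemize}

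The missing idea is to keep the $(1-\beta)$ terms together. Since $v_{a_i}^\pi=W_i+\beta v_{b_{i+1}}^\pi$ in your first case, your error terms recombine exactly:
\[
D=-\beta\Delta_{i+1}^\pi-\bigl(K_i+(1-\beta)v_{a_i}^\pi\bigr).
\]
Likewise, $v_{a_i}^\pi=W_i+\beta v_{a_{i+1}}^\pi$ in the second case gives $D=\beta\Delta_{i+1}^\pi-\bigl(K_i+(1-\beta)v_{a_i}^\pi\bigr)$.

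Now apply Lemma~\ref{lem:values} to $v_{a_i}^\pi$ itself, together with \eqref{eq:separation-b} and \eqref{eq:separation-c}. This gives $(1-\beta)|v_{a_i}^\pi|\leq(1-\beta)R<K_h/3\leq K_i/3$, hence
\[
\tfrac23K_i<K_i+(1-\beta)v_{a_i}^\pi<\tfrac43K_i\leq\tfrac83W_i<3\beta W_i<\beta|\Delta_{i+1}^\pi|.
\]
At $i=h$, use $\beta R>\tfrac{32}{9}W_h$ instead. This is the paper's \eqref{eq:k-margin}, and it fixes both signs of $D$ using only \eqref{eq:separation-a}--\eqref{eq:separation-c}.

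Bounding $(1-\beta)W_i$ and $\beta(1-\beta)v^\pi$ separately, each via $(1-\beta)R$, is exactly where your estimate loses the slack. If you insist on splitting, use the correct direction $K_h\leq2^{-(h-i)}K_i$. For $i<h$ this makes the $(1-\beta)$ terms smaller than $K_i/5$, and $i=h$ can be checked directly.
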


\begin{proof}
Let $\tau$ be any optimal counter strategy against $\sigma$ and set $\pi=(\sigma,\tau)$. By Lemma~\ref{lem:signs}, \eqref{eq:separation-a}, \eqref{eq:separation-b}, \eqref{eq:separation-c}, and \eqref{eq:terminal-values}, we have
\[
\beta|\Delta_{i+1}^\pi|>3\beta W_i>\frac83W_i\geq\frac43K_i\qquad\forall1\leq i\leq h,
\]
where the first inequality follows from $|\Delta_{i+1}^\pi|>3W_{i+1}/4\geq3W_i$ by Lemma~\ref{lem:signs} and \eqref{eq:separation-b} when $i<h$, and from $\Delta_{h+1}^\pi=R\geq4W_h>3W_h$ by \eqref{eq:terminal-values} and \eqref{eq:separation-c} when $i=h$. The second and third inequalities follow from $\beta>8/9$ and $K_i\leq2W_i$ by \eqref{eq:separation-a}, respectively. Moreover, Lemma~\ref{lem:values}, \eqref{eq:separation-b}, and \eqref{eq:separation-c} imply
\[
(1-\beta)|v_{a_i}^\pi|\leq(1-\beta)R<K_h/3\leq K_i/3,
\]
where the first inequality follows from $|v_{a_i}^\pi|\leq R$, the second inequality follows from \eqref{eq:separation-c}, and the last inequality follows from \eqref{eq:separation-b}. Adding $K_i$ to $-K_i/3<(1-\beta)v_{a_i}^\pi<K_i/3$ yields
\begin{equation}\label{eq:k-margin}
\frac23K_i<K_i+(1-\beta)v_{a_i}^\pi<\frac43K_i.
\end{equation}
In particular, $0<K_i+(1-\beta)v_{a_i}^\pi<4K_i/3<\beta|\Delta_{i+1}^\pi|$ by the preceding bounds.

Fix $1\leq i\leq h$ and suppose that $\sigma_{a_i}=(a_i,p_i)$, it follows from Lemma~\ref{lem:selected-actions} that $\tau_{p_i}=(p_i,a_{i+1})$. By \eqref{eq:value} and the construction of $\mcG_h$, we have $v_{a_i}^\pi=\gamma v_{p_i}^\pi=W_i+\beta v_{a_{i+1}}^\pi$. This together with \eqref{eq:reduced} yields
\begin{align*}
&\gamma(r_{(q_i,b_{i+1})}^\pi-r_{(q_i,a_i)}^\pi)=\gamma c_{(q_i,b_{i+1})}-\gamma c_{(q_i,a_i)}+\gamma^2 v_{b_{i+1}}^\pi-\gamma^2 v_{a_i}^\pi\\
&=W_i+K_i+\beta v_{b_{i+1}}^\pi-\beta v_{a_i}^\pi=-\beta\Delta_{i+1}^\pi+K_i+(1-\beta)v_{a_i}^\pi,
\end{align*}
where the first equality follows from \eqref{eq:reduced}, the second equality follows from the construction of $\mcG_h$ and $\beta=\gamma^2$, and the last equality follows from \eqref{eq:recursive1} and $\Delta_{i+1}^\pi=v_{a_{i+1}}^\pi-v_{b_{i+1}}^\pi$. The preceding bounds imply that this difference is negative when $\Delta_{i+1}^\pi>0$ and positive when $\Delta_{i+1}^\pi<0$. By the optimality of $\tau$, one has $\tau_{q_i}=(q_i,a_i)$ if and only if $\Delta_{i+1}^\pi>0$, or equivalently, $\Delta_i^\pi>0$ by Lemma~\ref{lem:signs}.

Now suppose that $\sigma_{a_i}=(a_i,q_i)$. The same calculation yields
\[
\gamma\bigl(r_{(p_i,a_{i+1})}^\pi-r_{(p_i,a_i)}^\pi\bigr)
=\beta\Delta_{i+1}^\pi+K_i+(1-\beta)v_{a_i}^\pi,
\]
where the sign of the first term changes because $v_{a_i}^\pi=W_i+\beta v_{b_{i+1}}^\pi$ by \eqref{eq:value} and $\tau_{q_i}=(q_i,b_{i+1})$ from Lemma~\ref{lem:selected-actions}. The same bounds imply that the difference is negative when $\Delta_{i+1}^\pi<0$ and positive when $\Delta_{i+1}^\pi>0$. Thus $\tau_{p_i}=(p_i,a_i)$ if and only if $\Delta_{i+1}^\pi<0$, or equivalently, $\Delta_i^\pi>0$ by Lemma~\ref{lem:signs}. This proves the asserted choices for every optimal counter strategy.

Since all optimal counter strategies have the same value vector and Lemma~\ref{lem:selected-actions} and the nonzero reduced-cost differences at $p_i$ and $q_i$ established above uniquely determine $\tau_{p_i}$ and $\tau_{q_i}$ for every $1\leq i\leq h$, the action at $b_i$ is unique because $\Delta_{i+1}^\pi\neq0$ by Lemma~\ref{lem:signs} and \eqref{eq:terminal-values}. Therefore the optimal counter strategy is unique. Hence the proof is complete.
\end{proof}

Lemmas~\ref{lem:selected-actions} and~\ref{lem:counter strategy} explain the cycle creation and cancellation in our construction. An unselected return action can become part of a cycle after Player~1 switches. The optimal counter strategy then changes that return action, so the cycle disappears. The sign relation in Lemma~\ref{lem:signs} determines when such a switch is improving and will govern the subsequent strategy sequence.

\needspace{12\baselineskip}
\begin{lemma}\label{lem:pivot}
Consider the TBDG $\mcG_h$ defined in Subsection~\ref{subsec:construction}. For any strategy $\sigma$ of Player~1, let $\pi=(\sigma,\pi^2(\sigma))$. For all $1\leq i\leq h$ and $s\in\{p_i,q_i\}$ such that $\sigma_{a_i}\neq(a_i,s)$, we have
\begin{equation}\label{eq:actual-reduced}
r_{(a_i,s)}^\pi=
\begin{cases}
-K_i-(1-\beta)v_{a_i}^\pi,&\Delta_i^\pi>0,\\
\beta|\Delta_{i+1}^\pi|,&\Delta_i^\pi<0.
\end{cases}
\end{equation}
If an improving action exists, the unique Dantzig pivot is at
\begin{equation}\label{eq:pivot-index}
i=\min\{j:1\leq j\leq h,\ \Delta_j^\pi>0\}.
\end{equation}
\end{lemma}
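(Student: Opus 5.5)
The plan is to compute the reduced cost directly from \eqref{eq:reduced} and then read off its value from the optimal counter strategy described in Lemmas~\ref{lem:selected-actions} and~\ref{lem:counter strategy}. Fix $1\leq i\leq h$ and let $s\in\{p_i,q_i\}$ be the unselected successor of $a_i$. Since $c_{(a_i,s)}=0$, the definition gives
\[
r_{(a_i,s)}^\pi=\gamma v_s^\pi-v_{a_i}^\pi .
\]
So everything reduces to identifying $\pi_s$ and using \eqref{eq:value} at $s$.

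\textbf{Case $\Delta_i^\pi>0$.} By Lemma~\ref{lem:counter strategy}, $\pi_s=(s,a_i)$. Then \eqref{eq:value} gives $v_s^\pi=-K_i/\gamma+\gamma v_{a_i}^\pi$, hence $\gamma v_s^\pi=-K_i+\beta v_{a_i}^\pi$. Substituting, $r_{(a_i,s)}^\pi=-K_i-(1-\beta)v_{a_i}^\pi$.

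\textbf{Case $\Delta_i^\pi<0$.} By Lemma~\ref{lem:counter strategy}, $\pi_s$ is the forward action. First take $\sigma_{a_i}=(a_i,p_i)$ and $s=q_i$. Then $\gamma v_{q_i}^\pi=W_i+\beta v_{b_{i+1}}^\pi$. Combined with \eqref{eq:recursive1}, this gives $r^\pi_{(a_i,q_i)}=-\beta\Delta_{i+1}^\pi$. Lemma~\ref{lem:signs} gives $\operatorname{sgn}\Delta_{i+1}^\pi=\operatorname{sgn}\Delta_i^\pi<0$, so this equals $\beta|\Delta_{i+1}^\pi|$. Now take $\sigma_{a_i}=(a_i,q_i)$ and $s=p_i$. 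Symmetrically, $r^\pi_{(a_i,p_i)}=\gamma v_{p_i}^\pi-v_{a_i}^\pi=\beta\Delta_{i+1}^\pi$. Here Lemma~\ref{lem:signs} gives $\operatorname{sgn}\Delta_{i+1}^\pi=-\operatorname{sgn}\Delta_i^\pi>0$, so again $r^\pi_{(a_i,p_i)}=\beta|\Delta_{i+1}^\pi|$. Note that $\Delta_i^\pi\neq0$ by Lemma~\ref{lem:signs}, so the two cases are exhaustive. This proves \eqref{eq:actual-reduced}.

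\textbf{Locating the pivot.} The Player~1 actions with possibly nonzero reduced cost are exactly the unselected actions $(a_i,s)$, one per level $1\leq i\leq h$. Every selected action has reduced cost zero, and $a_{h+1}$ has only its self-loop.
\begin{itemize}
\item When $\Delta_i^\pi<0$, the reduced cost $\beta|\Delta_{i+1}^\pi|$ is positive, so that action is not improving.
\item When $\Delta_i^\pi>0$, \eqref{eq:k-margin} gives $-\tfrac43K_i<r^\pi_{(a_i,s)}<-\tfrac23K_i$, so that action is improving.
\end{itemize}
Hence the improving actions are exactly those at levels $j$ with $\Delta_j^\pi>0$.

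It remains to show that the smallest such level strictly wins the $\Argmin$. Take two improving levels $j<j'$. By \eqref{eq:separation-b}, $K_j\geq2^{j'-j}K_{j'}\geq2K_{j'}$. Therefore
\[
r^\pi_{(a_j,\cdot)}<-\tfrac23K_j\leq-\tfrac43K_{j'}<r^\pi_{(a_{j'},\cdot)} .
\]
So the minimizer is unique and lies at the level given by \eqref{eq:pivot-index}.

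The main obstacle is this final comparison, since it must produce a strict inequality with no room for ties. The plan is to take strictness from the strict bounds in \eqref{eq:k-margin}, because the geometric decay of the $K_i$ in \eqref{eq:separation-b} only gives $K_j\geq2K_{j+1}$ with equality permitted.
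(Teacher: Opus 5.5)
Your proposal is correct and follows essentially the same route as the paper's proof. It computes $r_{(a_i,s)}^\pi=\gamma v_s^\pi-v_{a_i}^\pi$ by identifying $\pi_s$ through Lemmas~\ref{lem:selected-actions}, \ref{lem:counter strategy}, and~\ref{lem:signs} in the same three cases, and it obtains a strict, unique Dantzig minimizer from the same chain $r<-\tfrac23K_j\leq-\tfrac43K_{j'}<r$, built from \eqref{eq:k-margin} and \eqref{eq:separation-b}.
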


\begin{proof}
Fix $1\leq i\leq h$, and let $s\in\{p_i,q_i\}$ satisfy $\sigma_{a_i}\neq(a_i,s)$. By Lemma~\ref{lem:signs}, $\Delta_i^\pi\neq0$ and suppose that $\Delta_i^\pi>0$. By Lemma~\ref{lem:counter strategy}, $\pi_s=(s,a_i)$ and therefore
\[
r_{(a_i,s)}^\pi=c_{(a_i,s)}+\gamma v_s^\pi-v_{a_i}^\pi=\gamma(-K_i/\gamma+\gamma v_{a_i}^\pi)-v_{a_i}^\pi=-K_i-(1-\beta)v_{a_i}^\pi,
\]
where the first equality follows from \eqref{eq:reduced}, the second equality follows from \eqref{eq:value}, $\pi_s=(s,a_i)$, $c_{(a_i,s)}=0$, and $c_{(s,a_i)}=-K_i/\gamma$, and the last equality follows from $\beta=\gamma^2$.

Now suppose that $\Delta_i^\pi<0$ and $\sigma_{a_i}=(a_i,p_i)$. Then $s=q_i$, and Lemmas~\ref{lem:selected-actions}, \ref{lem:signs} and~\ref{lem:counter strategy} imply $\pi_{p_i}=(p_i,a_{i+1})$, $\Delta_{i+1}^\pi<0$ and $\pi_{q_i}=(q_i,b_{i+1})$, respectively. Using these, \eqref{eq:value}, \eqref{eq:reduced} and the construction of $\mcG_h$, one has
\[
r_{(a_i,q_i)}^\pi=c_{(a_i,q_i)}+\gamma v_{q_i}^\pi-v_{a_i}^\pi=W_i+\beta v_{b_{i+1}}^\pi-(W_i+\beta v_{a_{i+1}}^\pi)=-\beta\Delta_{i+1}^\pi=\beta|\Delta_{i+1}^\pi|>0,
\]
where the first equality follows from \eqref{eq:reduced}, the second equality follows from \eqref{eq:value}, $\pi_{q_i}=(q_i,b_{i+1})$ and the construction of $\mcG_h$, and the remaining equalities and the inequality follow from the definition of $\Delta_{i+1}^\pi$, $\Delta_{i+1}^\pi<0$, and $\beta>0$.

Similarly, suppose that $\Delta_i^\pi<0$ and $\sigma_{a_i}=(a_i,q_i)$. Then $s=p_i$, and the same argument applies with $p_i$ and $q_i$ interchanged and $a_{i+1}$ and $b_{i+1}$ interchanged. In this case, Lemma~\ref{lem:signs} yields $\Delta_{i+1}^\pi>0$, and hence $r_{(a_i,p_i)}^\pi=\beta\Delta_{i+1}^\pi=\beta|\Delta_{i+1}^\pi|>0$. Thus \eqref{eq:actual-reduced} holds.

By \eqref{eq:k-margin} and \eqref{eq:actual-reduced}, $r_{(a_i,s)}^\pi<0$ if and only if $\Delta_i^\pi>0$. For any $1\leq i<j\leq h$ and $s\in\{p_i,q_i\}$, $t\in\{p_j,q_j\}$ satisfying $r_{(a_i,s)}^\pi<0$ and $r_{(a_j,t)}^\pi<0$, we have
\[
r_{(a_i,s)}^\pi< -\frac23K_i\leq-\frac43K_j<r_{(a_j,t)}^\pi<0,
\]
where the first and third inequalities follow from \eqref{eq:k-margin} and \eqref{eq:actual-reduced}, the second inequality follows from $K_i\geq2^{j-i}K_j\geq2K_j$ by \eqref{eq:separation-b}, and the last inequality holds by assumption. Each $a_i$ with $1\leq i\leq h$ has exactly one unselected action, and every selected action has reduced cost zero by \eqref{eq:value} and \eqref{eq:reduced}. The only action at $a_{h+1}$ also has reduced cost zero by \eqref{eq:reduced} and \eqref{eq:terminal-values}. Thus, if an improving action exists, the unique minimizer of $r_a^\pi$ over $a\in\mcA^1$ has source $a_i$ with $i$ given by \eqref{eq:pivot-index}.
\end{proof}

\subsection{The exponential lower bound on TBDG with unrestricted costs}\label{subsec:counter}

We next use Lemmas~\ref{lem:signs} and~\ref{lem:pivot} to analyze the strategy sequence. The following lemma establishes an identity for the sum of $2^{j-1}$ over the levels with $\Delta_j^{\pi_k}>0$. Theorem~\ref{thm:real} derives the iteration count from this identity. The pivot sequence is described in the discussion following the theorem.

\begin{lemma}\label{lem:counter}
Consider the TBDG $\mcG_h$ defined in Subsection~\ref{subsec:construction}, and initialize Algorithm~\ref{alg:ssi} by \eqref{eq:initial}. Then we have
\begin{equation}\label{eq:counter}
C_k:=\sum_{1\leq j\leq h,\ \Delta_j^{\pi_k}>0}2^{j-1}
=2^h-1-k\qquad \forall0\leq k\leq2^h-1.
\end{equation}
\end{lemma}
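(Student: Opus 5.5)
We will prove \eqref{eq:counter} by induction on $k$. The basic tool is Lemma~\ref{lem:signs}, read as a parity rule. Set $x_j=1$ if $\sigma_{a_j}=(a_j,q_j)$ and $x_j=0$ otherwise. Since $\Delta_{h+1}^\pi=R>0$ by \eqref{eq:terminal-values}, backward application of Lemma~\ref{lem:signs} gives
\[
\Delta_j^\pi>0\iff\sum_{l=j}^hx_l\ \text{is even}.
\]
This holds for every $\sigma$ paired with its (unique, by Lemma~\ref{lem:counter strategy}) optimal counter strategy. So the set of positive levels, and hence $C_k$, depends only on $\sigma_k$.

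\emph{Base case.} For $k=0$ the initialization \eqref{eq:initial} gives $x\equiv0$. Then every suffix sum is $0$, so $\Delta_j^{\pi_0}>0$ for all $j$, and $C_0=\sum_{j=1}^h2^{j-1}=2^h-1$.

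\emph{Inductive step.} Assume $C_k=2^h-1-k$ with $k<2^h-1$. Then $C_k>0$, so some $\Delta_j^{\pi_k}>0$. By Lemma~\ref{lem:pivot} and \eqref{eq:k-margin}, an improving action exists, the algorithm does not terminate, and the unique pivot is at $a_i$ with $i=\min\{j:\Delta_j^{\pi_k}>0\}$. The pivot switches the single unselected action at $a_i$, so it flips $x_i$ and leaves all other $x_l$ unchanged. We then read off the new signs from the parity rule:
\begin{itemize}
\item For $j>i$ the suffix sums are unchanged, so the signs are unchanged.
\item For $j\le i$ every suffix sum changes parity, so every sign flips.
\end{itemize}
Before the pivot, level $i$ is positive and levels $1,\dots,i-1$ are negative by minimality of $i$. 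After the pivot, level $i$ becomes negative and levels $1,\dots,i-1$ become positive. Therefore
\[
C_{k+1}=C_k-2^{i-1}+\sum_{j=1}^{i-1}2^{j-1}=C_k-1,
\]
which is exactly the binary-counter decrement.

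\emph{Range of $k$.} The statement ranges over $0\le k\le 2^h-1$. Since $C_k>0$ for every $k<2^h-1$, Algorithm~\ref{alg:ssi} has not terminated at those indices, so $\pi_{k+1}$ is defined and the induction reaches $k=2^h-1$. That is all the lemma requires. At $k=2^h-1$ we get $C_k=0$, so all $\Delta_j<0$ and by \eqref{eq:actual-reduced} no improving action remains; this is left to Theorem~\ref{thm:real}.

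The only delicate point is the sign bookkeeping in the parity rule. The other ingredients are already supplied by the uniqueness of the counter strategy and the unique-pivot statement of Lemma~\ref{lem:pivot}, so no tie-breaking ambiguity arises.
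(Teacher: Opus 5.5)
Your proof is correct and follows the same route as the paper. You induct on $k$, use Lemma~\ref{lem:signs} to get $\Delta_j^{\pi_0}>0$ for all $j$, and use Lemma~\ref{lem:pivot} to place the unique pivot at the lowest positive level $i$; that switch makes level $i$ negative and levels $1,\dots,i-1$ positive, and leaves the higher levels unchanged. Your parity rule $\Delta_j^\pi>0\iff\sum_{l=j}^h x_l$ even is just an explicit closed form of the sign propagation that the paper uses implicitly when it states the new signs below level $i$.
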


\begin{proof}
We prove \eqref{eq:counter} by induction on $k$. It follows from Lemma~\ref{lem:signs} that for every $1\leq j\leq h$, $\operatorname{sgn}\Delta_j^{\pi_0}=\operatorname{sgn}\Delta_{j+1}^{\pi_0}$ if $(\sigma_0)_{a_j}=(a_j,p_j)$, and $\operatorname{sgn}\Delta_j^{\pi_0}=-\operatorname{sgn}\Delta_{j+1}^{\pi_0}$ otherwise. By \eqref{eq:initial} and \eqref{eq:terminal-values}, $(\sigma_0)_{a_j}=(a_j,p_j)$ for every $1\leq j\leq h$ and $\Delta_{h+1}^{\pi_0}=R>0$, and therefore $\Delta_j^{\pi_0}>0$ for every $1\leq j\leq h$. Using this and the definition of $C_0$, we have $C_0=2^h-1$, which proves the result for $k=0$.

Suppose for induction that \eqref{eq:counter} holds for some integer $k$ with $0\leq k<2^h-1$. Then $C_k>0$, so \eqref{eq:k-margin} and \eqref{eq:actual-reduced} imply that an improving action exists. Let $i=\min\{j:1\leq j\leq h,\ \Delta_j^{\pi_k}>0\}$. By Lemma~\ref{lem:pivot} and Algorithm~\ref{alg:ssi}, the next pivot changes only $(\sigma_k)_{a_i}$. Since all the levels after level $i$ is not changed, we have
\[
v_{a_j}^{\pi_{k+1}}=v_{a_j}^{\pi_k},\qquad
v_{b_j}^{\pi_{k+1}}=v_{b_j}^{\pi_k}\qquad \forall j=i+1,\ldots,h+1.
\]
This together with Lemma~\ref{lem:signs} yields $\operatorname{sgn}\Delta_i^{\pi_{k+1}}=-\operatorname{sgn}\Delta_i^{\pi_k}=-1$. Using these, $\Delta_j^{\pi_k}<0$ for every $1\leq j<i$ and the pivot changes only $(\sigma_k)_{a_i}$, one has
\[
\operatorname{sgn}\Delta_j^{\pi_{k+1}}=
\begin{cases}
1,&1\leq j<i,\\
-1,&j=i,\\
\operatorname{sgn}\Delta_j^{\pi_k},&i<j\leq h.
\end{cases}
\]
Consequently,
\[
C_{k+1}=C_k-2^{i-1}+\sum_{j=1}^{i-1}2^{j-1}=C_k-1,
\]
which proves \eqref{eq:counter} for $k+1$ and completes the induction.
\end{proof}

We now use \eqref{eq:counter} and Proposition~\ref{prop:reduced-optimality} to establish the iteration count and the optimality of the terminal profile.

\begin{theorem}\label{thm:real}
For every integer $h\geq2$, suppose we run Algorithm \ref{alg:ssi} initialized by \eqref{eq:initial} on the game $\mcG_h$ defined in Section~\ref{subsec:construction}. Then the following statements hold.
\begin{enumerate}[label=(\roman*)]
\item The game $\mcG_h$ has $n=4h+2$ states and $m=8h+2$ actions. All costs are rational and finite. Each cost and the discount factor have $O(h)$ binary encoding length.
\item Player~2 has a unique optimal counter strategy for every strategy of Player~1. Algorithm~\ref{alg:ssi}, initialized by \eqref{eq:initial}, has a unique Dantzig pivot at every nonterminal iterate and performs exactly $2^h-1$ iterations.
\end{enumerate}
\end{theorem}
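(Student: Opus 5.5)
Part~(i) I will settle by direct counting and inspection of the parameters in \eqref{eq:real-parameters}. Table~\ref{tab:edges} lists $4h$ states with two actions each and two terminal states with one action each, so $n=4h+2$ and $m=8h+2$. Since $\delta=1/(100\cdot8^{h-2})$, the discount factor $\gamma=1-\delta$ is rational with numerator and denominator of $O(h)$ bits. The costs are $0$, $-R=-1$, $W_i/\gamma=4^{i-h-1}$ and $-K_i/\gamma=-2^{2-i-2h}$. The $\gamma$ cancels, so every cost is $\pm$ a power of two with exponent $O(h)$, and each has $O(h)$ encoding length. I will also verify \eqref{eq:separation-a}--\eqref{eq:separation-c} from \eqref{eq:real-parameters}, since the earlier lemmas depend on them.

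\begin{itemize}
\item $\beta=\gamma^2>8/9$ because $\delta\le1/100$.
\item $K_i\le2W_i$ reduces to $2^{2-i-2h}\le2\cdot4^{i-h-1}$, i.e.\ $2^{-i}\le2^{2i-1}$, which holds for $i\ge1$.
\item The ratios in \eqref{eq:separation-b} are exactly $4$ and $2$.
\item $R=1\ge4W_h=\gamma$.
\item $3(1-\beta)<6\delta=6/(100\cdot8^{h-2})$, while $K_h=\gamma2^{2-3h}$. Comparing, $6\cdot8^{2-h}/100<\gamma\cdot4\cdot8^{-h}$ is equivalent to $384/100<4\gamma$, which holds.
\item $\eta-\beta=\gamma\delta$, so $4\gamma\delta<W_1=\gamma4^{-h}$ requires $4/(100\cdot8^{h-2})<4^{-h}$, i.e.\ $256\cdot 2^{-h}<100$. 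This holds only for $h\ge2$ with $256/4=64<100$.
\end{itemize}
These checks are routine but must be done once.

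For part~(ii), uniqueness of $\pi^2(\sigma)$ for every $\sigma$ is exactly Lemma~\ref{lem:counter strategy}. For the iteration count I will use Lemma~\ref{lem:counter}. For $0\le k<2^h-1$ we have $C_k=2^h-1-k>0$, so some $\Delta_j^{\pi_k}>0$. By \eqref{eq:k-margin} and \eqref{eq:actual-reduced}, the reduced cost $r^{\pi_k}_{(a_j,s)}<-2K_j/3<0$, so Algorithm~\ref{alg:ssi} does not terminate at iterate $k$, and Lemma~\ref{lem:pivot} gives a unique Dantzig pivot. At $k=2^h-1$ we have $C_k=0$, so every $\Delta_j^{\pi_k}<0$. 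Then \eqref{eq:actual-reduced} makes every unselected Player~1 action have reduced cost $\beta|\Delta_{j+1}^{\pi_k}|>0$, and selected actions (including the one at $a_{h+1}$) have reduced cost zero. Hence the minimum reduced cost is $0$ and the algorithm terminates. So the algorithm performs exactly $2^h-1$ pivots, and Lemma~\ref{lem:counter} applies along the whole run because the iterates are uniquely determined.

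I will also note optimality of the output. Player~2's reduced costs are $\le0$ by Proposition~\ref{prop:reduced-optimality}, since $\tau$ is an optimal counter strategy. Player~1's reduced costs are $\ge0$, as just shown. Proposition~\ref{prop:reduced-optimality} then shows that the returned profile is optimal.

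The only real obstacle is the parameter verification, specifically the inequality $4(\eta-\beta)R<W_1$. It is where the choice $\delta\sim8^{-h}$ is tight, and it is where the hypothesis $h\ge2$ enters.
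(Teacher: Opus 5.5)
Your proposal is correct and follows the paper's proof essentially step for step: uniqueness from Lemma~\ref{lem:counter strategy}, non-termination for $C_k>0$ and termination at $C_k=0$ from Lemmas~\ref{lem:pivot} and~\ref{lem:counter} with \eqref{eq:k-margin} and \eqref{eq:actual-reduced}, and optimality of the output from Proposition~\ref{prop:reduced-optimality}; your explicit check of \eqref{eq:separation-a}--\eqref{eq:separation-c} adds a verification that the paper only asserts. That check has two harmless slips: $K_i\le2W_i$ reduces to $2^{2-i}\le2^{2i-1}$ (with equality at $i=1$), not $2^{-i}\le2^{2i-1}$, and the condition that actually forces $\delta=\Theta(8^{-h})$ is $3(1-\beta)R<K_h$, because $4(\eta-\beta)R<W_1$ only requires $\delta<4^{-h-1}$ (and $h\geq2$ is also used for $\beta>8/9$).
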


\begin{proof}
The first statement follows directly from the construction of $\mcG_h$; see Table~\ref{tab:edges} and~\eqref{eq:real-parameters}.

The uniqueness of the optimal counter strategy and of every improving Dantzig pivot follows from Lemmas~\ref{lem:counter strategy} and~\ref{lem:pivot}, respectively. By Lemma~\ref{lem:counter}, $C_k=2^h-1-k$ for every $0\leq k\leq2^h-1$. Moreover, \eqref{eq:k-margin} and \eqref{eq:actual-reduced} imply that $\min_{a\in\mcA^1}r_a^{\pi_k}<0$ if and only if $C_k>0$. Hence an improving pivot occurs for every $0\leq k<2^h-1$. At $k=2^h-1$, we have $C_k=0$, and thus $r_a^{\pi_k}\geq0$ for every $a\in\mcA^1$. Since $\tau_k$ is an optimal counter strategy against $\sigma_k$, Proposition~\ref{prop:reduced-optimality} implies $r_a^{\pi_k}\leq0$ for every $a\in\mcA^2$. The same proposition therefore implies that $\pi_k$ is optimal. In addition, $\min_{a\in\mcA^1}r_a^{\pi_k}=0$ when $k=2^h-1$, so Algorithm~\ref{alg:ssi} terminates. It therefore performs exactly $2^h-1$  iterations, which proves statement~(ii).
\end{proof}

Theorem~\ref{thm:real} immediately implies the following iteration lower bound for general TBDGs.

\begin{corollary}\label{cor:real-lower}
Algorithm~\ref{alg:ssi} has a worst-case iteration lower bound of $2^{\Omega(n)}$ on general TBDGs. In particular, it is not strongly polynomial on this class.
\end{corollary}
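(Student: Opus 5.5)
The plan is to read the corollary off Theorem~\ref{thm:real} by expressing the iteration count in terms of $n$ and by checking that the input size is polynomial. Given any $n$, we choose $h$ as a function of $n$ and use the game $\mcG_h$ (padded if needed). By Theorem~\ref{thm:real}(i), $n=4h+2$, so $h=(n-2)/4$. By Theorem~\ref{thm:real}(ii), Algorithm~\ref{alg:ssi} started from \eqref{eq:initial} performs exactly $2^h-1=2^{(n-2)/4}-1$ iterations, which is $2^{\Omega(n)}$.

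For values of $n$ not of the form $4h+2$, take $h=\lfloor (n-2)/4\rfloor\geq2$. Add up to three isolated Player~2 states, each with a single zero-cost self-loop. These states do not interact with $\mcG_h$. Player~2 has no choice at them. Their values are zero. They contribute no Player~1 actions, so the counter strategy, the reduced costs of all Player~1 actions, and the Dantzig pivots are the same as in $\mcG_h$. The iteration count therefore stays at $2^h-1=2^{\Omega(n)}$ for every $n\geq10$. This shows that the worst case over $n$-state TBDGs is $2^{\Omega(n)}$. The bound is deterministic because the pivot and the counter strategy are unique at every iterate, so no choice of tie-breaking can shorten the run.

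For the claim of non-strong-polynomiality, recall that a strongly polynomial bound would require the number of iterations, and hence the number of arithmetic operations, to be bounded by a polynomial in $n$ and $m$ alone. The number of bits of the input must also stay polynomially bounded. Here $m=8h+2=O(n)$. By Theorem~\ref{thm:real}(i), every cost and $\gamma$ have $O(h)=O(n)$ bits, so the input size is $O(n^2)$. Any polynomial in $n$ and $m$ is therefore eventually smaller than $2^{(n-2)/4}-1$. This contradicts strong polynomiality, and in fact the iteration count is exponential even in the full input length.

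The only step that needs care is the padding for general $n$, together with the remark that the added states leave the pivot sequence unchanged. Everything else is direct substitution into Theorem~\ref{thm:real}.
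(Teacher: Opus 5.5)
Your proposal is correct and follows the paper's route: the paper also reads the corollary directly off Theorem~\ref{thm:real} by substituting $n=4h+2$ into the exact count $2^h-1$. Your padding with isolated single-action Player~2 states is a harmless extra the paper omits. One side remark is loose: the input bit length is $N=\Theta(n^2)$, so $2^{(n-2)/4}-1$ is $2^{\Omega(\sqrt N)}$, which is superpolynomial but not exponential in the full input length. That remark is not needed for the corollary.
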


The analyses in~\cite[Lemma~3.5]{postye2015} and~\cite[Lemma~6]{meisunye2026} show that creating a new cycle in a deterministic MDP, or a new Player~1 cycle in a TBDFG, respectively, reduces the optimality gap by a factor of at most $1-1/n$. For deterministic MDPs, the simplex method creates a new cycle within $O(n^2m\log n)$ iterations unless it terminates (see~\cite[Lemma~3.4]{postye2015}). In a TBDFG, Player~2 cannot remove a Player~1 cycle. Our construction instead creates mixed-player cycles that are canceled by Player~2's optimal counter strategy, so no new cycle remains after a complete iteration.

Specifically, the costs $W_i$ determine the successor-value ordering, while the decreasing $K_i$ make the smallest level with $\Delta_i^{\pi_k}>0$ the unique Dantzig pivot. Before an improving pivot $(a_i,s)$, where $s\in\{p_i,q_i\}$, the unique optimal counter strategy selects $(s,a_i)$. The pivot therefore creates a two-action cycle. Its value $-K_i/(1-\beta)<-3R$ forces the next optimal counter strategy to select the action from $s$ to the next level, as established in Lemmas~\ref{lem:selected-actions} and~\ref{lem:counter strategy}. Thus every completed profile has only the terminal self-loop as a cycle. By \eqref{eq:pivot-index} and \eqref{eq:counter}, the source index of pivot $k$ is $1+\nu_2(C_{k-1})=1+\nu_2(2^h-k)=1+\nu_2(k)$, where $1\leq k\leq2^h-1$ and $\nu_2(k)=\max\{r\in\mathbb Z_{\geq0}:2^r\mid k\}$. Hence level~$i$ pivots $2^{h-i}$ times, with its two actions alternating. Exactly $2h-1$ distinct return cycles are formed, and the cycle through $(a_1,p_1)$ is formed $2^{h-2}$ times. The separated cost scales enforce the counter identity \eqref{eq:counter}, so these creations and cancellations continue for $2^h-1=2^{\Omega(n)}$ iterations even though no new cycle survives an optimal counter strategy computation.

\subsection{A two-level example of the counterexample}\label{subsec:example}

In this subsection we illustrate the construction with $h=2$, which has ten states. Choose
\begin{equation}\label{eq:example-parameters}
\gamma=\frac{99}{100},\qquad
\frac{W_1}{\gamma}=1,\quad \frac{W_2}{\gamma}=4,\quad
\frac{K_1}{\gamma}=2,\quad \frac{K_2}{\gamma}=1,\quad R=16.
\end{equation}
The parameters in \eqref{eq:example-parameters} are obtained from \eqref{eq:real-parameters} with $h=2$ by multiplying all costs by $16$ and keeping the same discount factor. Figure~\ref{fig:two-level} shows the states, actions, and costs of the resulting game, where  Every arrow is a single action and its label is the immediate cost.
\begin{figure}[H]
\centering
\begin{tikzpicture}[>=Stealth,line width=.65pt,
every node/.style={font=\small},
state/.style={draw,minimum width=9mm,minimum height=9mm,inner sep=0pt},
p1/.style={state,rectangle},
p2/.style={state,circle},
aux/.style={p2},
pathL/.style={->},
pathB/.style={->},
cost/.style={fill=white,inner sep=1.5pt}]
\node[p1] (a1) at (0,0) {$a_1$};
\node[p2] (b1) at (6,0) {$b_1$};
\node[aux] (auxp1) at (-3,0) {$p_1$};
\node[aux] (auxq1) at (3,0) {$q_1$};
\node[p1] (a2) at (0,-1.8) {$a_2$};
\node[p2] (b2) at (6,-1.8) {$b_2$};
\node[aux] (auxp2) at (-3,-1.8) {$p_2$};
\node[aux] (auxq2) at (3,-1.8) {$q_2$};
\node[p1] (a3) at (0,-3.6) {$a_3$};
\node[p2] (b3) at (6,-3.6) {$b_3$};
\draw[pathB] (b1) -- node[cost,pos=.22,above] {$0$} (a2);
\draw[pathB] (b1) -- node[cost,right] {$0$} (b2);
\draw[pathB] (b2) -- node[cost,pos=.22,above] {$0$} (a3);
\draw[pathB] (b2) -- node[cost,right] {$0$} (b3);
\draw[->] (a1) -- node[cost,below] {$0$} (auxp1);
\draw[->] (a1) -- node[cost,below] {$0$} (auxq1);
\draw[pathL] (auxp1) to[bend left=20]
node[cost,pos=.3,above] {$-2$} (a1);
\draw[pathL] (auxq1) to[bend right=20]
node[cost,pos=.3,above] {$-2$} (a1);
\draw[pathL] (auxp1) -- node[cost,left] {$1$} (a2);
\draw[pathL] (auxq1) -- node[cost,pos=.68,below] {$1$} (b2);
\draw[->] (a2) -- node[cost,below] {$0$} (auxp2);
\draw[->] (a2) -- node[cost,below] {$0$} (auxq2);
\draw[pathL] (auxp2) to[bend left=20]
node[cost,pos=.3,above] {$-1$} (a2);
\draw[pathL] (auxq2) to[bend right=20]
node[cost,pos=.3,above] {$-1$} (a2);
\draw[pathL] (auxp2) -- node[cost,left] {$4$} (a3);
\draw[pathL] (auxq2) -- node[cost,pos=.68,below] {$4$} (b3);
\draw[pathL] (b3) -- node[cost,above] {$-16$} (a3);
\draw[->] (a3) to[loop left,looseness=6,min distance=11mm]
node[cost,left] {$0$} (a3);
\end{tikzpicture}
\caption{The ten-state example with $\gamma=0.99$.}
\label{fig:two-level}
\end{figure}

Starting from $\sigma_{a_1}=(a_1,p_1)$ and $\sigma_{a_2}=(a_2,p_2)$, Algorithm~\ref{alg:ssi} proceeds as shown in Table~\ref{tab:example-iterations}. Player~2 uses an optimal counter strategy at initialization and after each pivot. The details are shown in the following table.
\begin{table}[H]
\centering\small
\begin{tabular}{cccc}
\toprule
$k$ & $(\sigma_k)_{a_1}$ & $(\sigma_k)_{a_2}$ & Next entering action\\
\midrule
0 & $(a_1,p_1)$ & $(a_2,p_2)$ & $(a_1,q_1)$\\
1 & $(a_1,q_1)$ & $(a_2,p_2)$ & $(a_2,q_2)$\\
2 & $(a_1,q_1)$ & $(a_2,q_2)$ & $(a_1,p_1)$\\
3 & $(a_1,p_1)$ & $(a_2,q_2)$ & Terminate\\
\bottomrule
\end{tabular}
\caption{The strategy sequence and Dantzig pivots in the ten-state example.}
\label{tab:example-iterations}
\end{table}

The three pivots create two-action cycles through $a_1,q_1$, through $a_2,q_2$, and through $a_1,p_1$, respectively. Player~2's subsequent optimal counter strategies select $(q_1,b_2)$, $(q_2,b_3)$, and $(p_1,a_2)$, respectively, and thus cancel these cycles. Hence the initially selected action $(a_1,p_1)$ reenters at the third pivot, after which the algorithm terminates. Every completed profile has only the terminal self-loop at $a_3$ as a cycle.

\section{A superpolynomial lower bound with unit costs}\label{sec:ternary}

We now consider TBDGs with unit weights, namely, action costs in $\{-1,0,1\}$. These costs are natural when the objective counts events. For example, in quantitative controller synthesis against an adversarial environment \cite{bansaletal2021}, assigning cost $1$ to each undesirable transition and cost $0$ to every other transition measures the discounted number of undesirable events. The restriction also permits improved algorithmic guarantees. A strategy-iteration method solves this class in deterministic $n^{O(\sqrt n)}$ time, even when $\gamma$ is arbitrarily close to one (Asadi et al.~\cite{asadietal2024}). It is therefore natural to ask whether unit weights ensure a strongly polynomial bound for Algorithm~\ref{alg:ssi}. In this section, we show that the cycle creation and cancellation mechanism in Section~\ref{sec:real} still yields a superpolynomial iteration lower bound.

We extend the construction in Section~\ref{sec:real} by replacing each action with unrestricted cost by a path whose internal states have one available action. We first construct a polynomial that determines the path costs, then verify the conditions needed to apply the counter strategy and reduced-cost analysis from Subsection~\ref{subsec:comparisons}.

\subsection{The construction of the $h$-level counterexample with unit weights}\label{subsec:paths}

We first construct a polynomial with coefficients in $\{-1,0,1\}$ and a high-order zero at one. Evaluating the polynomial at suitable powers of $\gamma$ then produces costs $W_i$ increasing with $i$ and magnitudes $K_i$ decreasing with $i$ such that \eqref{eq:separation-a}, \eqref{eq:separation-b} and \eqref{eq:separation-c} hold. We extend the paths with zero-cost actions so that all paths from one level to the next and all return cycles have the same total discount, and hence $\beta=\eta$.

For all $h\geq2$, we set
\begin{equation}\label{eq:qd}
\theta=8h,\qquad D=\theta^3.
\end{equation}
We follow a similar polynomial construction in~\cite[Theorem~2.7]{borweinetal1999} and~\cite[Lemma~4.8]{asadietal2024} to construct our desired polynomial below.
\begin{lemma}\label{lem:polynomial}
For $\theta,D$ in \eqref{eq:qd}, there exist a polynomial function
\begin{equation}\label{eq:polynomial}
F(x)=\sum_{j=0}^D f_jx^j,\qquad f_j\in\{-1,0,1\},
\end{equation}
an integer $\rho$ with $\theta\leq\rho\leq D$, and a constant $\kappa>0$ such that
\[
\left|\frac{F(1-t)}{\kappa t^\rho}-1\right|\leq2^{D+2}t,\qquad 0<t\leq1/2.
\]
\end{lemma}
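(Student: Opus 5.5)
The plan is a pigeonhole argument in the spirit of \cite[Theorem~2.7]{borweinetal1999}. First build a nonzero polynomial with coefficients in $\{-1,0,1\}$ and degree at most $D$ that vanishes to order at least $\theta$ at $x=1$. Then take $\rho$ to be its exact order of vanishing. Finally, read off $\kappa$ and the error bound from the Taylor expansion at $1$.

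\textbf{Step 1 (existence by pigeonhole).} Consider all $2^{D+1}$ polynomials $P(x)=\sum_{j=0}^D e_jx^j$ with $e_j\in\{0,1\}$. To each one, assign the vector $\bigl(P^{(k)}(1)/k!\bigr)_{0\leq k<\theta}=\bigl(\sum_j e_j\binom jk\bigr)_{k<\theta}$. Each entry is a nonnegative integer at most $\sum_{j\leq D}\binom jk=\binom{D+1}{k+1}\leq (D+1)^{k+1}$. Hence the number of possible vectors is at most $\prod_{k<\theta}\bigl((D+1)^{k+1}+1\bigr)\leq (D+2)^{\theta(\theta+1)/2}$.

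With $D=\theta^3$ and $\theta=8h\geq16$, one checks that $\tfrac{\theta(\theta+1)}2\log_2(\theta^3+2)<\theta^3+1$. This holds because the left side is $O(\theta^2\log\theta)$; the verification for $\theta\geq16$ is routine. So two distinct such polynomials $P\neq Q$ share the same vector. Their difference $F=P-Q$ is nonzero, has coefficients $f_j\in\{-1,0,1\}$ and degree at most $D$, and satisfies $F^{(k)}(1)=0$ for all $k<\theta$.

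\textbf{Step 2 (choice of $\rho$ and $\kappa$).} Write
\[
F(1-t)=\sum_{k=0}^D c_k(-t)^k,\qquad c_k=\sum_{j=0}^D f_j\binom jk\in\mathbb Z .
\]
Let $\rho$ be the smallest $k$ with $c_k\neq0$. Step~1 gives $\rho\geq\theta$, and $\rho\leq D$ holds because $F\neq0$ has degree at most $D$. Set $\kappa=(-1)^\rho c_\rho$. If this is negative, replace $F$ by $-F$, which keeps the coefficients in $\{-1,0,1\}$. Then $\kappa$ is a positive integer, so $\kappa\geq1$. This integrality is the key point that lets the error bound avoid any dependence on $\kappa$.

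\textbf{Step 3 (error bound).} For $0<t\leq1/2$,
\[
\left|\frac{F(1-t)}{\kappa t^\rho}-1\right|=\left|\sum_{k>\rho}\frac{c_k}{\kappa}(-t)^{k-\rho}\right|\leq\sum_{k>\rho}|c_k|\,t^{k-\rho}.
\]
Each coefficient satisfies $|c_k|\leq\sum_{j\leq D}\binom jk=\binom{D+1}{k+1}\leq2^{D+1}$. The geometric tail is therefore bounded by $2^{D+1}t/(1-t)\leq2^{D+2}t$.

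\textbf{Main obstacle.} I expect the only step needing real care to be the counting inequality in Step~1. It must hold uniformly for all $h\geq2$, which is exactly what the choice $D=\theta^3$ with $\theta=8h$ is designed to guarantee. I would verify it by bounding $\log_2(\theta^3+2)\leq3\log_2\theta+1$ and comparing $\tfrac{\theta+1}{2}(3\log_2\theta+1)$ with $\theta^2$ for $\theta\geq16$.
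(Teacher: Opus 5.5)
Your proposal is correct and follows essentially the same route as the paper: pigeonhole over $\{0,1\}$ coefficient vectors on the first $\theta$ Taylor coefficients at $1$, $\rho$ taken as the exact vanishing order with a sign flip making $\kappa$ a positive integer, and the hockey-stick bound $\binom{D+1}{k+1}\le 2^{D+1}$ on the remaining coefficients. The differences are only cosmetic, e.g.\ you bound the count by $(D+2)^{\theta(\theta+1)/2}$ where the paper uses $2^\theta(D+1)^{\theta(\theta+1)/2}\le 2^{\theta+\theta^2(\theta+1)/2}$, and you expand $F(1-t)$ directly instead of invoking Taylor's formula.
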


\begin{proof}
For every choice of $u_0,\ldots,u_D\in\{0,1\}$ and every integer $0\leq k<\theta$, the sum $\sum_{j=0}^D u_j\binom jk$ is an integer between $0$ and $\binom{D+1}{k+1}$. Hence the number of possible combinations of values of these $\theta$ sums is at most
\begin{equation}\label{eq:pigeonhole}
\prod_{k=0}^{\theta-1}\left(1+\binom{D+1}{k+1}\right)
\leq2^\theta(D+1)^{\theta(\theta+1)/2}
\leq2^{\theta+\theta^2(\theta+1)/2}<2^{D+1},
\end{equation}
where the last two inequalities follow from $D=\theta^3$ and $\theta=8h\geq16$. Since there are $2^{D+1}$ choices of $u_0,\ldots,u_D$, the pigeonhole principle implies that two distinct choices yield the same sum for every $0\leq k<\theta$. Let $f_j$ be the difference between their $j$th entries for $0\leq j\leq D$. Then the polynomial $F$ in \eqref{eq:polynomial} is nonzero and satisfies
\[
F^{(k)}(1)=k!\sum_{j=k}^D f_j\binom jk=0\qquad\forall 0\leq k<\theta.
\]
Let $\rho$ be the order of the zero of $F$ at $1$. Then $\theta\leq\rho\leq D$. Replacing $F$ by $-F$ if necessary, we may assume that $\kappa:=(-1)^\rho F^{(\rho)}(1)/\rho!\geq1$, since $F^{(\rho)}(1)/\rho!$ is a nonzero integer. Moreover, for every integer $j$ with $0\leq j\leq D$, one has
\[
\frac{|F^{(j)}(1)|}{j!}\leq\sum_{s=j}^D\binom sj=\binom{D+1}{j+1}\leq2^{D+1}.
\]
Taylor's formula and $\kappa\geq1$ therefore imply, for every $0<t\leq1/2$,
\begin{equation}\label{eq:polynomial-error}
\left|\frac{F(1-t)}{\kappa t^\rho}-1\right|
\leq2^{D+1}\sum_{j=1}^{\infty}t^j\leq2^{D+2}t,
\end{equation}
where the last inequality follows from $t/(1-t)\leq2t$.
\end{proof}

Fix one such polynomial. For example, choose the lexicographically first collision in \eqref{eq:pigeonhole}, with binary vectors ordered lexicographically, and then fix the sign. This specifies a family, although the argument does not give a polynomial-time procedure for finding the collision. Existence of the coefficient list is sufficient for a worst-case iteration lower bound.

Fix an integer $h\geq2$ and choose $F$ as in Lemma~\ref{lem:polynomial}. We construct a game $\widetilde{\mcG}_h$ from the states and actions in Subsection~\ref{subsec:construction} by replacing the actions at $p_i,q_i,b_i$, and $b_{h+1}$ with paths. All internal states on these paths are distinct, belong to Player~2, and have one available action. Specifically, for every integer $i$ with $1\leq i\leq h$, choose the parameters as follows:
\begin{equation}\label{eq:ternary-parameters}
\begin{aligned}
&B=2h+1,\qquad L=BD+1,\\
&\delta=\theta^{-4D},\qquad\gamma=1-\delta,\qquad\beta=\eta=\gamma^{L+1},\\
&W_i=\gamma F(\gamma^{h+i}),\qquad K_i=\gamma F(\gamma^{h-i+1}),
\qquad R=F(\gamma^B).
\end{aligned}
\end{equation}
For every integer $i$ with $1\leq i\leq h$, retain the zero-cost actions $(a_i,p_i)$ and $(a_i,q_i)$. Retain also the zero-cost self-loop at $a_{h+1}$. Replace each remaining action by a directed path with the same initial and terminal states. Make the following replacements, where statements~(i)--(iii) apply to every integer $i$ with $1\leq i\leq h$.
\begin{enumerate}[label=(\roman*)]
\item Replace each of $(p_i,a_{i+1})$ and $(q_i,b_{i+1})$ by a path of length $L$. For every integer $j$ with $0\leq j\leq D$, assign cost $f_j$ to the $((h+i)j+1)$-th edge of each path.
\item Replace each of $(p_i,a_i)$ and $(q_i,a_i)$ by a path of length $L$. For every integer $j$ with $0\leq j\leq D$, assign cost $-f_j$ to the $((h-i+1)j+1)$-th edge of each path.
\item Replace each of $(b_i,a_{i+1})$ and $(b_i,b_{i+1})$ by a zero-cost path of length $L+1$.
\item Replace $(b_{h+1},a_{h+1})$ by a path of length $L$. For every integer $j$ with $0\leq j\leq D$, assign cost $-f_j$ to the $(Bj+1)$-th edge.
\end{enumerate}
Assign cost $0$ to every edge whose cost is not specified above. Every specified edge position is at most $BD+1=L$, so these assignments are well defined. By \eqref{eq:polynomial} and \eqref{eq:ternary-parameters}, the discounted costs of the paths in statements~(i)--(iv) are $W_i/\gamma$, $-K_i/\gamma$, $0$, and $-R$, respectively. All individual action costs belong to $\{-1,0,1\}$. The retained action at $a_i$, followed by a path from $p_i$ or $q_i$, consists of $L+1$ actions. The two paths from $b_i$ also have length $L+1$. Hence the paths to the next level and the return cycles have total discount $\beta=\eta=\gamma^{L+1}$, as specified in \eqref{eq:ternary-parameters}. Table~\ref{tab:unit-paths} summarizes the construction, and Figure~\ref{fig:ternary-branch} illustrates the paths at level $i$.

\begin{table}[H]
\centering\small
\setlength{\tabcolsep}{3pt}
\begin{tabular}{lcccccc}
\toprule
State & Owner & Actions & Endpoint & Path length & Internal states & Discounted cost\\
\midrule
$a_i$ & Player~1 & $2$ & $p_i$; $q_i$ & $1$; $1$ & $0$; $0$ & $0$; $0$\\
$p_i$ & Player~2 & $2$ & $a_{i+1}$; $a_i$ & $L$; $L$ & $L-1$; $L-1$ & $W_i/\gamma$; $-K_i/\gamma$\\
$q_i$ & Player~2 & $2$ & $b_{i+1}$; $a_i$ & $L$; $L$ & $L-1$; $L-1$ & $W_i/\gamma$; $-K_i/\gamma$\\
$b_i$ & Player~2 & $2$ & $a_{i+1}$; $b_{i+1}$ & $L+1$; $L+1$ & $L$; $L$ & $0$; $0$\\
$a_{h+1}$ & Player~1 & $1$ & $a_{h+1}$ & $1$ & $0$ & $0$\\
$b_{h+1}$ & Player~2 & $1$ & $a_{h+1}$ & $L$ & $L-1$ & $-R$\\
Internal state & Player~2 & $1$ & Next state & $1$ & $0$ & $-1$, $0$, or $1$\\
\bottomrule
\end{tabular}
\caption{The $h$-level counterexample with unit weights.}
\label{tab:unit-paths}
\end{table}

\begin{figure}[H]
\centering
\begin{tikzpicture}[>=Stealth,line width=.65pt,
every node/.style={font=\small},
state/.style={draw,minimum width=9mm,minimum height=9mm,inner sep=0pt},
p1/.style={state,rectangle},
p2/.style={state,circle},
aux/.style={p2},
pathL/.style={->,double=white,double distance=1pt,
preaction={draw=white,line width=3.2pt}},
pathB/.style={->,dashed},
cost/.style={fill=white,inner sep=1.5pt}]
\node[p1] (a) at (0,-.9) {$a_i$};
\node[aux] (p) at (2.6,0) {$p_i$};
\node[aux] (q) at (2.6,-1.8) {$q_i$};
\node[p1] (anext) at (5.8,0) {$a_{i+1}$};
\node[p2] (bnext) at (5.8,-1.8) {$b_{i+1}$};
\node[p2] (b) at (8.4,-.9) {$b_i$};
\draw[->] (a) -- node[cost,below] {$0$} (p);
\draw[->] (a) -- node[cost,above] {$0$} (q);
\draw[pathL] (p) -- node[cost,above] {$W_i/\gamma$} (anext);
\draw[pathL] (q) -- node[cost,above] {$W_i/\gamma$} (bnext);
\draw[pathL] (p) to[bend right=28] node[cost,above] {$-K_i/\gamma$} (a);
\draw[pathL] (q) to[bend left=28] node[cost,below] {$-K_i/\gamma$} (a);
\draw[pathB] (b) -- node[cost,above] {$0$} (anext);
\draw[pathB] (b) -- node[cost,below] {$0$} (bnext);
\end{tikzpicture}
\caption{The two successor branches at level $i$. In the graph, single arrows denote actions, double arrows denote $L$-edge forced paths, and dashed arrows denote $(L+1)$-edge zero-cost paths. Labels denote discounted path costs.}
\label{fig:ternary-branch}
\end{figure}
To summarize, each of the $4h$ states $a_i,b_i,p_i,q_i$, for $1\leq i\leq h$, has two actions, while both terminal states and all internal states have one. The paths from $p_i$ and $q_i$, those from $b_i$, and the path from $b_{h+1}$ introduce $4h(L-1)$, $2hL$, and $L-1$ internal states, respectively. Thus the game has $n=(6h+1)L+1$ states and $m=n+4h$ actions.

Formally, $\widetilde{\mcG}_h=(\mcS,\mcA,\bfc,\bfP,\gamma)$, where $\mcS$ consists of the original $4h+2$ states and all internal path states. Player~1 controls $\mcS^1=\{a_i:1\leq i\leq h+1\}$, and Player~2 controls $\mcS^2=\mcS\setminus\mcS^1$. The action set $\mcA$ consists of the path edges and the retained actions. Each action $a=(s,t)$ has the cost assigned above and transition column $\bfp_a=\bfe_t$ in $\bfP$. The discount factor is specified in \eqref{eq:ternary-parameters}. No other states or actions are present. We initialize Player~1's strategy by \eqref{eq:initial}.

\subsection{The superpolynomial lower bound on TBDG with unit costs}\label{subsec:ternary-estimates}

We first verify \eqref{eq:separation-a}, \eqref{eq:separation-b}, and~\eqref{eq:separation-c} for $\widetilde{\mcG}_h$, then compare the value and reduced-cost formulas for the two constructions to apply the analysis in Subsection~\ref{subsec:comparisons}.

\begin{lemma}\label{lem:ternary-separation}
Consider the TBDG $\widetilde{\mcG}_h$ defined in Subsection~\ref{subsec:paths}. Its parameters in \eqref{eq:ternary-parameters} satisfy \eqref{eq:separation-a}, \eqref{eq:separation-b}, and~\eqref{eq:separation-c}.
\end{lemma}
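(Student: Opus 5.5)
The plan is to use Lemma~\ref{lem:polynomial} to turn every parameter in \eqref{eq:ternary-parameters} into an explicit power of a small quantity up to a tiny relative error. All of \eqref{eq:separation-a}--\eqref{eq:separation-c} then reduce to comparing powers $k^\rho$ for the integers $k\in\{1,\ldots,2h+1\}$.

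\emph{Step 1 (uniform asymptotics).} For each integer $1\le k\le B=2h+1$, set $t_k=1-\gamma^k$. Bernoulli's inequality gives $k\delta(1-k\delta)\le t_k\le k\delta\le1/2$. Lemma~\ref{lem:polynomial} then gives $F(\gamma^k)=\kappa t_k^\rho(1+\epsilon_k)$ with $|\epsilon_k|\le2^{D+2}k\delta$. Next I compare $t_k^\rho$ with $(k\delta)^\rho$: since $(1-k\delta)^\rho\ge1-\rho k\delta\ge1-DB\delta$,
\[
F(\gamma^k)=\kappa(k\delta)^\rho(1+\epsilon'_k),\qquad|\epsilon'_k|\le c\,2^{D+2}DB\delta
\]
for an absolute constant $c$. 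Because $\delta=\theta^{-4D}$ and $\theta\ge16$, this error is smaller than, say, $2^{-D}$. In particular it is far below $1/100$. Applying this with $k=h+i$, $k=h-i+1$ and $k=B$, I get $W_i\approx\gamma\kappa((h+i)\delta)^\rho$, $K_i\approx\gamma\kappa((h-i+1)\delta)^\rho$ and $R\approx\kappa(B\delta)^\rho$, each up to a factor in $[1-2^{-D},1+2^{-D}]$. Positivity of $W_i,K_i,R$ follows at once.

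\emph{Step 2 (ratio conditions).} Every remaining condition except the one involving $1-\beta$ is a ratio of two such terms, and it reduces to a bound on $(k'/k)^\rho$ with $\rho\ge\theta=8h$.
\begin{itemize}
\item $W_{i+1}/W_i\approx((h+i+1)/(h+i))^\rho\ge(1+1/(2h))^{8h}\ge(9/8)^{8}\cdot$ (more precisely $\ge e^{3}>4$ after accounting for $h\ge2$). The condition $R\ge4W_h$ is identical, with $B/(2h)$ and an extra harmless factor $1/\gamma$.
\item $K_i/K_{i+1}\approx((h-i+1)/(h-i))^\rho\ge(1+1/h)^{8h}>2$.
\item $K_i\le2W_i$ holds because $h-i+1<h+i$, so $K_i/W_i\lesssim(h/(h+1))^{8h}<1$.
\item For \eqref{eq:separation-a}, $\beta=\eta=\gamma^{L+1}\ge1-(L+1)\delta>8/9$.
\item For the last condition of \eqref{eq:separation-c}, $\eta=\beta$ makes the left side $0<W_1$.
\end{itemize}
The multiplicative errors from Step~1 perturb these ratios by at most a factor $(1\pm2^{-D})^2$, which does not affect the strict margins.

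\emph{Step 3 (the condition $3(1-\beta)R<K_h$).} This is the one place where $\delta$ must beat a large power of $B$. We have $1-\beta\le(L+1)\delta\le(BD+2)\delta$ and $K_h/R\approx\gamma B^{-\rho}\ge\tfrac12B^{-D}$. So it suffices that $6(BD+2)B^{D}\delta<1$. Since $B<\theta$ and $D<\theta^{D}$, the left side is at most $12\theta^{2D+2}\theta^{-4D}<1$.

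I expect the main obstacle to be Step~1, namely making the error control genuinely uniform. The relative error must be small simultaneously for the Taylor remainder, which carries the factor $2^{D+2}$, and for replacing $t_k$ by $k\delta$, which raises $(1-k\delta)$ to the huge exponent $\rho\le D$. Both are dominated by $\delta=\theta^{-4D}$, but the constants must be tracked carefully so that the ratio margins in Step~2 (e.g.\ $>4$ versus exactly $4$) survive.
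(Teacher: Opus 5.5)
Your proposal is correct and follows essentially the same route as the paper: the uniform approximation $F(\gamma^d)\approx\kappa(d\delta)^\rho$ for $1\le d\le B$ from Lemma~\ref{lem:polynomial}, then ratio comparisons of the powers $d^\rho$ with $\rho\ge 8h$, and finally the direct estimate $(1-\beta)R/K_h\lesssim L\delta B^{D}$. The paper packages your separate ratio bullets into the single adjacent-ratio bound $F(\gamma^{d+1})/F(\gamma^d)>4$. In your first bullet, the garbled ``$(9/8)^8\cdot$'' remark should be replaced by the clean Bernoulli bound $(1+1/(2h))^{8h}\ge 1+4=5>4$.
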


\begin{proof}
By \eqref{eq:qd} and \eqref{eq:ternary-parameters}, $\theta\geq16$, $B<\theta$, and $L<\theta^4$. For every integer $d$ with $1\leq d\leq B$, let $t_d=1-\gamma^d$. Observe that
\begin{equation}\label{eq:td}
1-\frac{(d-1)\delta}{2}
\leq\frac{t_d}{d\delta}
=\frac1d\sum_{j=0}^{d-1}(1-\delta)^j\leq1.
\end{equation}
Since $\delta=\theta^{-4D}$ and $D\geq5$, we also have
\begin{equation}\label{eq:small-powers}
\delta LB^D<\theta^{4-3D}\leq\theta^{-2D}.
\end{equation}
For every such $d$, $0<t_d\leq B\delta<1/2$, and \eqref{eq:td} implies $0\leq1-(t_d/(d\delta))^\rho\leq\rho(d-1)\delta/2\leq DB\delta/2$. Combining this bound with \eqref{eq:polynomial-error} yields
\begin{equation}\label{eq:uniform-approx}
\left|\frac{F(\gamma^d)}{\kappa(d\delta)^\rho}-1\right|
\leq\left(2^{D+2}+\frac D2\right)B\delta
\leq2^{D+4}B\delta<\frac1{16},
\end{equation}
where the last inequality follows from $2^{D+4}B\delta<2^{5-3D}<1/16$. Thus $F(\gamma^d)>0$ for every integer $d$ with $1\leq d\leq B$. For every integer $d$ with $1\leq d<B$, \eqref{eq:uniform-approx} implies that
\begin{equation}\label{eq:adjacent-ratio}
\frac{F(\gamma^{d+1})}{F(\gamma^d)}
>\frac{15}{17}\left(1+\frac1d\right)^\rho
\geq\frac{15}{17}\left(1+\frac\rho d\right)
\geq\frac{75}{17}>4,
\end{equation}
where we used $d\leq2h$ and $\rho\geq8h$. By \eqref{eq:ternary-parameters} and \eqref{eq:adjacent-ratio}, $W_{i+1}>4W_i$ and $K_i>4K_{i+1}$ for every integer $i$ with $1\leq i<h$. Also, $W_1>4K_1$ and $R>4W_h$. These inequalities establish \eqref{eq:separation-b} and imply $0<K_i\leq K_1<W_1/4\leq W_i/4$ for every integer $i$ with $1\leq i\leq h$.

By \eqref{eq:small-powers}, one has
\[
1-\beta\leq(L+1)\delta\leq2L\delta
\leq2\delta LB^D<2\theta^{-2D}<\frac19.
\]
Hence $8/9<\beta=\eta<1$, which, together with the positivity of $W_i,K_i,R$ and the preceding bounds on $K_i$, proves \eqref{eq:separation-a}. Since $\gamma>15/16$, \eqref{eq:ternary-parameters} and \eqref{eq:uniform-approx} yield
\begin{equation}\label{eq:r-over-k}
\frac{R}{K_h}<\frac{17}{15\gamma}B^\rho<2B^D,
\end{equation}
where the last inequality uses $\rho\leq D$. Combining \eqref{eq:small-powers} and \eqref{eq:r-over-k} with $1-\beta\leq2L\delta$ yields
\[
\frac{(1-\beta)R}{K_h}<4\delta LB^D<4\theta^{-2D}<\frac13.
\]
Together with $R>4W_h$ and $4(\eta-\beta)R=0<W_1$, this proves \eqref{eq:separation-c}. Hence the conclusion follows.
\end{proof}

Every internal state has one available action, so each path acts in the value equations as a single edge with its discounted cost and total discount. Thus the value and reduced-cost formulas have the same form as in Subsection~\ref{subsec:comparisons}, with $\beta=\eta=\gamma^{L+1}$. By Lemma~\ref{lem:ternary-separation}, Lemmas~\ref{lem:selected-actions}--\ref{lem:counter} therefore apply to $\widetilde{\mcG}_h$, with each replaced edge interpreted as its corresponding path. We omit the details.

We can now establish the iteration lower bound for Algorithm~\ref{alg:ssi} on $\widetilde{\mcG}_h$.

\begin{theorem}\label{thm:ternary}
For every integer $h\geq2$, suppose we run Algorithm~\ref{alg:ssi} initialized by \eqref{eq:initial} on the game $\widetilde{\mcG}_h$ defined in Subsection~\ref{subsec:paths}. Then the following statements hold.
\begin{enumerate}[label=(\roman*)]
\item The game $\widetilde{\mcG}_h$ has $n=(6h+1)((2h+1)(8h)^3+1)+1=\Theta(h^5)$ states and $m=n+4h$ actions. All action costs belong to $\{-1,0,1\}$. The discount factor is rational and has $O(h^3\log h)$ binary encoding length.
\item Player~2 has a unique optimal counter strategy for every strategy of Player~1. Algorithm~\ref{alg:ssi}, initialized by \eqref{eq:initial}, has a unique Dantzig pivot at every nonterminal iterate and performs exactly $2^h-1$ iterations.
\end{enumerate}
\end{theorem}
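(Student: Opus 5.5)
The plan is to reduce Theorem~\ref{thm:ternary} to Theorem~\ref{thm:real} by treating each forced path in $\widetilde{\mcG}_h$ as a single edge, so that the analysis of Subsection~\ref{subsec:comparisons} transfers with the parameters in \eqref{eq:ternary-parameters}. Statement~(i) is counting and encoding. The states are the $4h+2$ original states plus the internal states: $4h(L-1)$ on the paths from the $p_i,q_i$, $2hL$ on the paths from the $b_i$, and $L-1$ on the path from $b_{h+1}$. This gives $n=(6h+1)L+1$ with $L=(2h+1)(8h)^3+1$, hence $n=\Theta(h^5)$. Every state has one action except the $4h$ states $a_i,b_i,p_i,q_i$ with $1\leq i\leq h$, each of which has two, so $m=n+4h$. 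Costs lie in $\{-1,0,1\}$ by construction, since $f_j\in\{-1,0,1\}$. The discount factor is $\gamma=1-\theta^{-4D}$ with $\theta=8h$ and $D=\theta^3$, so its numerator and denominator have $O(D\log\theta)=O(h^3\log h)$ bits.

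For statement~(ii), I first make the path-contraction rigorous. Every internal state has a single action, so under any profile $\pi$ the value at the start of a path of length $\ell$ with costs $c_1,\dots,c_\ell$ equals $\sum_{k}\gamma^{k-1}c_k+\gamma^\ell v_{\mathrm{end}}^\pi$. By the coefficient placement and \eqref{eq:polynomial}, this sum is $W_i/\gamma$, $-K_i/\gamma$, $0$, or $-R$, as noted after the construction. The value equations restricted to the original $4h+2$ states then have exactly the form used in Section~\ref{sec:real}. Each edge out of $p_i,q_i$ carries its discounted cost with discount $\gamma^L$, so $a_i\to s\to\cdot$ has total discount $\gamma^{L+1}=\beta$. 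The edges out of $b_i$ have discount $\gamma^{L+1}=\eta$. The terminal values are $v_{a_{h+1}}=0$ and $v_{b_{h+1}}=-R$.

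Reduced costs at the original states also match. The only nonforced actions are at $a_i$, $p_i$, $q_i$, and $b_i$, and their reduced costs equal $\gamma^{\ell}$-scaled versions of the corresponding differences. Only signs are needed at Player~2 states. At $a_i$ the actions are the retained zero-cost actions, whose reduced costs are literally $r_{(a_i,s)}^\pi=\gamma v_s^\pi-v_{a_i}^\pi$ as in Lemma~\ref{lem:pivot}. Player~2's problem decouples in the same way, because internal states offer no choice. The only point needing care is the factor $\gamma^L$ versus $\gamma$ in the reduced-cost differences at $p_i,q_i$. Multiplying by $\gamma^{L}\cdot\gamma$ gives $W_i+K_i+\beta v_{b_{i+1}}-\beta v_{a_i}$, exactly as in Lemma~\ref{lem:counter strategy}.

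Lemma~\ref{lem:ternary-separation} supplies \eqref{eq:separation-a}--\eqref{eq:separation-c}. The proofs of Lemmas~\ref{lem:selected-actions}, \ref{lem:values}, \ref{lem:signs}, \ref{lem:counter strategy}, \ref{lem:pivot}, and \ref{lem:counter} use only these inequalities, $\beta\leq\eta$, and the contracted value recursions, so they go through verbatim. I would check in particular that Lemma~\ref{lem:selected-actions} used $v\geq-R$, which needs every trajectory to incur the path cost $-R$ at most once; that still holds. I would also check that Lemma~\ref{lem:values}'s bound $\sum_j W_j<R/3$ follows from $W_{i+1}\geq4W_i$ and $R\geq4W_h$.

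Uniqueness of the optimal counter strategy on the full game follows from three facts: uniqueness at the original Player~2 states as in Lemma~\ref{lem:counter strategy}, the strict inequalities that hold there, and forced actions at internal states. Uniqueness of the Dantzig pivot follows because the only Player~1 actions with possibly nonzero reduced cost are the unselected retained actions at the $a_i$; every internal state belongs to Player~2. The iteration count $2^h-1$ and optimality at termination then follow exactly as in the proof of Theorem~\ref{thm:real}, via Lemma~\ref{lem:counter} and Proposition~\ref{prop:reduced-optimality}.

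I expect the main obstacle to be the careful bookkeeping of the path discounts. The paths from $p_i$ to level $i+1$ and back to $a_i$ both have length $L$, while the paths from $b_i$ have length $L+1$. I need to verify that every identity in Section~\ref{sec:real} that used $\beta=\gamma^2$ and $\eta=\gamma$ remains valid with $\beta=\eta=\gamma^{L+1}$, in particular \eqref{eq:difference}, \eqref{eq:k-margin}, and the cycle value $-K_i/(1-\beta)$. Everything else is inherited.
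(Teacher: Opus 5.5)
Your proposal is correct and takes the paper's own route. You contract each single-action path to one edge carrying its discounted cost and total discount, invoke Lemma~\ref{lem:ternary-separation} for \eqref{eq:separation-a}--\eqref{eq:separation-c}, inherit Lemmas~\ref{lem:selected-actions}--\ref{lem:counter}, and conclude as in Theorem~\ref{thm:real}(ii), with statement~(i) by direct counting and encoding. There is one harmless slip. The reduced cost of the first edge of a path out of $q_i$ already equals the contracted reduced cost, with no $\gamma^{\ell}$ rescaling. So the difference at $q_i$ should be multiplied by $\gamma$, not $\gamma^{L+1}$, to obtain $W_i+K_i+\beta v_{b_{i+1}}^\pi-\beta v_{a_i}^\pi$; since only signs are used, nothing downstream changes.
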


\begin{proof}
The first statement follows directly from Table~\ref{tab:unit-paths}, \eqref{eq:qd}, \eqref{eq:ternary-parameters}, and the state and action counts in Subsection~\ref{subsec:paths}, since the denominator $\theta^{4D}$ of $\gamma$ has $O(D\log \theta)=O(h^3\log h)$ bits.

By the analysis in Subsection~\ref{subsec:ternary-estimates}, Lemmas~\ref{lem:counter strategy}, \ref{lem:pivot}, and~\ref{lem:counter} apply to this game. Statement~(ii) therefore follows by the same argument as in the proof of Theorem~\ref{thm:real}(ii).
\end{proof}

Since $n=\Theta(h^5)$, we can express the iteration lower bound in Theorem~\ref{thm:ternary} in terms of the number of states as follows.

\begin{corollary}\label{cor:unit-lower}
Algorithm~\ref{alg:ssi} has a worst-case iteration lower bound of $2^{\Omega(n^{1/5})}$ on TBDGs with unit weights. In particular, it is not strongly polynomial on this class.
\end{corollary}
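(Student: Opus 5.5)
The corollary should follow directly from Theorem~\ref{thm:ternary}, converting the iteration count $2^h-1$ into a function of $n$. I would argue as follows.

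First, by Theorem~\ref{thm:ternary}(i), for each $h\geq2$ the game $\widetilde{\mcG}_h$ is a TBDG with unit weights, since all action costs lie in $\{-1,0,1\}$. Its number of states is
\[
n_h=(6h+1)\bigl((2h+1)(8h)^3+1\bigr)+1.
\]
I will bound this explicitly as $n_h\leq c\,h^5$ for an absolute constant $c$, say $c=7\cdot3\cdot512+2$ using $h\geq2$. Hence $h\geq(n_h/c)^{1/5}$.

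By Theorem~\ref{thm:ternary}(ii), starting from \eqref{eq:initial}, Algorithm~\ref{alg:ssi} performs exactly $2^h-1\geq2^{h-1}$ iterations. Combining the two bounds gives at least
\[
2^{(n_h/c)^{1/5}-1}=2^{\Omega(n_h^{1/5})}
\]
iterations on a unit-weight instance with $n_h$ states. Because $n_h$ increases without bound, this gives the worst-case lower bound along an infinite family.

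It remains to check the phrase ``not strongly polynomial''. A strongly polynomial iteration bound would be polynomial in $n$ and $m$ alone. Here $m=n+4h=O(n)$, so any such bound is $\mathrm{poly}(n)$, which is eventually smaller than $2^{\Omega(n^{1/5})}$. Even a polynomial bound in the full input size is excluded, because the encoding length is polynomial in $h$: there are $O(h^5)$ unit costs and a discount factor of $O(h^3\log h)$ bits, so the bound would still be $\mathrm{poly}(h)$.

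The only mild subtlety is that the construction yields instances only at the sizes $n_h$, not at every $n$. That is enough for a worst-case lower bound stated in $\Omega$ form along an infinite family. If a bound valid for every $n$ were required, I would pad $\widetilde{\mcG}_h$ with isolated Player~2 states carrying a zero-cost self-loop. This is harmless because, since consecutive $n_h$ differ only polynomially, the exponent changes only by a constant factor.
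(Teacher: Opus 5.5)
Your proposal is correct and follows the paper's own argument: the paper obtains the corollary directly from Theorem~\ref{thm:ternary} by noting $n=\Theta(h^5)$, so the $2^h-1$ iterations translate into $2^{\Omega(n^{1/5})}$, which is what you do with your explicit constant $c$. Your extra remarks on strong polynomiality and on padding are also fine; for the padding step, the property you actually need is that the ratio $n_{h+1}/n_h$ is bounded, which holds here, rather than merely that consecutive sizes differ polynomially.
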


\subsection{A two-level example with unit weights}\label{subsec:ternary-example}

In this subsection we illustrate the construction with $h=2$. The parameters in \eqref{eq:qd} and \eqref{eq:ternary-parameters}, and the state and action counts in Subsection~\ref{subsec:paths} become
\begin{equation}\label{eq:ternary-example-parameters}
\begin{aligned}
&\theta=16,\qquad D=4096,\qquad B=5,\qquad L=20481,\\
&\gamma=1-16^{-16384},\qquad \beta=\eta=\gamma^{20482},\\
&n=13L+1=266\,254,\qquad m=n+8=266\,262.
\end{aligned}
\end{equation}
Choose $F$ as in Lemma~\ref{lem:polynomial}. The discounted path costs are determined by
\begin{equation}\label{eq:ternary-example-costs}
\begin{aligned}
&W_1=\gamma F(\gamma^3),\qquad W_2=\gamma F(\gamma^4),\\
&K_1=\gamma F(\gamma^2),\qquad K_2=\gamma F(\gamma),\qquad R=F(\gamma^5).
\end{aligned}
\end{equation}
Every action cost belongs to $\{-1,0,1\}$. Figure~\ref{fig:ternary-two-level} shows the resulting game in the same layout as Figure~\ref{fig:two-level}, with internal path states omitted. The labels are the discounted path costs specified by \eqref{eq:ternary-example-costs}. Double arrows denote $L$-action paths, and dashed arrows denote $(L+1)$-action zero-cost paths.
\begin{figure}[H]
\centering
\begin{tikzpicture}[>=Stealth,line width=.65pt,
every node/.style={font=\small},
state/.style={draw,minimum width=9mm,minimum height=9mm,inner sep=0pt},
p1/.style={state,rectangle},
p2/.style={state,circle},
aux/.style={p2},
pathL/.style={->,double=white,double distance=1pt,
preaction={draw=white,line width=3.2pt}},
pathB/.style={->,dashed},
cost/.style={fill=white,inner sep=1.5pt}]
\node[p1] (a1) at (0,0) {$a_1$};
\node[p2] (b1) at (6,0) {$b_1$};
\node[aux] (auxp1) at (-3,0) {$p_1$};
\node[aux] (auxq1) at (3,0) {$q_1$};
\node[p1] (a2) at (0,-1.8) {$a_2$};
\node[p2] (b2) at (6,-1.8) {$b_2$};
\node[aux] (auxp2) at (-3,-1.8) {$p_2$};
\node[aux] (auxq2) at (3,-1.8) {$q_2$};
\node[p1] (a3) at (0,-3.6) {$a_3$};
\node[p2] (b3) at (6,-3.6) {$b_3$};
\draw[pathB] (b1) -- node[cost,pos=.22,above] {$0$} (a2);
\draw[pathB] (b1) -- node[cost,right] {$0$} (b2);
\draw[pathB] (b2) -- node[cost,pos=.22,above] {$0$} (a3);
\draw[pathB] (b2) -- node[cost,right] {$0$} (b3);
\draw[->] (a1) -- node[cost,below] {$0$} (auxp1);
\draw[->] (a1) -- node[cost,below] {$0$} (auxq1);
\draw[pathL] (auxp1) to[bend left=20]
node[cost,pos=.3,above] {$-K_1/\gamma$} (a1);
\draw[pathL] (auxq1) to[bend right=20]
node[cost,pos=.3,above] {$-K_1/\gamma$} (a1);
\draw[pathL] (auxp1) -- node[cost,left] {$W_1/\gamma$} (a2);
\draw[pathL] (auxq1) -- node[cost,pos=.68,below] {$W_1/\gamma$} (b2);
\draw[->] (a2) -- node[cost,below] {$0$} (auxp2);
\draw[->] (a2) -- node[cost,below] {$0$} (auxq2);
\draw[pathL] (auxp2) to[bend left=20]
node[cost,pos=.3,above] {$-K_2/\gamma$} (a2);
\draw[pathL] (auxq2) to[bend right=20]
node[cost,pos=.3,above] {$-K_2/\gamma$} (a2);
\draw[pathL] (auxp2) -- node[cost,left] {$W_2/\gamma$} (a3);
\draw[pathL] (auxq2) -- node[cost,pos=.68,below] {$W_2/\gamma$} (b3);
\draw[pathL] (b3) -- node[cost,above] {$-R$} (a3);
\draw[->] (a3) to[loop left,looseness=6,min distance=11mm]
node[cost,left] {$0$} (a3);
\end{tikzpicture}
\caption{The construction with unit weights for $h=2$.}
\label{fig:ternary-two-level}
\end{figure}

Starting from $\sigma_{a_1}=(a_1,p_1)$ and $\sigma_{a_2}=(a_2,p_2)$, Algorithm~\ref{alg:ssi} proceeds as shown in Table~\ref{tab:unit-iterations}. Player~2 uses an optimal counter strategy at initialization and after each pivot.

\begin{table}[H]
\centering\small
\begin{tabular}{cccc}
\toprule
$k$ & $(\sigma_k)_{a_1}$ & $(\sigma_k)_{a_2}$ & Next entering action\\
\midrule
0 & $(a_1,p_1)$ & $(a_2,p_2)$ & $(a_1,q_1)$\\
1 & $(a_1,q_1)$ & $(a_2,p_2)$ & $(a_2,q_2)$\\
2 & $(a_1,q_1)$ & $(a_2,q_2)$ & $(a_1,p_1)$\\
3 & $(a_1,p_1)$ & $(a_2,q_2)$ & Terminate\\
\bottomrule
\end{tabular}
\caption{The strategy sequence and Dantzig pivots in the two-level example with unit weights.}
\label{tab:unit-iterations}
\end{table}

The three pivots create cycles through $a_1,q_1$, through $a_2,q_2$, and through $a_1,p_1$, respectively. Each cycle consists of one Player~1 action and an $L$-action return path, so its length is $L+1=20482$. Player~2's subsequent optimal counter strategies select the paths from $q_1$ to $b_2$, from $q_2$ to $b_3$, and from $p_1$ to $a_2$, respectively, and cancel these cycles. Hence $(a_1,p_1)$ reenters at the third pivot, after which the algorithm terminates. Every completed profile has only the terminal self-loop as a cycle.

\section{A Dantzig counterpart for mean-payoff TBDGs}\label{sec:mean-payoff}
We consider TBDGs with long-run average costs and use the notation of Section~\ref{sec:prelim}. For a strategy profile $\pi$, its long-run average cost from state $s$ is
\[
g_s^\pi=\lim_{N\to\infty}\frac1N\sum_{j=0}^{N-1}c_{\pi_{s_j}},\qquad\mbox{where}\quad
s_0=s,\quad \bfp_{\pi_{s_j}}=\bfe_{s_{j+1}}.
\]
Let $\mathbf g^\pi$ denote the long-run average cost vector. Player~1 minimizes the long-run average cost and Player~2 maximizes it. In this section, $\bfv^\pi$ denotes the bias vector, and $r_a^\pi$ denotes the corresponding reduced cost of action $a\in\mcA_s$. Referring the detailed analysis in \cite{zhongye2026}, we have
\begin{equation}\label{eq:mean-bias}
v_s^\pi=c_{\pi_s}-g_s^\pi+\bfp_{\pi_s}^T\bfv^\pi,\qquad r_a^\pi=c_a-g_s^\pi+\bfp_a^T\bfv^\pi-v_s^\pi.
\end{equation}
Fix an ordering of $\mcS$. For each cycle $C$ in $G^\pi$, let $s_C$ be its first state in this ordering and set $v_{s_C}^\pi=0$. These conditions uniquely determine $\bfv^\pi$ by~\eqref{eq:mean-bias}. In particular, $r_{\pi_s}^\pi=0$ for every $s\in\mcS$.

At iteration $k$ of the algorithm, let $\pi_k=(\sigma_k,\tau_k)$ denote the current strategy profile. The entering action is selected according to
\begin{equation}\label{eq:mean-rule}
(s_k,a'_k)\in\Argmin_{s\in\mcS^1,\,a\in\mcA_s}
\bigl(\bfp_a^T\mathbf g^{\pi_k}-g_s^{\pi_k},\ r_a^{\pi_k}\bigr),
\end{equation}
where the minimization is lexicographic. In particular, the algorithm is as follows.

\begin{algorithm}[H]
\caption{A Dantzig counterpart of simple strategy iteration}\label{alg:mean-ssi}
\begin{algorithmic}[1]
\REQUIRE A strategy $\sigma_0$ of Player~1.
\STATE Compute $\tau_0=\pi^2(\sigma_0)$ as in Step~4, and set $\pi_0=(\sigma_0,\tau_0)$ and $k=0$.
\STATE Choose $(s_k,a'_k)$ according to~\eqref{eq:mean-rule}. If $\bfp_{a'_k}^T\mathbf g^{\pi_k}-g_{s_k}^{\pi_k}=0$ and $r_{a'_k}^{\pi_k}=0$, terminate and return $\pi_k$.
\STATE Replace the selected action at the source of $a'_k$ by $a'_k$, obtaining $\sigma_{k+1}$.
\STATE Using Algorithm~\ref{alg:mean-counter-computation}, compute a counter strategy $\tau_{k+1}=\pi^2(\sigma_{k+1})$ that lexicographically maximizes $(g_s^{(\sigma_{k+1},\tau)},v_s^{(\sigma_{k+1},\tau)})$ over all counter strategies $\tau$, for every $s\in\mcS$. Set $\pi_{k+1}=(\sigma_{k+1},\tau_{k+1})$ and $k\leftarrow k+1$, and return to Step~2.
\end{algorithmic}
\end{algorithm}

\begin{remark}\label{rem:mean-counter}
Ties are broken arbitrarily. Algorithm~\ref{alg:mean-ssi} extends the Dantzig counterpart in Zhong and Ye~\cite[Section~5]{zhongye2026} to two-player mean-payoff TBDGs. It applies the pivoting rule in Zhong and Ye~\cite[(5.3)]{zhongye2026} at Player~1 states and computes a counter strategy in Step~4 after each switch. Algorithm~\ref{alg:mean-counter-computation} provides an explicit implementation of this counter strategy computation. Its correctness and strongly polynomial complexity are established in Lemma~\ref{lem:mean-counter-computation}.
\end{remark}

For a fixed strategy $\sigma$ of Player~1, let $\mcA_\sigma=\{\sigma_s:s\in\mcS^1\}\cup\mcA^2$ be the actions available when Player~1 uses $\sigma$. We first compute the maximum long-run average costs $\mathbf g$ and retain actions preserving these values. Subtracting $g_s$ from each action cost at $s$ makes the optimal cycles have total weight zero. Maximum-weight paths to the possible first states of these cycles determine the maximum biases, since the bias is normalized to zero at the first state of each cycle. Finally, we select actions attaining these path values and cycles whose first states have value zero, obtaining the counter strategy required in Step~4 of Algorithm~\ref{alg:mean-ssi}. The detailed procedure is presented in Algorithm~\ref{alg:mean-counter-computation}.

\begin{algorithm}[H]
\caption{Counter strategy computation for long-run average costs}\label{alg:mean-counter-computation}
\begin{algorithmic}[1]
\REQUIRE A strategy $\sigma$ of Player~1 and the fixed ordering of $\mcS$.
\STATE Compute $g_s=\max_\tau g_s^{(\sigma,\tau)}$ for every $s\in\mcS$. Let $\mcA_g=\bigcup_{s\in\mcS}\{a\in\mcA_\sigma\cap\mcA_s:\bfp_a^T\mathbf g=g_s\}$, and set $d_a=c_a-g_s$ for each such action. Determine the set $Z$ of states that can be the first state of a cycle using only actions in $\mcA_g$ with total weight zero under cost $d_a$.
\STATE For every $s\in\mcS$, compute
\[
v_s=\max\left\{\sum_{a\in P}d_a:\ P\text{ is a path from }s\text{ to }Z\text{ through }\mcA_g\right\}.
\]
Empty paths are allowed and have weight zero.
\STATE Choose $\tau$ using actions in $\mcA_g$ such that, for $\pi=(\sigma,\tau)$,
\[
v_s=d_{\pi_s}+\bfp_{\pi_s}^T\bfv\qquad\forall s\in\mcS,
\]
and $v_{s_C}=0$ for the first state $s_C$ of every cycle $C$ in $G^\pi$. The output is $\tau$.
\end{algorithmic}
\end{algorithm}

\begin{remark}\label{rem:mean-counter-computation}
\begin{enumerate}[label=(\roman*)]
\item In Step~1, $\mathbf g$ can be computed by applying Karp's maximum cycle mean algorithm~\cite{karp1978} to the strongly connected components and propagating their values backward. By the definition of $\mathbf g$, one has $g_s\geq g_t$ for every action from $s$ to $t$ in $\mcA_\sigma$, every state reaches $Z$ through $\mcA_g$, and every cycle through $\mcA_g$ has nonpositive total weight under $d_a$. To determine $Z$ in Step~1, use the Bellman-Ford algorithm~\cite{bellman1958} to compute $u_s$, the maximum total weight of a path starting at $s$ using $\mcA_g$ and weights $d_a$, with empty paths allowed. Removing cycles cannot decrease path weights, so the maximum is attained by a path of at most $n-1$ actions. By the definition of $\bfu$, one has $u_s\geq d_a+\bfp_a^T\bfu$ for every $a\in\mcA_g\cap\mcA_s$. Retain only actions $a\in\mcA_g\cap\mcA_s$ satisfying $u_s=d_a+\bfp_a^T\bfu$. Since the nonnegative quantities $u_s-d_a-\bfp_a^T\bfu$ sum to $-\sum_{a\in C}d_a$ on a cycle $C$, this cycle has total weight zero if and only if all its actions are retained. Consequently, $z\in Z$ if and only if $z$ lies on a cycle in the retained subgraph induced by the states not preceding $z$. Computing $\mathbf g$ and $\bfu$ requires $O(nm)$ operations, and identifying $Z$ requires at most $n$ graph searches of $O(n+m)$ operations each. Thus Step~1 requires $O(nm)$ operations.

\item In Step~2, $\bfv$ can also be computed using the Bellman-Ford algorithm~\cite{bellman1958} by reversing the edges associated with $\mcA_g$, negating their weights, and adding a source with zero-weight edges to every state in $Z$. By item~(i), every state reaches $Z$ and every cycle using $\mcA_g$ has nonpositive total weight. Hence $v_s$ is well-defined and equals the negative of the shortest-path distance from the added source to $s$. The computation requires $O(nm)$ operations.

\item For Step~3, consider the graph of actions $a\in\mcA_g\cap\mcA_s$ satisfying $d_a+\bfp_a^T\bfv=v_s$. In each strongly connected component with no outgoing edge to another component, choose a cycle whose first state $z$ belongs to $Z$ and satisfies $v_z=0$. Such a cycle exists because a path attaining $v_s$ from a state $s$ in the component ends at a state $z\in Z$ with $v_z=0$, and both this path and a zero-weight cycle whose first state is $z$ lie in the graph above and cannot leave the component. Select the actions on these cycles and, at every other state, the first action of a path with the fewest actions to the selected cycles. These paths are obtained by a reverse breadth-first search starting from all selected cycle states. Every state reaches a selected cycle, and each selected action outside these cycles strictly decreases the number of actions needed to reach them, so no other cycle is formed. At each Player~1 state, the only action in $\mcA_\sigma$ is $\sigma_s$, so the selected actions define a profile $\pi=(\sigma,\tau)$. Since every selected action belongs to the graph above, one has $v_s=d_{\pi_s}+\bfp_{\pi_s}^T\bfv$ for every $s\in\mcS$. Also, every cycle of $G^\pi$ is a selected cycle whose first state has value zero. Hence both requirements in Step~3 hold. The strongly connected component computation, cycle searches in disjoint components, and reverse breadth-first search require $O(n+m)$ operations in total.

\item The vectors $\mathbf g$ and $\bfv$ computed in Steps~1 and~2 are used to select $\tau$ in Step~3. By~\eqref{eq:mean-counter-candidate}, they equal $\mathbf g^{(\sigma,\tau)}$ and $\bfv^{(\sigma,\tau)}$, respectively. Algorithm~\ref{alg:mean-ssi} uses these values directly in~\eqref{eq:mean-rule}, without a separate evaluation of the returned profile.
\end{enumerate}
\end{remark}

The following lemma justifies these choices and establishes the complexity of the construction.

\begin{lemma}\label{lem:mean-counter-computation}
For every strategy $\sigma$ of Player~1, Algorithm~\ref{alg:mean-counter-computation} is well-defined and returns a counter strategy $\tau$ satisfying the lexicographic maximization requirement in Step~4 of Algorithm~\ref{alg:mean-ssi}. Moreover, it has a strongly polynomial implementation using $O(nm)$ arithmetic operations.
\end{lemma}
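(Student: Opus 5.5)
The proof has three parts. First, Steps~1--3 of Algorithm~\ref{alg:mean-counter-computation} are well-defined. Second, the returned $\tau$ satisfies $\mathbf g^{(\sigma,\tau)}=\mathbf g$ and $\bfv^{(\sigma,\tau)}=\bfv$, the identity referred to as \eqref{eq:mean-counter-candidate} in Remark~\ref{rem:mean-counter-computation}(iv). Third, $\mathbf g$ and $\bfv$ are componentwise lexicographic upper bounds on $(g_s^{(\sigma,\tau')},v_s^{(\sigma,\tau')})$ over all counter strategies $\tau'$. The complexity claim then follows by adding the operation counts in Remark~\ref{rem:mean-counter-computation}(i)--(iii), which gives $O(nm)$.

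\emph{Well-definedness and the candidate identity.} Fixing $\sigma$ makes Player~2's problem a one-player deterministic mean-payoff maximization over $\mcA_\sigma$. Hence $g_s$ is the maximum cycle mean reachable from $s$, and Remark~\ref{rem:mean-counter-computation}(i) supplies three facts: $g_s\ge g_t$ along actions, every state reaches $Z$ through $\mcA_g$, and every $\mcA_g$-cycle has nonpositive $d$-weight. These give finiteness of $v_s$ in Step~2, and Remark~\ref{rem:mean-counter-computation}(iii) gives existence of the Step~3 selection. For the returned $\pi=(\sigma,\tau)$, every cycle of $G^\pi$ lies in $\mcA_g$. Summing $v_s=d_{\pi_s}+\bfp_{\pi_s}^T\bfv$ around such a cycle shows that its $d$-weight is zero, so its mean cost equals the common value $g$ on it. Since $\mathbf g$ is constant along $\mcA_g$ actions, $g^\pi_s=g_s$ for all $s$. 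The vector $\bfv$ satisfies \eqref{eq:mean-bias} with $g^\pi$ in place of $g$, and it is normalized to zero at the first state of every cycle of $G^\pi$. By the uniqueness statement after \eqref{eq:mean-bias}, $\bfv^\pi=\bfv$.

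\emph{Lexicographic maximality.} Let $\tau'$ be any counter strategy and $\pi'=(\sigma,\tau')$. Then $g^{\pi'}_s\le g_s$ by definition of $g_s$, and it remains to treat states with $g^{\pi'}_s=g_s$. I will show that the $G^{\pi'}$-path from such an $s$ uses only $\mcA_g$ actions. Along the path $g$ is nonincreasing, but it ends in a cycle of mean $g^{\pi'}_s=g_s$, so $g$ is constant and every action lies in $\mcA_g$. The terminal cycle of this path has $d$-weight zero, so its first state $z'$ (in the fixed ordering) lies in $Z$. Unrolling \eqref{eq:mean-bias} from $s$ to $z'$ gives $v^{\pi'}_s=\sum d_a+v^{\pi'}_{z'}=\sum d_a$, the weight of an $\mcA_g$ path from $s$ to $Z$. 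Hence $v^{\pi'}_s\le v_s$. Combined with the candidate identity, the returned $\tau$ achieves the maximum at every $s$, and it is a counter strategy because it maximizes $\mathbf g$.

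\emph{Main obstacle.} The most delicate point is the normalization of the bias. The bias $v^{\pi'}$ is set to zero at the first state of each cycle of $G^{\pi'}$, and this must match the definition of $Z$ as the set of possible first states of zero-weight $\mcA_g$-cycles. The same issue arises in Step~3, where the selected cycles must have $v_z=0$, and this requires the connected-component argument of Remark~\ref{rem:mean-counter-computation}(iii). I would verify carefully that the first state of a zero-weight cycle of $G^{\pi'}$ is indeed in $Z$, since the unrolling argument above depends on it. I would also check that $v_z\ge0$ for $z\in Z$, which holds because the empty path is allowed, so that it is consistent with the normalization $v_z=0$ on the selected cycles.
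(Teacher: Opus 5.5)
Your proposal is correct and follows essentially the paper's proof. It establishes well-definedness from Remark~\ref{rem:mean-counter-computation}(i)--(iii) and the identity $\mathbf g^{(\sigma,\tau)}=\mathbf g$, $\bfv^{(\sigma,\tau)}=\bfv$; there you appeal to uniqueness of the normalized bias where the paper telescopes, which amounts to the same thing. It then gets the upper bound by showing that a trajectory with $g_s^{\bar\pi}=g_s$ uses only $\mcA_g$ actions and ends in a zero-weight cycle whose first state lies in $Z$ by definition, so the point you flag as delicate is immediate.

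One item is missing: ``strongly polynomial'' needs more than the $O(nm)$ operation count. You should add the encoding-length remark the paper includes. All computed quantities are path sums and cycle means of input costs with denominators at most $n$, and $\mathbf g$ is constant along $\mcA_g$ paths, so their bit sizes stay polynomial.
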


\begin{proof}

The well-definiteness is discussed in Remark~\ref{rem:mean-counter-computation}(i)--(iii). Let $\tau$ be its output and $\pi=(\sigma,\tau)$. Fix a state $s$, let $C$ be its eventual cycle under $\pi$, and let $z$ be the first state of $C$. Summing the equalities in Step~3 around $C$ yields $\sum_{a\in C}d_a=0$. Write $s^0=s,\ldots,s^L=z$ for its path to $z$. Then we have
\begin{align}
g_s^\pi=\frac1{|C|}\sum_{a\in C}c_a=g_s,\quad v_s^\pi=\sum_{j=0}^{L-1}(c_{\pi_{s^j}}-g_s)
=\sum_{j=0}^{L-1}(v_{s^j}-v_{s^{j+1}})
=v_s.\label{eq:mean-counter-candidate}
\end{align}
The first relation follows from the definition of long-run average cost, the fact that every selected action belongs to $\mcA_g$, and $\sum_{a\in C}d_a=0$. In the second relation, the first equality follows from~\eqref{eq:mean-bias} and $v_z^\pi=0$, the second follows from the selected-action equalities in Step~3, and the last follows by telescoping and $v_z=0$ in that step.

Fix any counter strategy $\bar\tau$ and let $\bar\pi=(\sigma,\bar\tau)$. By Step~1, one has $g_s^{\bar\pi}\leq g_s$. Suppose that equality holds at a state $s$. For every state $s'$ on its trajectory under $\bar\pi$, we have
\[
g_s\geq g_{s'}\geq g_{s'}^{\bar\pi}=g_s^{\bar\pi}=g_s,
\]
where the inequalities follow from Remark~\ref{rem:mean-counter-computation}(i) and Step~1, respectively, and the equalities follow from the definition of long-run average cost and the assumed equality at $s$. Thus every action on this trajectory belongs to $\mcA_g$, and its eventual cycle has total weight zero. Its first state $z$ therefore belongs to $Z$. Let $s^0=s,\ldots,s^L=z$ be the path under $\bar\pi$. Then
\[
v_s^{\bar\pi}=\sum_{j=0}^{L-1}(c_{\bar\pi_{s^j}}-g_s)
=\sum_{j=0}^{L-1}d_{\bar\pi_{s^j}}
\leq v_s=v_s^\pi,
\]
where the first equality follows from~\eqref{eq:mean-bias} and $v_z^{\bar\pi}=0$, the second equality follows from the definition of $d_a$ and $g_{s^j}=g_s$, the inequality follows from Step~2, and the last equality follows from~\eqref{eq:mean-counter-candidate}. This proves the lexicographic maximization requirement in Step~4 of Algorithm~\ref{alg:mean-ssi}.

By Remark~\ref{rem:mean-counter-computation}(i)--(iii), the total number of operations is $O(nm)+O(nm)+O(n+m)=O(nm)$. For rational costs, the computed values are path sums and cycle means involving polynomially many input costs, with division only by cycle lengths at most $n$. Their encoding lengths are therefore polynomial in the input length. Hence the implementation is strongly polynomial.
\end{proof}

We next define optimality for the long-run average cost and establish convergence of Algorithm~\ref{alg:mean-ssi}.

\begin{definition}[Optimality for long-run average cost]\label{def:mean-optimality}
A strategy profile $\pi^*=(\sigma^*,\tau^*)$ is optimal if
\[
g_s^{(\sigma^*,\tau)}\leq g_s^{\pi^*}\leq g_s^{(\sigma,\tau^*)}
\qquad\forall s\in\mcS,
\]
for every strategy $\sigma$ of Player~1 and every strategy $\tau$ of Player~2. Such a profile is also called an equilibrium.
\end{definition}

The next lemma establishes the long-run average cost and reduced-cost inequalities implied by the lexicographic optimality requirement in Step~4 of Algorithm~\ref{alg:mean-ssi}.

\begin{lemma}\label{lem:mean-counter}
Fix a strategy $\sigma$ of Player~1 and let $\pi=(\sigma,\pi^2(\sigma))$. For every $s\in\mcS^2$ and $a\in\mcA_s$, one has $\bfp_a^T\mathbf g^\pi\leq g_s^\pi$, and equality implies $r_a^\pi\leq0$.
\end{lemma}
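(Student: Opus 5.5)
We argue by contradiction, using the lexicographic maximality of $\pi^2(\sigma)$ established in Lemma~\ref{lem:mean-counter-computation}. By that lemma, $\mathbf g^\pi$ equals the vector $\mathbf g$ from Step~1 of Algorithm~\ref{alg:mean-counter-computation}, and $\bfv^\pi$ equals the vector $\bfv$ from Step~2. In other words, $g_s^\pi=\max_\tau g_s^{(\sigma,\tau)}$ and, among counter strategies attaining these gains, $v_s^\pi$ is the maximum bias.

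For the first inequality, fix $s\in\mcS^2$ and an action $a\in\mcA_s$ moving to $t$, so that $\bfp_a^T\mathbf g^\pi=g_t^\pi$. Remark~\ref{rem:mean-counter-computation}(i) gives $g_s\geq g_t$ for every action in $\mcA_\sigma$. Since every Player~2 action lies in $\mcA_\sigma$, this yields $g_t^\pi\leq g_s^\pi$ directly. To avoid relying on the remark, we can also argue as follows. Let $\tau'$ agree with $\tau=\pi^2(\sigma)$ except that it plays $a$ at $s$. If $t$'s trajectory under $\pi$ avoids $s$, then $g_s^{(\sigma,\tau')}=g_t^\pi$. If that trajectory passes through $s$, then $s$ lies on a cycle of $\pi'=(\sigma,\tau')$, and a pure-strategy deterministic argument shows that either $g_t^\pi=g_s^\pi$ already holds or the new cycle mean is bounded by the maximal value $g_s^\pi$. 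In both cases maximality gives $g_t^\pi\le g_s^\pi$. The cleanest route is the remark's characterization $g_s=\max_{a\in\mcA_\sigma\cap\mcA_s}g_{\text{succ}(a)}$, which is the standard optimality equation for maximum reachable cycle mean in a one-player deterministic graph.

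For the second claim, suppose $g_t^\pi=g_s^\pi$. Then $a\in\mcA_g$, and $d_a=c_a-g_s$ is defined. By Step~2, $v_s$ is the maximum weight of a path from $s$ to $Z$ through $\mcA_g$. Prepending $a$ to any path from $t$ to $Z$ produces such a path from $s$, so $v_s\geq d_a+v_t$. Combining this with $\bfv=\bfv^\pi$ and $\mathbf g=\mathbf g^\pi$ from \eqref{eq:mean-counter-candidate}, \eqref{eq:mean-bias} gives
\[
r_a^\pi=c_a-g_s^\pi+v_t^\pi-v_s^\pi=d_a+v_t-v_s\leq0.
\]
This requires that $t$ reach $Z$ through $\mcA_g$, which Remark~\ref{rem:mean-counter-computation}(i) guarantees.

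The main obstacle is the degenerate case in which $s\in Z$, so that the empty path gives $v_s=0$, and the maximum-path formula must still dominate $d_a+v_t$. Prepending $a$ works here as well, because it yields a nonempty path from $s$ to $Z$, and Step~2 maximizes over all paths, including nonempty ones ending in $Z$. Cycles through $\mcA_g$ have nonpositive weight, so the maximum is finite. I expect checking this well-definedness, together with the identification $\bfv=\bfv^\pi$, to be the only delicate point; the rest is direct substitution.
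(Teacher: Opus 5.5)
Your proof is correct, but it takes a different route from the paper.

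You work through the constructive characterization in Lemma~\ref{lem:mean-counter-computation}. The lexicographic maximum at each state is $(g_s,v_s)$ from Steps~1--2 of Algorithm~\ref{alg:mean-counter-computation}, so any $\pi^2(\sigma)$ satisfies $\mathbf g^\pi=\mathbf g$ and $\bfv^\pi=\bfv$. The lemma then reduces to two facts: the gain inequality $g_s\geq g_t$ from Remark~\ref{rem:mean-counter-computation}(i), and the Bellman inequality $v_s\geq d_a+v_t$ for the longest-path values of Step~2.

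The paper instead argues directly from lexicographic optimality, using a single switch at $s$ to $a$. If $g_t^\pi>g_s^\pi$, the trajectory from $t$ avoids $s$, so the switch raises the gain at $s$. If $g_t^\pi=g_s^\pi$ and $r_a^\pi>0$, then either a cycle $C$ through $s$ forms and the gain rises by $r_a^\pi/|C|$, or the eventual cycle is unchanged and the bias at $s$ rises by $r_a^\pi$.

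What each route buys:
\begin{itemize}
\item The paper's argument is self-contained and independent of how the counter strategy is computed.
\item Yours is shorter given the preceding lemma, and it shows the statement is just the optimality condition of the underlying longest-path problem.
\item The cost of yours is that it inherits the structural claims asserted in Remark~\ref{rem:mean-counter-computation}(i): every state reaches $Z$ through $\mcA_g$, and cycles in $\mcA_g$ have nonpositive weight.
\item Also, the identity $\bfv^\pi=\bfv$ for an arbitrary lexicographic maximizer comes from the inequality part of the proof of Lemma~\ref{lem:mean-counter-computation}, not from its statement alone.
\end{itemize}

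Two small corrections:
\begin{itemize}
\item In your alternative argument for the first inequality, if the trajectory from $t$ under $\pi$ passes through $s$, then $g_t^\pi=g_s^\pi$ holds outright, because the two trajectories share an eventual cycle. No bound on a new cycle mean is needed.
\item The delicate point in the prepending step is not $s\in Z$. It is that a maximizing path from $t$ may pass through $s$, so prepending $a$ yields a walk rather than a simple path. This is handled by the nonpositivity of cycles in $\mcA_g$, which you do invoke.
\end{itemize}
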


\begin{proof}
Let $\bfp_a=\bfe_t$, and obtain $\pi'$ from $\pi$ by replacing only the selected action at $s$ by $a$. If $g_t^\pi>g_s^\pi$, the trajectory from $t$ under $\pi$ cannot reach $s$. Hence $g_s^{\pi'}=g_t^\pi>g_s^\pi$, contradicting optimality of $\pi^2(\sigma)$. Thus $\bfp_a^T\mathbf g^\pi=g_t^\pi\leq g_s^\pi$.

Suppose that $\bfp_a^T\mathbf g^\pi=g_t^\pi=g_s^\pi$ and $r_a^\pi>0$. If the trajectory from $t$ under $\pi$ reaches $s$, replacing $\pi_s$ by $a$ forms a cycle $C$ under $\pi'$. Summing~\eqref{eq:mean-bias} along this cycle yields $g_s^{\pi'}=g_s^\pi+r_a^\pi/|C|>g_s^\pi$. Otherwise, the eventual cycle and its first state are unchanged, so~\eqref{eq:mean-bias} yields $g_s^{\pi'}=g_s^\pi$ and $v_s^{\pi'}=c_a-g_s^\pi+v_t^\pi=v_s^\pi+r_a^\pi>v_s^\pi$. Both cases contradict the lexicographic optimality of $\pi^2(\sigma)$. Hence the conclusion follows.
\end{proof}

The next lemma establishes the progress made at each iteration.

\begin{lemma}\label{lem:mean-progress}
For every nonterminal iteration $k$ of Algorithm~\ref{alg:mean-ssi} and every $s\in\mcS$, the pair $(g_s^{\pi_{k+1}},v_s^{\pi_{k+1}})$ is lexicographically no larger than $(g_s^{\pi_k},v_s^{\pi_k})$, moreover, the inequality is strict at $s=s_k$.
\end{lemma}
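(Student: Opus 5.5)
The plan is to compare the new profile $\pi_{k+1}=(\sigma_{k+1},\tau_{k+1})$ directly with the potential $(\mathbf g^{\pi_k},\bfv^{\pi_k})$. I will show that every action selected by $\pi_{k+1}$ is lexicographically nonincreasing with respect to this potential, and strictly decreasing at $s_k$.

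\textbf{Step 1 (local inequalities).} For $s\in\mcS^2$ and $a=(\pi_{k+1})_s$, Lemma~\ref{lem:mean-counter} applied to $\pi_k=(\sigma_k,\pi^2(\sigma_k))$ gives $\bfp_a^T\mathbf g^{\pi_k}\le g_s^{\pi_k}$, with $r_a^{\pi_k}\le0$ in case of equality. For $s\in\mcS^1\setminus\{s_k\}$ the action is unchanged, so both quantities are $0$. At $s_k$, the entering action $a'_k$ satisfies $\bigl(\bfp_{a'_k}^T\mathbf g^{\pi_k}-g_{s_k}^{\pi_k},\,r_{a'_k}^{\pi_k}\bigr)<_{\rm lex}(0,0)$. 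This holds because the iteration is nonterminal and the selected actions give the pair $(0,0)$, so the lexicographic minimum is strictly negative.

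\textbf{Step 2 (the $g$-component).} Fix $s$ and follow its trajectory under $\pi_{k+1}$ to its eventual cycle $C$ with first state $z$. By Step~1, $g^{\pi_k}$ is nonincreasing along the trajectory, so it is constant on $C$, say equal to $\bar g\le g^{\pi_k}_s$. On $C$ every action then has $r^{\pi_k}_a\le0$. By \eqref{eq:mean-bias}, $r_a^{\pi_k}=c_a-\bar g+v_t^{\pi_k}-v_{s'}^{\pi_k}$, and summing around $C$ telescopes the bias terms, giving
\[
g_s^{\pi_{k+1}}=\frac1{|C|}\sum_{a\in C}c_a=\bar g+\frac1{|C|}\sum_{a\in C}r_a^{\pi_k}\le g_s^{\pi_k}.
\]
Equality forces $g^{\pi_k}$ to be constant along the whole trajectory and $r^{\pi_k}_a=0$ on $C$. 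In that case the trajectory, and in particular $C$, cannot use $a'_k$ whenever $a'_k$ strictly decreases $g$ or has a negative reduced cost.

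\textbf{Step 3 (the $v$-component when $g$ ties).} Now $v^{\pi_{k+1}}_s$ is the sum of $c-g_s$ along the path $s\to z$. Rewriting each term through reduced costs gives
\[
v_s^{\pi_{k+1}}=v_s^{\pi_k}-v_z^{\pi_k}+\sum_{\text{path}}r_a^{\pi_k}\le v_s^{\pi_k}-v_z^{\pi_k}.
\]
So it suffices to prove $v_z^{\pi_k}\ge0$, and I expect this to be the main obstacle, since $C$ need not be a cycle of $\pi_k$. By Step~2, $C$ avoids $s_k$'s new action, so $C$ uses only actions of $\sigma_k$ and Player~2 actions. Define $\tau'$ to follow $C$ on its Player~2 states and to agree with $\tau_k$ elsewhere. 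Under $(\sigma_k,\tau')$ the state $z$ lies on the cycle $C$, which is the first state of $C$ in the fixed ordering, so it has long-run average $\bar g=g_z^{\pi_k}$ and bias $0$. The lexicographic maximality of $\tau_k$ in Step~4 of Algorithm~\ref{alg:mean-ssi} then yields $v_z^{\pi_k}\ge0$. This establishes the non-strict claim at every $s$.

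\textbf{Step 4 (strictness at $s_k$).} If $\bfp_{a'_k}^T\mathbf g^{\pi_k}<g^{\pi_k}_{s_k}$, the $g$-chain in Step~2 has a strict step, so $g^{\pi_{k+1}}_{s_k}<g^{\pi_k}_{s_k}$. Otherwise $r_{a'_k}^{\pi_k}<0$. If the trajectory from $s_k$ returns to $s_k$, this negative reduced cost lies on $C$ and makes the cycle mean strictly smaller. If it does not return, $a'_k$ lies on the path part, and the sum of reduced costs in Step~3 is strictly negative, giving $v^{\pi_{k+1}}_{s_k}<v^{\pi_k}_{s_k}$.
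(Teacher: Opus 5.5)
Your proof is correct and follows essentially the same route as the paper. It uses the same three ingredients: local inequalities from Lemma~\ref{lem:mean-counter} and the entering rule; telescoping reduced costs around the new cycle $C$ for the $g$-component and along the path to its first state $z$ for the bias; and $v_z^{\pi_k}\geq0$ from the lexicographic maximality of $\tau_k$ against a Player-2 strategy realizing $C$. The only difference is that the paper uses $\tau_{k+1}$ itself where you build the hybrid $\tau'$, which works equally well.

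One small wording slip: in Step~2, when $r_{a'_k}^{\pi_k}<0$ it is only $C$, not the whole trajectory, that must avoid $a'_k$, as your own Step~4 uses when $a'_k$ lies on the path part.
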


\begin{proof}
Let $\pi=(\sigma,\tau)=\pi_k$ and $\pi'=(\sigma',\tau')=\pi_{k+1}$ for the simplicity of our discussion. For every selected action $\pi_s$, one has $\bfp_{\pi_s}^T\mathbf g^\pi=g_s^\pi$ and $r_{\pi_s}^\pi=0$. Since iteration $k$ is nonterminal, Step~2 and~\eqref{eq:mean-rule} imply $\bfp_{a'_k}^T\mathbf g^\pi\leq g_{s_k}^\pi$, with $r_{a'_k}^\pi<0$ whenever equality holds. Together with Lemma~\ref{lem:mean-counter} and~\eqref{eq:mean-bias}, these inequalities imply that for all $s\in\mcS$
\begin{equation}\label{eq:mean-residual}
g_{s'}^\pi\leq g_s^\pi,\qquad
r^{\pi}_{\pi'_s}=c_{\pi'_s}-g_s^\pi+v_{s'}^\pi-v_s^\pi\leq0\quad\text{if }g_{s'}^\pi=g_s^\pi,
\end{equation}
where $\bfp_{\pi'_s}=\bfe_{s'}$. Along the trajectory under $\pi'$ from $s$, the old long-run average costs are nonincreasing and are constant on its eventual cycle $C$. Then~\eqref{eq:mean-residual} yields
\begin{equation}\label{eq:mean-average-comparison}
g_s^{\pi'}=\frac1{|C|}\sum_{z\in C}c_{\pi'_z}\leq\frac1{|C|}\sum_{z\in C}g_z^\pi\leq g_s^\pi\qquad\forall s\in\mcS.
\end{equation}

Suppose that $g_s^{\pi'}=g_s^\pi$. For every state $s'$ on the trajectory under $\pi'$ from $s$, one has $g_{s'}^{\pi'}=g_s^{\pi'}$, so~\eqref{eq:mean-residual} and~\eqref{eq:mean-average-comparison} imply $g_s^\pi=g_{s'}^{\pi'}\leq g_{s'}^\pi\leq g_s^\pi$ and hence $g_{s'}^\pi=g_s^\pi$. By~\eqref{eq:mean-residual}, one has $r^{\pi}_{\pi'_z}\leq0$ for all $z\in C$. Moreover, we have
\[
\sum_{z\in C}r_{\pi'_z}^\pi=\sum_{z\in C}\bigl(c_{\pi'_z}-g_z^\pi+\bfp_{\pi'_z}^T\bfv^\pi-v_z^\pi\bigr)=\sum_{z\in C}c_{\pi'_z}-|C|g_s^\pi
=|C|\bigl(g_s^{\pi'}-g_s^\pi\bigr)=0,
\]
where the first equality follows from~\eqref{eq:mean-bias}, the second follows from $g_z^\pi=g_s^\pi$ on $C$, the third follows from $g_s^{\pi'}=|C|^{-1}\sum_{z\in C}c_{\pi'_z}$, and the last follows from $g_s^{\pi'}=g_s^\pi$. Since every summed $r_{\pi'_z}^\pi$ is nonpositive, one has $r_{\pi'_z}^\pi=0$ for every $z\in C$. Thus $(\bfp_{\pi'_z}^T\mathbf g^\pi-g_z^\pi,r_{\pi'_z}^\pi)=(0,0)$ for every $z\in C$. If $a'_k$ belonged to $C$, these equalities at $z=s_k$ would contradict Step~2 and~\eqref{eq:mean-rule}. Hence $C$ does not contain $a'_k$. Since $\sigma'$ differs from $\sigma$ only at $s_k$, one has $\pi'_{s'}=\sigma'_{s'}=\sigma_{s'}$ for every Player~1 state $s'$ on $C$. The Player~2 actions on $C$ are selected by $\tau'$, so $C$ is also a cycle under $(\sigma,\tau')$. Let $z$ be its first state in the fixed ordering. Then $g_z^{(\sigma,\tau')}=g_z^\pi$ and $v_z^{(\sigma,\tau')}=0$. The lexicographic requirement on $\tau$ in Step~4 of Algorithm~\ref{alg:mean-ssi} and equality of the long-run average costs imply $v_z^\pi\geq v_z^{(\sigma,\tau')}=0$.

Let $s^0=s,\ldots,s^L=z$ be the path under $\pi'$ to $z$. By~\eqref{eq:mean-bias} and~\eqref{eq:mean-residual},
\begin{equation}\label{eq:mean-bias-comparison}
v_s^{\pi'}
=\sum_{j=0}^{L-1}(c_{\pi'_{s^j}}-g_s^\pi)
\leq\sum_{j=0}^{L-1}(v_{s^j}^\pi-v_{s^{j+1}}^\pi)
=v_s^\pi-v_z^\pi\leq v_s^\pi,
\end{equation}
where the first equality follows from \eqref{eq:mean-bias}, $v_z^{\pi'}=0$ and $g_{s^j}^{\pi'}=g_s^{\pi'}=g_s^\pi$ along the path, the first inequality follows from the case for $g_{s^j}^\pi=g_s^\pi$ in~\eqref{eq:mean-residual}, and the last inequality uses $v_z^\pi\geq0$.

If $g_{s_k}^{\pi'}<g_{s_k}^\pi$, the pair at $s_k$ strictly decreases lexicographically.
If $g_{s_k}^{\pi'}=g_{s_k}^\pi$, then $\bfp_{a'_k}^T\mathbf g^\pi=g_{s_k}^\pi$ by~\eqref{eq:mean-residual} and~\eqref{eq:mean-average-comparison}, so~\eqref{eq:mean-rule} implies $r_{a'_k}^\pi<0$ and the first action $a'_k$ on the path from $s_k$ makes the first inequality in~\eqref{eq:mean-bias-comparison} strict, yielding $v_{s_k}^{\pi'}<v_{s_k}^\pi$. Hence we have
\begin{equation}\label{eq:mean-progress}
g_s^{\pi'}\leq g_s^\pi,\qquad
v_s^{\pi'}\leq v_s^\pi\quad\text{whenever }g_s^{\pi'}=g_s^\pi.
\end{equation}
Hence the conclusion follows.
\end{proof}

The following theorem establishes finite termination and optimality.

\begin{theorem}\label{thm:mean-convergence}
For every initial strategy $\sigma_0$, Algorithm~\ref{alg:mean-ssi} is well-defined and terminates after finitely many iterations at an optimal strategy profile in the sense of Definition~\ref{def:mean-optimality}.
\end{theorem}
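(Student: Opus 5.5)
Well-definedness comes directly from Lemma~\ref{lem:mean-counter-computation}: the initialization and every Step~4 call of Algorithm~\ref{alg:mean-ssi} return a counter strategy that satisfies the lexicographic requirement. Step~2 minimizes over a finite set, so it is always defined.

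\textbf{Finite termination.} I will use the pairs $(g_s^{\pi_k},v_s^{\pi_k})_{s\in\mcS}$ as a potential. By Lemma~\ref{lem:mean-progress}, each pair is lexicographically nonincreasing at every state and strictly decreases at $s_k$. Hence the vector of pairs strictly decreases in the product partial order, so no Player~1 strategy $\sigma$ can occur twice. This needs a little care, because a repeated $\sigma$ must give a repeated value vector. The lexicographic maximization in Step~4 fixes $(\mathbf g,\bfv)$ uniquely as a function of $\sigma$; this follows from Lemma~\ref{lem:mean-counter-computation}, since the maximizing value is determined state by state. So if $\sigma_k=\sigma_l$ with $k<l$, then $(\mathbf g^{\pi_k},\bfv^{\pi_k})=(\mathbf g^{\pi_l},\bfv^{\pi_l})$. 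This contradicts the strict decrease at $s_k$ together with monotonicity from $k+1$ to $l$. There are finitely many strategies of Player~1, so the algorithm terminates.

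\textbf{Optimality at termination.} Let $\pi=(\sigma,\tau)$ be the terminal profile. The termination test together with the lexicographic minimality in \eqref{eq:mean-rule} gives, for every $s\in\mcS^1$ and $a\in\mcA_s$, $\bfp_a^T\mathbf g^\pi\geq g_s^\pi$, and $r_a^\pi\geq0$ whenever equality holds. For $s\in\mcS^2$, Lemma~\ref{lem:mean-counter} gives the reversed inequalities. The right inequality of Definition~\ref{def:mean-optimality} follows directly from Step~1 of Algorithm~\ref{alg:mean-counter-computation}: $\tau$ maximizes $g_s^{(\sigma,\tau')}$ over all $\tau'$. For the left inequality, fix any $\sigma'$ and set $\pi'=(\sigma',\tau)$. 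I will reuse the argument of \eqref{eq:mean-residual}--\eqref{eq:mean-average-comparison} with the inequalities reversed. Along a $\pi'$-trajectory, Player~1 actions satisfy $g_{s'}^\pi\geq g_s^\pi$. Player~2 actions are those of $\tau$, so $g$ is constant on them. Thus $g^\pi$ is nondecreasing along the trajectory and constant on the eventual cycle $C$. On $C$, every action has $\bfp_a^T\mathbf g^\pi=g_z^\pi$, so every reduced cost $r^\pi$ is $\geq0$. Summing \eqref{eq:mean-bias}-type identities around $C$ gives $g^{\pi'}_s=|C|^{-1}\sum_{z\in C}c_{\pi'_z}\geq g_C^\pi\geq g_s^\pi$.

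\textbf{Main obstacle.} The delicate step is uniqueness of $(\mathbf g,\bfv)$ given $\sigma$, which the no-cycling argument needs. I will argue it from Lemma~\ref{lem:mean-counter-computation}: the output values equal the quantities $g_s$ and $v_s$ of Steps~1--2 of Algorithm~\ref{alg:mean-counter-computation}. These are defined purely from $\sigma$ and the fixed ordering, and the lemma's proof shows that they dominate every counter strategy lexicographically. So any Step~4 output, whatever tie-breaking is used, attains these same values.
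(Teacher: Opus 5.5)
Your proposal is correct and follows the paper's proof: no recurrence via the monotone pairs of Lemma~\ref{lem:mean-progress}, the Player~2 side from the maximality of $\mathbf g$, and the Player~1 side by summing nonnegative reduced costs around the eventual cycle. The one difference is that you rule out recurrence of $\sigma$, which needs your observation that $(\mathbf g,\bfv)$ depends only on $\sigma$, whereas the paper rules out recurrence of the profile $\pi_k$, which needs only that $(\mathbf g^\pi,\bfv^\pi)$ is a function of $\pi$. So your ``main obstacle'' can be avoided; note also that your ``left'' and ``right'' labels for the two inequalities in Definition~\ref{def:mean-optimality} are swapped.
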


\begin{proof}
Lemma~\ref{lem:mean-counter-computation} ensures that the algorithm is well-defined. By Lemma~\ref{lem:mean-progress}, every state's pair is lexicographically nonincreasing, with a strict decrease at the entering state. Therefore no profile recurs. Since the set of strategy profiles is finite, the algorithm terminates after finitely many iterations.

Let $\pi=(\sigma,\tau)$ be the returned profile. The stopping condition implies
\[
\bfp_a^T\mathbf g^\pi\geq g_x^\pi,\qquad
r_a^\pi\geq0\quad\text{whenever }\bfp_a^T\mathbf g^\pi=g_x^\pi,
\]
for every $x\in\mcS^1$ and $a\in\mcA_x$. Fix any strategy $\bar\sigma$ of Player~1 and let $\bar\pi=(\bar\sigma,\tau)$. For every $x\in\mcS^2$, one has $\bar\pi_x=\pi_x$, so $\bfp_{\bar\pi_x}^T\mathbf g^\pi=g_x^\pi$ and $r_{\bar\pi_x}^\pi=0$ by the definition of long-run average cost and~\eqref{eq:mean-bias}, respectively. Thus, along every trajectory $x_0,x_1,\ldots$ under $\bar\pi$, one has $g_{x_{t+1}}^\pi\geq g_{x_t}^\pi$ for every $t\geq0$. On the eventual cycle $C$ of the trajectory from $x$, the values $g_z^\pi$ are constant and satisfy $g_z^\pi\geq g_x^\pi$ for every $z\in C$. Consequently,
\[
g_x^{\bar\pi}=\frac1{|C|}\sum_{z\in C}c_{\bar\pi_z}\geq\frac1{|C|}\sum_{z\in C}g_z^\pi\geq g_x^\pi,
\]
where the equality follows from the definition of long-run average cost, the first inequality follows from~\eqref{eq:mean-bias} by summing $r_{\bar\pi_z}^\pi\geq0$ over $z\in C$ and canceling the bias terms around $C$, and the last inequality follows from $g_z^\pi\geq g_x^\pi$ for every $z\in C$.

Since $\tau=\pi^2(\sigma)$ also maximizes the long-run average costs, for every counter strategy $\bar\tau$ we obtain
\[
g_x^{(\sigma,\bar\tau)}\leq g_x^{(\sigma,\tau)}\leq g_x^{(\bar\sigma,\tau)}
\qquad\forall x\in\mcS.
\]
Thus $\pi$ is optimal in the sense of Definition~\ref{def:mean-optimality}.
\end{proof}

\subsection{An exponential lower bound with unrestricted costs}\label{subsec:mean-unrestricted}

We construct a counter example based on the construction of $\mcG_h$ in Subsection~\ref{subsec:construction}. In particular, replace the cost $W_i/\gamma$ on each auxiliary action to the next level by $W_i$, and the cost $-K_i/\gamma$ on each return action by $-K_i$. Retain all zero costs, the terminal cost $-R$, and the parameters
\begin{equation}\label{eq:mean-parameters}
R=1,\qquad W_i=4^{\,i-h-1},\qquad K_i=2^{\,2-i-2h}.
\end{equation}
Denote the resulting mean-payoff game by $\widehat{\mcG}_h$. The idea is unchanged: the costs $W_i$ determine the successor ordering, while the costs $-K_i$ determine the bias-pivot order. The negative return cycles are canceled by Player~2 because reaching the zero-cost sink has a larger long-run average cost.

\begin{theorem}\label{thm:mean-lower}
For every integer $h\geq2$, suppose we run Algorithm~\ref{alg:mean-ssi} initialized by \eqref{eq:initial} on the game $\widehat{\mcG}_h$ defined in Subsection~\ref{subsec:mean-unrestricted}. Then the following statements hold.
\begin{enumerate}[label=(\roman*)]
\item The game $\widehat{\mcG}_h$ has $n=4h+2$ states and $m=8h+2$ actions. All costs are rational and finite. Each cost has $O(h)$ binary encoding length.
\item Player~2 has a unique counter strategy satisfying the requirement in Step~4 of Algorithm~\ref{alg:mean-ssi} for every strategy of Player~1. Algorithm~\ref{alg:mean-ssi}, initialized by \eqref{eq:initial}, has a unique Dantzig pivot at every nonterminal iterate and performs exactly $2^h-1$ iterations.
\end{enumerate}
\end{theorem}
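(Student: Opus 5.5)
The plan is to reduce $\widehat{\mcG}_h$ to the discounted analysis of Subsection~\ref{subsec:comparisons} in the formal limit $\beta=\eta=1$. Statement~(i) is immediate from the construction and \eqref{eq:mean-parameters}, exactly as in Theorem~\ref{thm:real}(i).

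\textbf{Step 1: every average cost is zero.} Fix any strategy $\sigma$ of Player~1. Under any profile, the possible cycles are the terminal self-loop at $a_{h+1}$, with mean $0$, and the two-action return cycles $a_i\to s\to a_i$ with $s\in\{p_i,q_i\}$, with mean $-K_i/2<0$. At the auxiliary state $s$ selected by $\sigma_{a_i}$, Player~2 can always move to the next level. I would argue by backward induction on levels, as in Lemma~\ref{lem:selected-actions}, that from level $i+1$ every trajectory reaches $a_{h+1}$. Hence this move yields average cost $0$, while the return action yields $-K_i/2$. Because Step~4 maximizes $g$ first, we get $\tau_{p_i}=(p_i,a_{i+1})$ if $\sigma_{a_i}=(a_i,p_i)$, and $\tau_{q_i}=(q_i,b_{i+1})$ otherwise. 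Consequently $\mathbf g^{\pi}=\bfzero$ for every profile $\pi=(\sigma,\pi^2(\sigma))$. The only cycle of $G^\pi$ is then the self-loop at $a_{h+1}$, so $v^\pi_{a_{h+1}}=0$ regardless of the fixed ordering. By \eqref{eq:mean-bias}, each bias $v_s^\pi$ is the undiscounted total cost of the path from $s$ to $a_{h+1}$.

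\textbf{Step 2: the discounted recursions hold with $\beta=\eta=1$.} Since $\mathbf g\equiv\bfzero$, the first component $\bfp_a^T\mathbf g^{\pi_k}-g_s^{\pi_k}$ in \eqref{eq:mean-rule} vanishes for every action. Therefore \eqref{eq:mean-rule} reduces to Dantzig's rule on the bias reduced costs $r_a^\pi=c_a+v_t^\pi-v_s^\pi$. Among counter strategies with $\mathbf g=\bfzero$, Step~4 maximizes the biases, which is the undiscounted analogue of Definition~\ref{def:response}. The recursions \eqref{eq:recursive1}, \eqref{eq:recursive2} and \eqref{eq:difference} therefore hold verbatim with $\beta=\eta=1$ and the undivided costs $W_i$, $K_i$, $R$. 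I would then re-check the proofs of Lemmas~\ref{lem:values}--\ref{lem:counter} with these values. The conditions $K_i\leq2W_i$, \eqref{eq:separation-b}, and $R\geq4W_h$ still hold, while the terms $(1-\beta)R$ and $(\eta-\beta)R$ become exactly zero. This only simplifies those proofs: the bound $\beta>8/9$ was used only to absorb $\beta$ factors, and now $\beta=1$.

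\textbf{Step 3: the reduced costs and the count.} With $\beta=1$, \eqref{eq:k-margin} becomes $K_i+(1-\beta)v_{a_i}^\pi=K_i$. For the unselected action at $a_i$, \eqref{eq:actual-reduced} then reads
\[
r_{(a_i,s)}^\pi=-K_i\quad\text{if }\Delta_i^\pi>0,\qquad r_{(a_i,s)}^\pi=|\Delta_{i+1}^\pi|\quad\text{if }\Delta_i^\pi<0.
\]
Since the $K_i$ are strictly decreasing, the pivot is unique and located at \eqref{eq:pivot-index}. Uniqueness of the counter strategy follows as in Lemma~\ref{lem:counter strategy}: the relevant bias differences are nonzero, namely $\mp\Delta_{i+1}^\pi+K_i$ at the unselected auxiliary state and $\Delta_{i+1}^\pi\neq0$ at $b_i$. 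Lemma~\ref{lem:counter} then gives $C_k=2^h-1-k$. At $k=2^h-1$ every Player~1 reduced cost is nonnegative and every first component is zero, so Step~2 terminates, giving exactly $2^h-1$ iterations.

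\textbf{Main obstacle.} I expect the main difficulty to be Step~1 together with the lexicographic uniqueness. One must confirm that no counter strategy attaining $\mathbf g=\bfzero$ can use a return action at the selected auxiliary state. One must also confirm that, at the unselected auxiliary state, the bias comparison $-K_i+v_{a_i}$ versus $W_i+v_{\text{next}}$ is strict in the same direction as in Lemma~\ref{lem:counter strategy}, using $|\Delta_{i+1}|>3W_i\geq\tfrac32K_i$. I would handle both cases by direct computation at $\beta=1$ rather than by quoting the discounted lemmas.
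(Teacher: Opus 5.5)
Your proposal is correct and follows essentially the same route as the paper: show $\mathbf g^\pi=\bfzero$ because every return cycle has negative mean while the self-loop at $a_{h+1}$ has mean zero, identify the normalized biases with undiscounted path costs to $a_{h+1}$, and rerun the comparisons of Lemmas~\ref{lem:signs}--\ref{lem:counter} with $\beta=\eta=1$, so that the $(1-\beta)$ and $(\eta-\beta)$ error terms vanish. You also make explicit several details that the paper covers only as ``arguments similar to'': the vanishing first component of \eqref{eq:mean-rule}, the strict comparison $K_i\mp\Delta_{i+1}^\pi\neq0$ at the unselected auxiliary state, and the termination check at $C_k=0$.
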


\begin{proof}
The first statement follows directly from the construction of $\widehat{\mcG}_h$ and~\eqref{eq:mean-parameters}.

For statement~(ii), the key difference is that every cycle other than the zero-cost self-loop at $a_{h+1}$ has negative long-run average cost. Thus, for every $\pi=(\sigma,\pi^2(\sigma))$, every trajectory reaches $a_{h+1}$, so $\mathbf g^\pi=\bfzero$ and the normalized biases are the total path costs. With these biases replacing the discounted values and the discount factors omitted, arguments similar to those in the proofs of Lemmas~\ref{lem:signs}--\ref{lem:counter} yield the same strict inequalities between reduced costs and the same pivot sequence under~\eqref{eq:mean-parameters}. Statement~(ii) therefore follows as in the proof of Theorem~\ref{thm:real}(ii).
\end{proof}

Theorem~\ref{thm:mean-lower} immediately implies the following iteration lower bound for mean-payoff TBDGs.

\begin{corollary}\label{cor:mean-lower}
Algorithm~\ref{alg:mean-ssi} has a worst-case iteration lower bound of $2^{\Omega(n)}$ on mean-payoff TBDGs with unrestricted costs. In particular, it is not strongly polynomial on this class.
\end{corollary}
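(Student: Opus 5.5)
The corollary follows directly from Theorem~\ref{thm:mean-lower}, in the same way Corollary~\ref{cor:real-lower} follows from Theorem~\ref{thm:real}. I would first fix $h\geq2$ and consider the family $\widehat{\mcG}_h$. By Theorem~\ref{thm:mean-lower}(i), this game has $n=4h+2$ states and $m=8h+2$ actions. By Theorem~\ref{thm:mean-lower}(ii), Algorithm~\ref{alg:mean-ssi} initialized by \eqref{eq:initial} performs exactly $2^h-1$ iterations. Moreover, the Step~4 counter strategy is unique and every nonterminal iterate has a unique Dantzig pivot. Hence this count does not depend on tie-breaking, so it is a genuine worst-case lower bound for the algorithm rather than a bound for one particular implementation.

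Next I would express the count in terms of $n$. Since $h=(n-2)/4$, the iteration count is
\[
2^h-1=2^{(n-2)/4}-1=2^{\Omega(n)}.
\]
This holds for every even $n\geq10$ with $n\equiv2\pmod 4$. For other values of $n$, I would pad the instance with isolated Player~2 states carrying zero-cost self-loops. Such states do not interact with the rest of the game, so they change neither the pivots nor the counter strategies, and the asymptotic statement survives.

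For the claim that the algorithm is not strongly polynomial, I would observe that a strongly polynomial bound must bound the number of arithmetic operations, and hence the number of iterations, by a polynomial in $n$ and $m$ alone. Here $m=\Theta(n)$, so such a bound would be polynomial in $n$. The count $2^h-1$ eventually exceeds every polynomial in $n=4h+2$, which rules this out. Theorem~\ref{thm:mean-lower}(i) also guarantees that the costs are rational with $O(h)$-bit encodings, so the instances are legitimate inputs of polynomial size.

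I do not expect a real obstacle, since all the substance is in Theorem~\ref{thm:mean-lower}. The only point needing care is the uniqueness of pivots and counter strategies, which makes the lower bound hold for every tie-breaking choice.
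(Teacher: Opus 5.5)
Your proposal is correct and follows the paper, which simply notes that Theorem~\ref{thm:mean-lower} immediately implies the corollary: with $n=4h+2$, the exact count $2^h-1=2^{(n-2)/4}-1$ is $2^{\Omega(n)}$, and because the Step~4 counter strategies and the pivots are unique, this count does not depend on tie-breaking. Your padding argument for $n\not\equiv2\pmod 4$ is harmless but not needed, since the infinite family $\widehat{\mcG}_h$ already gives the asymptotic bound.
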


For $\widehat{\mcG}_h$, the one-player method in~\cite[Theorem~7.1]{zhongye2026}, applied with $\sigma$ fixed, computes the counter strategy required in Step~4 of Algorithm~\ref{alg:mean-ssi}. Its terminal profile $\pi$ satisfies $\mathbf g^\pi=\bfzero$ and $r_a^\pi\leq0$ for every $a\in\mcA^2$. Under any counter strategy against $\sigma$, every trajectory with long-run average cost zero reaches $a_{h+1}$, where the bias is zero. Summing~\eqref{eq:mean-bias} along such a trajectory therefore shows that the returned counter strategy also maximizes the normalized bias. Thus the strongly polynomial bound in~\cite[Theorem~7.1]{zhongye2026} applies to each counter strategy computation.

\subsection{A strongly polynomial bound with unit weights}\label{subsec:mean-unit}
The restriction to unit weights provides additional structure for mean-payoff TBDGs. The long-run average cost of a trajectory depends only on its eventual cycle. With unit weights, it is a rational number in $[-1,1]$ with denominator at most $n$, so distinct long-run average costs differ by at least $1/n^2$. The normalized biases also have denominators at most $n$ and absolute values at most $2n$. These properties allow a discount with $1/(1-\gamma)$ polynomial in $n$ to preserve optimality for the long-run average cost, and bound the number of distinct pairs of long-run average costs and normalized biases. We use them to establish strongly polynomial bounds for Algorithm~\ref{alg:ssi} with discounted approximation and for Algorithm~\ref{alg:mean-ssi}, respectively.

We first consider discounted approximation. This approach uses the classical reduction in~\cite{zwickpaterson1996}, which is also discussed for strategy iteration in~\cite[Section~9]{friedmann2011}. Choose
\begin{equation}\label{eq:mean-reduction-discount}
\gamma=1-1/(8n^3).
\end{equation}

\begin{proposition}\label{prop:mean-reduction}
Consider a mean-payoff TBDG with unit weights. Then the following statements hold.
\begin{enumerate}[label=(\roman*)]
\item Every optimal strategy profile of the corresponding discounted TBDG with the discount in~\eqref{eq:mean-reduction-discount} is optimal for the mean-payoff TBDG.
\item Algorithm~\ref{alg:ssi}, applied to this discounted TBDG, returns an optimal strategy profile for the mean-payoff TBDG within $O(mn^4\log(2n))$ iterations. It has a strongly polynomial implementation, including the counter strategy computations.
\end{enumerate}
\end{proposition}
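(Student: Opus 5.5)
For statement~(i), I would follow the Zwick--Paterson reduction. Fix an optimal discounted profile $\pi^*=(\sigma^*,\tau^*)$ with value $\bfv^*$. The goal is to show that for every Player~2 strategy $\tau$ and every state $s$ one has $g_s^{(\sigma^*,\tau)}\le g_s^{\pi^*}$, and symmetrically for Player~1.

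The key comparison is between discounted values and long-run averages on profiles with unit weights. Every trajectory under a profile is a path of length at most $n-1$ followed by a cycle of length $\ell\le n$ with mean $g=c(C)/\ell$. Write the path and cycle explicitly in \eqref{eq:value} and compare $(1-\gamma)v_s^\pi$ with $g_s^\pi$. The tail contributes exactly $\gamma^{j}(1-\gamma)c(C)/(1-\gamma^\ell)$, which differs from $\gamma^j g$ by $O((1-\gamma)\ell)$. The transient part contributes at most $(1-\gamma)(n-1)$. Together these give
\[
|(1-\gamma)v_s^\pi-g_s^\pi|\le c\,n(1-\gamma)
\]
for a small constant $c$, for instance $c\le 2$. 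With $\gamma=1-1/(8n^3)$ the right side is at most $1/(4n^2)$.

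Since distinct averages are rationals with denominators at most $n$, they differ by at least $1/n^2$. Now suppose $g_s^{(\sigma^*,\tau)}>g_s^{\pi^*}$. Then the gap is at least $1/n^2$, so $(1-\gamma)v_s^{(\sigma^*,\tau)}>(1-\gamma)v_s^{\pi^*}$. This contradicts the optimality of $\tau^*$ as a counter strategy in Definition~\ref{def:response}. The Player~1 side is symmetric. I expect this part to be routine, apart from tracking the constants.

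For statement~(ii), termination at an optimal discounted profile follows from \cite{jiawenye2020}, which gives an iteration bound of the form $O\bigl(\frac{m}{1-\gamma}\log\frac{1}{1-\gamma}\bigr)$ for simple strategy iteration with Dantzig's rule. Substituting $1/(1-\gamma)=8n^3$ gives $O(mn^3\log n)$, which is comfortably within the claimed $O(mn^4\log(2n))$.

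The main obstacle is to verify that the cited bound's hypotheses and normalization apply, namely the $\bfb=(1-\gamma)\bfe$ scaling and the factor $n$ appearing in the linear-programming geometric-decrease argument. If the constant from \cite{jiawenye2020} carries an extra factor $n$, the total becomes exactly $O(mn^4\log n)$. I would write out that linear-programming argument in the paper's notation:
\begin{itemize}
\item the Dantzig step reduces the gap $\bfb^T\bfv^{\pi_k}-\nu^*$ by a factor of $1-(1-\gamma)/(nm)$, using Proposition~\ref{prop:reduced-optimality};
\item after $O(\frac{nm}{1-\gamma}\log\frac{n}{1-\gamma})$ iterations, some non-optimal action is permanently eliminated;
\item summing over the $m$ actions yields the stated bound.
\end{itemize}

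For strong polynomiality, each counter-strategy computation is a deterministic discounted MDP for Player~2 with $\sigma$ fixed. The Post--Ye bound \cite{postye2015}, which is independent of $\gamma$, makes this computation strongly polynomial. Values solve linear systems $(I-\gamma P^\pi)^{T}\bfv=\bfc^\pi$ whose entries have polynomial encoding length, because $\gamma$ has $O(\log n)$ bits. Hence all arithmetic stays polynomial-size, and the total running time is strongly polynomial.
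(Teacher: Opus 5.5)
Your part~(i) is sound and follows the paper's route. The paper applies the Zwick--Paterson approximation bound to the game and to its one-player restrictions. Applying the comparison profile by profile, as you do, suffices because both optimality notions quantify only over positional strategies. A minor point: the tail is not exactly $\gamma^j(1-\gamma)c(C)/(1-\gamma^\ell)$, since the cycle costs carry different powers of $\gamma$. The discrepancy is $O((\ell-1)(1-\gamma))$, so it fits inside your constant.

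The problem is the iteration count in part~(ii). The form $O(\frac{m}{1-\gamma}\log\frac1{1-\gamma})$ that you attribute to \cite{jiawenye2020} is the form of the known bounds for all-switch (Howard-type) iteration. It is not what a one-switch-per-iteration argument yields, which is only a contraction of $1-(1-\gamma)/n$. The resulting bound $O(\frac{nm}{1-\gamma}\log\frac{n}{1-\gamma})$ is exactly what produces $O(mn^4\log(2n))$ at $1/(1-\gamma)=8n^3$.

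Your fallback sketch does not reach the claim either. With contraction $1-(1-\gamma)/(nm)$, an elimination phase takes $O(\frac{nm}{1-\gamma}\log\frac{n}{1-\gamma})$ iterations. Summing over $m$ actions then gives $O(m^2n^4\log n)$, not the stated bound.

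The correct constants come from the occupation vectors $x^\pi=(I-\gamma P^\pi)^{-1}\bfb$. These satisfy $\bfe^Tx^\pi=n$, $x^\pi\geq(1-\gamma)\bfe$, and $\bfb^T\bfv^{\pi'}-\bfb^T\bfv^\pi=\sum_s x^{\pi'}_s r^\pi_{\pi'_s}$. Set $\Delta_k=-\min_{a\in\mcA^1}r_a^{\pi_k}$ and $\nu^k=\bfb^T\bfv^{\pi_k}$.
\begin{itemize}
\item Comparing $(\pi^1(\tau_k),\tau_k)$ with $\pi_k$ and using $\nu^2\leq\nu^*$ gives $\nu^k-\nu^*\leq n\Delta_k$.
\item Comparing $\pi_{k+1}$ with $\pi_k$ gives $\nu^{k+1}\leq\nu^k-(1-\gamma)\Delta_k$. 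This step needs $r_a^{\pi_k}\leq0$ for all $a\in\mcA^2$, from Proposition~\ref{prop:reduced-optimality}. Comparing with $(\sigma_{k+1},\tau_k)$ instead goes the wrong way, because Player~2's re-optimization raises the value.
\item For Ye's elimination step, compare $(\sigma_k,\tau^*)$ with $\pi^*$. This gives $\nu^k-\nu^*\geq(1-\gamma)r^{\pi^*}_a$ for every Player~1 action $a$ selected by $\sigma_k$.
\end{itemize}
Hence one action is removed permanently every $O(\frac{n}{1-\gamma}\log\frac{n}{1-\gamma})$ iterations, for a total of $O(\frac{nm}{1-\gamma}\log\frac{n}{1-\gamma})=O(mn^4\log(2n))$.

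The rest of your part~(ii) matches the paper. This covers termination at a discounted optimum, strongly polynomial counter strategies via \cite{postye2015}, and polynomial encoding lengths because $\gamma$ has $O(\log n)$ bits.
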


\begin{proof}
The first statement follows by applying the approximation bound in~\cite[Theorem~5.2]{zwickpaterson1996} to the game and to the one-player games obtained by fixing either player's strategy, with the discount in~\eqref{eq:mean-reduction-discount}. The second statement follows from statement~(i), the iteration bound in~\cite[Theorem~4.2]{jiawenye2020}, and the strongly polynomial counter strategy computation in~\cite{postye2015}.
\end{proof}

We next apply Algorithm~\ref{alg:mean-ssi} directly to the mean-payoff TBDG. The bounds on the long-run average costs and normalized biases limit the number of pairs that can occur at each state. Together with Lemma~\ref{lem:mean-progress}, this yields the following result.

\begin{theorem}\label{thm:mean-unit}
For a mean-payoff TBDG with unit weights, Algorithm~\ref{alg:mean-ssi} returns an optimal strategy profile within $O(n^5)$ iterations and has a strongly polynomial implementation, including the counter strategy computations.
\end{theorem}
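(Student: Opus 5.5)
The plan is a potential-function argument built on Lemma~\ref{lem:mean-progress}. Correctness at termination is already given by Theorem~\ref{thm:mean-convergence}, and the per-iteration cost is strongly polynomial by Lemma~\ref{lem:mean-counter-computation}. What remains is to bound the number of iterations by $O(n^5)$.

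The first step is to count the values each state's pair $(g_s^{\pi_k},v_s^{\pi_k})$ can take.
\begin{itemize}
\item The long-run average cost $g_s^\pi$ is the mean cost of the eventual cycle. It has the form $p/\ell$ with $1\leq\ell\leq n$ and $p$ an integer with $|p|\leq\ell$. So there are at most $O(n^3)$ possible values, and in fact $O(n^2)$ distinct rationals in $[-1,1]$ with denominator at most $n$.
\item The bias satisfies $v_s^\pi=\sum_{j<L}(c_{\pi_{s^j}}-g_s^\pi)$, where $L\leq n$ is the length of the path from $s$ to the normalized first state of the cycle.
\item Hence, for a fixed value $g=p/\ell$, the bias equals an integer in $[-n,n]$ minus $Lp/\ell$.
\item Multiplying by $\ell$, $\ell v_s$ is an integer with $|\ell v_s|\leq 2n\ell$. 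So for each fixed $g$ there are $O(n^2)$ possible biases.
\end{itemize}
Counting over $(\ell,p)$ pairs gives $O(n^2)$ values of $g$ times $O(n^2)$ biases, so each state's pair takes at most $O(n^4)$ values.

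The second step uses the lexicographic order within each state. By Lemma~\ref{lem:mean-progress}, every state's pair is lexicographically nonincreasing along the iterations, and the pair at $s_k$ strictly decreases at iteration $k$. Therefore each state can be the entering state $s_k$ at most $O(n^4)$ times, since each such occurrence moves its pair strictly down a chain of length $O(n^4)$. Summing over the at most $n$ Player~1 states bounds the total number of iterations by $O(n^5)$.

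The third step is the per-iteration cost.
\begin{itemize}
\item Step~2 of Algorithm~\ref{alg:mean-ssi} evaluates \eqref{eq:mean-rule} over $m$ actions, using the $\mathbf g$ and $\bfv$ already produced by Algorithm~\ref{alg:mean-counter-computation} (Remark~\ref{rem:mean-counter-computation}(iv)).
\item Step~4 costs $O(nm)$ operations by Lemma~\ref{lem:mean-counter-computation}.
\item All numbers are rationals with denominators at most $n$ and numerators bounded polynomially, so encoding lengths stay polynomial.
\end{itemize}
Altogether this gives an $O(n^6m)$ strongly polynomial implementation.

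The step needing the most care is the bias count. One must check that the normalization $v_{s_C}=0$ at the first state of the eventual cycle really gives the path-sum formula in \eqref{eq:mean-counter-candidate} for an arbitrary profile $\pi_k$, and not just the one produced by Algorithm~\ref{alg:mean-counter-computation}. Since \eqref{eq:mean-bias} with this normalization defines $\bfv^\pi$ uniquely, telescoping along the path gives the formula for any profile, which yields the denominator bound $\ell\leq n$ and the magnitude bound $2n$.
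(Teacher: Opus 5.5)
Your proposal is correct and follows essentially the paper's argument. You bound the number of attainable pairs $(g_s^\pi,v_s^\pi)$ at each state by $O(n^4)$, and then combine the lexicographic monotonicity and strict decrease at $s_k$ from Lemma~\ref{lem:mean-progress} with Theorem~\ref{thm:mean-convergence} and the per-iteration cost from Lemma~\ref{lem:mean-counter-computation}. The only cosmetic difference is in the count: the paper counts triples $(q,p,z)$ directly via $\sum_{q=1}^n(2q+1)(4nq+1)$, whereas you count distinct gains times biases per gain, which is valid once you take $\ell$ to be the reduced denominator of $g$.
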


\begin{proof}
Fix a strategy profile $\pi=(\sigma,\tau)$ and a state $s\in\mcS$. Let $C$ be the cycle eventually reached by the trajectory from $s$ under $\pi$, and let $q=|C|$ be its length. Let $s_C$ be the first state of $C$ in the fixed ordering of $\mcS$ used to normalize the bias, so $v_{s_C}^\pi=0$. Write $s_0=s,\ldots,s_L=s_C$ for the trajectory stopped at its first visit to $s_C$, where $L$ is the number of actions taken. These states are distinct, so $L\leq n-1$, and $1\leq q\leq n$. By the definition of long-run average cost and~\eqref{eq:mean-bias}, we have
\begin{equation}\label{eq:mean-cycle-bias}
g_s^\pi=\frac1q\sum_{x\in C}c_{\pi_x},\qquad
v_s^\pi=\sum_{j=0}^{L-1}(c_{\pi_{s_j}}-g_s^\pi),
\end{equation}
Define the integers $p=\sum_{x\in C}c_{\pi_x}$ and $z=q\sum_{j=0}^{L-1}c_{\pi_{s_j}}-Lp$. Then~\eqref{eq:mean-cycle-bias} implies $g_s^\pi=p/q$ and $v_s^\pi=z/q$. Since $|c_a|\leq1$ for every action $a$, we have
\[
|p|\leq q,\qquad |v_s^\pi|\leq2n,\qquad |z|\leq2nq.
\]
For each fixed $q$, there are at most $2q+1$ choices of $p$ and $4nq+1$ choices of $z$. Hence, for each state $s$, the number of possible pairs $(g_s^\pi,v_s^\pi)$ as $\pi$ varies is at most
\[
\sum_{q=1}^n(2q+1)(4nq+1)=O(n^4).
\]
For the profiles $\pi_k$ generated by Algorithm~\ref{alg:mean-ssi}, Lemma~\ref{lem:mean-progress} and~\eqref{eq:mean-progress} imply that $(g_s^{\pi_k},v_s^{\pi_k})$ never increases lexicographically at any state $s$, and strictly decreases at the state $s_k$ selected in Step~2. Thus each state can have at most $O(n^4)$ strict decreases. Since at least one state has a strict decrease in every iteration, the total number of iterations is $O(n^5)$. The returned profile is optimal by Theorem~\ref{thm:mean-convergence}.

The counter strategy computation is strongly polynomial by Lemma~\ref{lem:mean-counter-computation}. For a fixed profile, evaluating each cycle once and then propagating the values backward along the paths leading to it using~\eqref{eq:mean-cycle-bias} computes $\mathbf g^\pi$ and $\bfv^\pi$ in $O(n)$ arithmetic operations. Computing the reduced costs in~\eqref{eq:mean-bias} and selecting an action according to~\eqref{eq:mean-rule} require $O(m)$ arithmetic operations. The bounds on $p,q,z$ imply that the long-run average costs and normalized biases have $O(\log(2n))$ binary encoding length. Hence the implementation is strongly polynomial.
\end{proof}
Both approaches are therefore strongly polynomial with unit weights. Proposition~\ref{prop:mean-reduction} uses Algorithm~\ref{alg:ssi} at a fixed approximating discount, while Theorem~\ref{thm:mean-unit} bounds the iterations of Algorithm~\ref{alg:mean-ssi} directly. Strongly polynomial solvability with unit weights also follows from classical pseudopolynomial algorithms (Zwick and Paterson~\cite{zwickpaterson1996}, Brim et al.~\cite{brimetal2011}). For unit weights, the construction in Section~\ref{sec:ternary} does not preserve its strict cost separations at discount one, since $F(1)=0$.

\section{Conclusions}\label{sec:conclusion}

In this paper, we studied simple strategy iteration with Dantzig's rule and exact optimal counter strategies for discounted TBDGs. We established iteration lower bounds of $2^{\Omega(n)}$ with unrestricted costs and $2^{\Omega(n^{1/5})}$ with unit weights. The pivots and optimal counter strategies are unique in both constructions. These results show that the method is not strongly polynomial even when all action costs belong to $\{-1,0,1\}$. For TBDGs with long-run average costs, we established an exponential iteration lower bound for termination of a Dantzig counterpart with unrestricted costs. With unit weights, this counterpart admits an $O(n^5)$ iteration bound. We also showed that simple strategy iteration combined with discounted approximation has a strongly polynomial implementation for mean-payoff TBDGs with unit weights. It remains of interest to improve the lower bound for unit weights and to determine whether other pivot rules or update schemes admit strongly polynomial bounds for general discounted TBDGs.

\end{document}